\pgfplotsset{compat=newest}
\journal{Applied and Computational Harmonic Analysis}
\newcommand{\versionColor}{true}
\newcommand{\pdfFigs}{true}
\newcommand{\generateFigs}{no}
	\renewcommand{\pdfFigs}{false}
\begin{document}

\begin{frontmatter}



\title{Vandermonde Matrices with Nodes in the Unit Disk \\ and the Large Sieve}


\author{C\'eline Aubel and Helmut B\"olcskei}

\address{Dept.~IT \& EE, ETH Zurich, Switzerland \vspace{-0.7cm}}

\begin{abstract}

We derive bounds on the extremal singular values and the condition number of $\nbSamples \times \nbNodes$, with $\nbSamples \geq \nbNodes$, Vandermonde matrices with nodes in the unit disk. 
The mathematical techniques we develop to prove our main results are inspired by a link---first established by \citet{Selberg1991} and later extended by \citet{Moitra2015}---between the extremal singular values of Vandermonde matrices with nodes on the unit circle and large sieve inequalities. Our main conceptual contribution lies in establishing a connection between the extremal singular values of Vandermonde matrices with nodes in the unit disk and a novel large sieve inequality involving polynomials in $z \in \C$ with $\abs{z} \leq 1$. Compared to Baz\'an's upper bound on the condition number \cite{Bazan2000}, which, to the best of our knowledge, constitutes the only analytical result---available in the literature---on the condition number of Vandermonde matrices with nodes in the unit disk, our bound not only takes a much simpler form, but is also sharper for certain node configurations. Moreover, the bound we obtain can be evaluated consistently in a numerically stable fashion, whereas the evaluation of Baz\'an's bound requires the solution of a linear system of equations which has the same condition number as the Vandermonde matrix under consideration and can therefore lead to numerical instability in practice. As a byproduct, our result---when particularized to the case of nodes on the unit circle---slightly improves upon the Selberg--Moitra bound.

\end{abstract}

\begin{keyword}
	Vandermonde matrices\sep extremal singular values\sep condition number\sep unit disk \sep large sieve\sep Hilbert's inequality
	
	\vspace{0.1cm}
	
	\MSC[2010] 15A12\sep 65F35
\end{keyword}

%
%
%

\end{frontmatter}



\section{Introduction}

Vandermonde matrices arise in many fields of applied mathematics and engineering such as interpolation and approximation theory~\citep{Bjorck1973, Heinig1995}, differential equations~\citep{Luther2004}, control theory~\citep{Pantelous2013}, sampling theory~\citep{Groechenig1999, Vetterli2002, Feng1996, Mishali2009}, subspace methods for parameter estimation~\citep{Schmidt1986, Roy1986, Hua1990, Liao2014, Moitra2015, Potts2017}, line spectral estimation~\citep{Tang2013}, and fast evaluation of linear combinations of radial basis functions using the fast Fourier transform for non-equispaced knots~\citep{Potts2006, Kunis2006}. 

It is well known that the condition number of real square Vandermonde matrices grows exponentially in the dimension of the matrix~\citep{Gautschi1988, Beckermann2000}. Complex Vandermonde matrices, on the other hand, can be well-conditioned depending on the locations of the nodes in the complex plane. 
There exists significant literature on the condition number of Vandermonde matrices with nodes on the unit circle. Specifically, it is shown in \citep{Gautschi1990, Berman2007, Ferreira99, Bazan2000, Liao2014, Moitra2014} that $\nbSamples \times \nbNodes$, with $\nbSamples \geq \nbNodes$, Vandermonde matrices with nodes $e^{2\pi i\xi_k}$, where $\xi_k \in [0, 1)$, for $k \in \{1, 2, \ldots, \nbNodes\}$, are well-conditioned provided that the minimum wrap-around distance between the node frequencies $\xi_k$ is large enough. 
On the other hand, the literature on $\nbSamples \times \nbNodes$, with $\nbSamples \geq \nbNodes$, Vandermonde matrices with nodes $\node_k$ in the unit disk, i.e., $\abs{\node_k} \leq 1$, for $k \in \{1, 2, \ldots, \nbNodes\}$,  is very scarce. In fact, the only result along these lines that we are aware of is Baz\'an's upper bound on the spectral condition number \citep{Bazan2000}. This bound is, however, implicit as it depends on a quantity whose computation requires the solution of the linear system of equations generated by the Vandermonde matrix under consideration. As the numerical results in Section \ref{sec: numerical results} demonstrate, the evaluation of this bound can therefore be numerically unstable in practice.

\textit{Contributions.} We derive a lower bound on the minimum singular value and an upper bound on the maximum singular value of $\nbSamples \times \nbNodes$ ($\nbSamples \geq \nbNodes$) Vandermonde matrices with general nodes $\node_k = \abs{\node_k}\!e^{2\pi i\xi_k}$ in the unit disk, i.e., $\abs{\node_k} \leq 1$ and $\xi_k \in [0, 1)$, for $k \in \{1, 2, \ldots, \nbNodes\}$. Based on these bounds we get an upper bound on the spectral condition number.
Our bounds depend on $\nbSamples$, the minimum wrap-around distance between the $\xi_k$, and the moduli $\abs{\node_k}$ of the nodes. In particular, the upper bound on the spectral condition number we report is of much simpler form than Baz\'an's bound, and for certain node configurations also sharper. 
The mathematical techniques we develop to prove our main results are inspired by a link---first established by \citet{Selberg1991} and later extended by \citet{Moitra2015}---between the extremal singular values of Vandermonde matrices with nodes on the unit circle and large sieve inequalities~\citep{Linnik1941, Renyi1951, Roth1964, Roth1965, Bombieri1965, Montgomery1968}. The Selberg-Moitra approach employs Fourier-analytic techniques and the Poisson summation formula and therefore does not seem to be amenable to an extension to the case of nodes in the unit disk. 
Our main conceptual contribution lies in establishing a connection between the extremal singular values of Vandermonde matrices with nodes in the unit disk and a novel large sieve inequality involving polynomials in $z \in \C$ with $\abs{z} \leq 1$. This is accomplished by first recognizing that the Selberg--Moitra connection can alternatively be established based on the Montgomery--Vaughan proof \cite{Montgomery1973} of the large sieve inequality, and then extending this alternative connection from the unit circle to the unit disk. 
We also demonstrate how Cohen's dilatation trick, described in \citep[p.~559]{Montgomery1978} and originally developed for the large sieve inequality on the unit circle, can be applied to refine our bounds valid for nodes in the unit disk. 
As a byproduct, our result---when particularized to the unit circle---slightly improves upon the Selberg--Moitra upper bound. This improved bound also applies to the square case, $\nbSamples = \nbNodes$, not covered by the \mbox{Selberg--Moitra result.} 

The numerical evaluation of Baz\'an's bound requires the solution of a linear system of equations which has the same condition number as the Vandermonde matrix under consideration; this can lead to numerical instability in practice. We provide numerical results demonstrating that our bound can not only be evaluated consistently in a numerically stable fashion, but is, in certain cases, also tighter than Baz\'an's bound.


\textit{Notation. }
The complex conjugate of $z \in \C$ is denoted by~$\overline{z}$. The hyperbolic sine function is defined as $\sh(z) \triangleq (e^z - e^{-z})/2$, for $z \in \C$. For $x \in \R$, $\lfloor x\rfloor$ is the largest integer not exceeding $x$, $\lceil x \rceil$ stands for the smallest integer larger than $x$, and $[x]$ denotes the integer closest to $x$.
Lowercase boldface letters designate (column) vectors and uppercase boldface letters denote matrices. The superscripts $^T$ and $^H$ refer to transposition and Hermitian transposition, respectively. 
For a vector $\boldsymbol{x} \triangleq \{x_k\}_{k = 1}^K \in \C^K$, we write $\norm{\boldsymbol{x}}_p$ for its $\ell^p$-norm, $p \in [1, \infty]$, that is, $\norm{\boldsymbol{x}}_p \triangleq \big(\sum_{k = 1}^K \abs{x_k}^p\big)^{1/p}$, for $p \in [1, \infty)$, and $\norm{\boldsymbol{x}}_\infty \triangleq \max_{1 \leq k \leq K} \abs{x_k}$.
The Moore-Penrose pseudo-inverse of the full-rank matrix $\mathbf{A} \in \C^{M \times N}$ is $\mathbf{A}^\dagger = \mathbf{A}^H(\mathbf{A}\mathbf{A}^H)^{-1}$, if $M < N$, and $\mathbf{A}^\dagger = (\mathbf{A}^H\mathbf{A})^{-1}\mathbf{A}^H$, if $M \geq N$.
We denote the smallest and largest singular value of $\mathbf{A} \in \C^{M \times N}$ by $\sigma_\mathrm{min}(\mathbf{A})$ and $\sigma_\mathrm{max}(\mathbf{A})$, respectively, and for $p \in [1, \infty]$, we let $\norm{\mathbf{A}}_p \triangleq \max\{\norm{\mathbf{A}\boldsymbol{x}}_p \colon \boldsymbol{x} \in \C^N, \norm{\boldsymbol{x}}_p = 1\}$.
In particular, we have $\norm{\mathbf{A}}_2 = \sigma_\mathrm{max}(\mathbf{A})$ and $\norm{\mathbf{A}}_\infty = \max_{1 \leq m \leq M}\sum_{n = 1}^N \abs{a_{m, n}}$.
For $\mathbf{A} \in \C^{M \times N}$ with columns $\boldsymbol{a}_n$, $n \in \{1, 2, \ldots, N\}$, we let $\vectorize(\mathbf{A}) \triangleq \left(\boldsymbol{a}_1^T\ \boldsymbol{a}_2^T\ \ldots\ \boldsymbol{a}_N^T\right)^T$, and  $\overline{\mathbf{A}}$ denotes the matrix obtained by element-wise complex conjugation of $\mathbf{A}$.


\section{Problem statement}

We consider Vandermonde matrices of the form
\begin{equation*}
	\vandermondeMat{\nbSamples \times \nbNodes} \triangleq \begin{pmatrix} 1 & 1 & \ldots & 1 & 1\\
			\node_1 & \node_2 & \ldots & \node_{\nbNodes-1} & \node_\nbNodes \\
			\node_1^2 & \node_2^2 & \ldots & \node_{\nbNodes-1}^2 & \node_\nbNodes^2 \\
			\vdots & \vdots & \ddots & \vdots & \vdots \\
			\node_1^{\nbSamples-1} & \node_2^{\nbSamples-1} & \ldots & \node_{\nbNodes-1}^{\nbSamples-1} & \node_\nbNodes^{\nbSamples-1}
		\end{pmatrix} 
	\in \C^{\nbSamples \times \nbNodes},
\end{equation*}
where $\nbSamples \geq \nbNodes$, and $\node_1, \node_2, \ldots, \node_\nbNodes \in \C$ are referred to as the ``nodes'' of $\vandermondeMat{\nbSamples \times \nbNodes}$. Throughout the paper, we take the nodes to be non-zero and pairwise distinct, i.e., $\node_{k_1} \neq \node_{k_2}$, for $k_1 \neq k_2$, which ensures that the matrix $\vandermondeMat{\nbSamples \times \nbNodes}$ has full rank.

We shall be interested in the minimum and maximum singular values and the condition number of $\vandermondeMat{\nbSamples \times \nbNodes}$ with respect to the general matrix norm $\big\|\cdot\big\|$ defined~\citep[Eq.~5.3.7]{Golub1983} as 
\begin{equation*}
	\kappa\left(\vandermondeMat{\nbSamples \times \nbNodes}, \big\|\cdot\big\|\right) \triangleq \big\|\vandermondeMat{\nbSamples \times \nbNodes}\big\|\big\|(\vandermondeMat{\nbSamples \times \nbNodes})^\dagger\big\|.
\end{equation*}
We will mostly be concerned with 
\begin{equation}
	\kappa(\vandermondeMat{\nbSamples \times \nbNodes}) \triangleq \kappa(\vandermondeMat{\nbSamples \times \nbNodes}, \norm{\cdot}_{2,2}) = \frac{\sigma_{\mathrm{max}}(\vandermondeMat{\nbSamples \times \nbNodes})}{\sigma_{\mathrm{min}}(\vandermondeMat{\nbSamples \times \nbNodes})},
	\label{eq: definition spectral condition number}
\end{equation}
often referred to as ``spectral condition number''. 

The goal of this paper is to find lower bounds on the minimum singular value and upper bounds on the maximum singular value of Vandermonde matrices $\vandermondeMat{\nbSamples \times \nbNodes}$ with nodes in the unit disk, that is, $\abs{\node_k} \leq 1$, for $k \in \{1, 2, \ldots, \nbNodes\}$. Based on these bounds, we then establish upper bounds on $\kappa(\vandermondeMat{\nbSamples \times \nbNodes})$.


\section{Previous work}

Before stating our main results, we summarize relevant prior work.

\subsection{Vandermonde matrices with real nodes}
\citet[Thms.~2.2 and 3.1]{Gautschi1988} showed that the condition number $\kappa(\vandermondeMat{\nbNodes \times \nbNodes}, \norm{\cdot}_\infty)$, $\nbNodes \geq 3$, is lower-bounded by $(\nbNodes - 1)2^{\nbNodes}$ when $\node_1, \node_2, \ldots, \node_\nbNodes \in \R_+$ and by $2^{\nbNodes/2}$ when $\nbNodes = 2L$, $L \in \N$, and the nodes $\node_1, \node_2, \ldots, \node_\nbNodes \in \R \!\setminus\! \{0\}$ satisfy the symmetry relationship $\node_{k+L} = -\node_k$, for $k \in \{1, 2, \ldots, L\}$. \citet[Thm.~4.1]{Beckermann2000} found that the spectral condition number of $\vandermondeMat{\nbNodes \times \nbNodes}$ satisfies 
\begin{equation*}
	\frac{\sqrt{2}(1 + \sqrt{2})^{\nbNodes-1}}{\sqrt{\nbNodes+1}}\leq \kappa\!\left(\vandermondeMat{\nbNodes \times \nbNodes}\right) \leq (\nbNodes+1)\sqrt{2}(1 + \sqrt{2})^{\nbNodes-1}, 
\end{equation*}
for $\node_1, \node_2, \ldots, \node_\nbNodes \in \R \!\setminus\! \{0\}$, and 
\begin{equation*}
	\frac{C_K}{2(\nbNodes+1)} \leq \kappa\!\left(\vandermondeMat{\nbNodes \times \nbNodes}\right) \leq \frac{\nbNodes+1}{2}C_K,
\end{equation*}
for $\node_1, \node_2, \ldots, \node_\nbNodes \in \R_+$, where $C_K \triangleq (1 + \sqrt{2})^{2\nbNodes} + (1 + \sqrt{2})^{-2\nbNodes}$. 
These results show that square Vandermonde matrices $\vandermondeMat{\nbNodes \times \nbNodes}$ with real nodes necessarily become ill-conditioned as the matrix dimension grows. Specifically, the condition number grows exponentially in the matrix dimension and, in particular, does so independently of the specific values of the nodes $\node_1, \node_2, \ldots, \node_\nbNodes$.

\subsection{Vandermonde matrices with complex nodes}
For Vandermonde matrices with complex nodes the situation is fundamentally different.
Consider, e.g., the DFT matrix $\mathbf{F}_\nbNodes \triangleq \{e^{2\pi ik\ell/\nbNodes}\}_{0 \leq k, \ell \leq \nbNodes-1}$, which is a Vandermonde matrix with nodes $z_k = e^{2\pi i (k-1)/\nbNodes}$, for $k \in \{1, 2, \ldots, \nbNodes\}$, and, as a consequence of  $\mathbf{F}_\nbNodes^H\mathbf{F}_\nbNodes = \nbNodes\mathbf{I}_K$, has the smallest possible spectral condition number, namely, $\kappa(\mathbf{F}_K) = 1$, and this, irrespectively of the matrix dimension $\nbNodes$. 

For general nodes $\node_1, \node_2, \ldots, \node_\nbNodes \in \C$, \citet[Thms.~1 and 3.1]{Gautschi1962} obtained the following bounds on $\norm{(\vandermondeMat{\nbNodes \times \nbNodes})^{-1}}_\infty$:
\begin{equation}
	\max_{1 \leq k \leq \nbNodes} \prod_{\substack{\ell = 1\\ \ell \neq k}}^\nbNodes \frac{\max\{1, \abs{z_{\ell}}\}}{\abs{z_{k} - z_{\ell}}} \leq \norm{(\vandermonde{\nbNodes \times \nbNodes}{\node_1, \node_2, \ldots, \node_\nbNodes})^{-1}}_\infty \leq \max_{1 \leq k \leq \nbNodes} \prod_{\substack{\ell = 1\\ \ell \neq k}}^\nbNodes \frac{1 + \abs{z_{k}}}{\abs{z_{k} - z_{\ell}}}.
	\label{eq: Gautschi's bound general norm infty inv} 
\end{equation}
This allows us to derive bounds on $\sigma_\mathrm{min}(\vandermonde{\nbNodes \times \nbNodes}{\node_1, \node_2, \ldots, \node_\nbNodes})$ and $\kappa\!\left(\vandermondeMat{\nbNodes \times \nbNodes}, \norm{\cdot}_\infty\right)$ by noting that $\norm{(\vandermonde{\nbNodes \times \nbNodes}{\node_1, \node_2, \ldots, \node_\nbNodes})^{-1}}_2 = \sigma_\mathrm{min}(\vandermonde{\nbNodes \times \nbNodes}{\node_1, \node_2, \ldots, \node_\nbNodes})$ and 
\begin{equation*}
	\frac{\norm{(\vandermonde{\nbNodes \times \nbNodes}{\node_1, \node_2, \ldots, \node_\nbNodes})^{-1}}_\infty}{\sqrt{\nbNodes}} \leq \norm{(\vandermonde{\nbNodes \times \nbNodes}{\node_1, \node_2, \ldots, \node_\nbNodes})^{-1}}_2 \leq \sqrt{\nbNodes}\norm{(\vandermonde{\nbNodes \times \nbNodes}{\node_1, \node_2, \ldots, \node_\nbNodes})^{-1}}_\infty.
\end{equation*}
Specifically, this results in
\begin{equation*}
	\frac{1}{\sqrt{\nbNodes}}\max_{1 \leq k \leq \nbNodes} \prod_{\substack{\ell = 1\\ \ell \neq k}}^\nbNodes \frac{\max\{1, \abs{z_{\ell}}\}}{\abs{z_{k} - z_{\ell}}} \leq \sigma_\mathrm{min}(\vandermonde{\nbNodes \times \nbNodes}{\node_1, \node_2, \ldots, \node_\nbNodes}) \leq \sqrt{\nbNodes}\max_{1 \leq k \leq \nbNodes} \prod_{\substack{\ell = 1\\ \ell \neq k}}^\nbNodes \frac{1 + \abs{z_{k}}}{\abs{z_{k} - z_{\ell}}}.
\end{equation*}
Combining \eqref{eq: Gautschi's bound general norm infty inv} with
\begin{equation*}
	\norm{\vandermondeMat{\nbNodes \times \nbNodes}}_\infty = \max_{0 \leq n \leq \nbNodes-1} \sum_{k = 1}^\nbNodes \abs{\node_k}^n  = \max\left\{\sum_{k = 1}^\nbNodes \abs{\node_k}^{\nbNodes-1}, \nbNodes\right\},
\end{equation*}
we get
\begin{align}
	&\max\!\left\{\sum_{k = 1}^\nbNodes \abs{\node_k}^{\nbNodes-1}\!, \nbNodes\!\right\} \!\!\left(\!\max_{1 \leq k \leq \nbNodes} \prod_{\substack{\ell = 1\\ \ell \neq k}}^\nbNodes \frac{\max\{1, \abs{z_{\ell}}\}}{\abs{z_{k} - z_{\ell}}} \!\right) \!\leq \kappa\!\left(\vandermondeMat{\nbNodes \times \nbNodes}, \norm{\cdot}_\infty\right) \notag \\
		&\hspace{2cm}\leq \max\!\left\{\sum_{k = 1}^\nbNodes \abs{\node_k}^{\nbNodes-1}\!, \nbNodes\!\right\} \!\!\left(\!\max_{1 \leq k \leq \nbNodes} \prod_{\substack{\ell = 1\\ \ell \neq k}}^\nbNodes \frac{1 + \abs{z_{k}}}{\abs{z_{k} - z_{\ell}}}\!\right).
	\label{eq: upper bound gautschi complex nodes} 
\end{align}
It is furthermore shown in~\cite[Thm.~1]{Gautschi1962} that the upper bound in \eqref{eq: Gautschi's bound general norm infty inv}, and therefore also the upper bound in \eqref{eq: upper bound gautschi complex nodes} are met with equality if the nodes $\node_1, \node_2, \ldots, \node_\nbNodes \in \C$ lie on a ray emanating from the origin, that is, if there exists a $\theta \in [0, 2\pi)$ such that $\node_k = \abs{\node_k}\!e^{i\theta}$, for $k \in \{1, 2, \ldots, \nbNodes\}$. As real nodes trivially satisfy this condition, namely with $\theta = 0$, this result confirms the worst-case condition number behavior associated with real nodes.

The remaining literature on the condition number of complex Vandermonde matrices can principally be divided into the case of all nodes lying on the unit circle and the---more general---case of nodes in the unit disk.

\subsubsection{Vandermonde matrices with nodes on the unit circle} 

The DFT matrix having spectral condition number equal to $1$, irrespectively of its dimension, indicates that Vandermonde matrices with nodes that are in some sense uniformly distributed on the unit circle could be well-conditioned in general. Inspired by this intuition, \citet{Gautschi1990} studied the spectral condition number of $\vandermondeMat{\nbNodes \times \nbNodes}$ with nodes $z_k = e^{2\pi ic_k}$, $k \in \{1, 2, \ldots, \nbNodes\}$, where $c_k$ is the Van der Corput sequence defined as
$c_k = \sum_{\ell = 0}^{L-1} v_\ell^{(k)} 2^{-\ell-1}$, $L = \lfloor \log_2k\rfloor + 1$, and $(v_0^{(k)}, v_1^{(k)}, \ldots, v_{L-1}^{(k)})$ is the binary representation of $k$, i.e., $k = \sum_{\ell = 0}^{L-1} v_\ell^{(k)}2^\ell$.
Van der Corput sequences are used, e.g., in quasi-Monte Carlo simulation algorithms~\citep{Faure2015} and are known to have excellent uniform distribution properties.
It is shown in~\citep[Cor.~3]{Gautschi1990} that the spectral condition number of Vandermonde matrices $\vandermondeMat{\nbNodes \times \nbNodes}$ built from Van der Corput sequences as described above is upper-bounded by $\sqrt{2\nbNodes}$.

\citet[Lem.~3.1]{Berman2007} formally confirmed the intuition, expressed in \citep{Gautschi1990}, that nodes distributed uniformly on the unit circle lead to small condition number. Specifically, it is shown in \citep[Lem.~3.1 \& Thm.~3.2]{Berman2007} that the spectral condition number of $\vandermondeMat{\nbNodes \times \nbNodes}$ with $\node_k  = e^{-2\pi ip_k\tau/\nbNodes}$, $p_k \in \{0, 1, \ldots, M-1\}$, for $k \in \{1, 2, \ldots, \nbNodes\}$, $M > \nbNodes$, $\tau \in \R$, is equal to $1$ if and only if the nodes $\node_1, \node_2, \ldots, \node_\nbNodes$ are distributed uniformly on the unit circle in the following sense: There exists a $\tau \in \R$ such that the spectral condition number of $\vandermondeMat{\nbNodes \times \nbNodes}$ is equal to $1$, irrespectively of $\nbNodes$, if and only if $\left\{\!\left\langle \frac{p_k}{Q}\right\rangle\!\right\}_{k = 1}^K$ is a complete residue system modulo $K$ \cite[Chap.~3, \S20]{Nagell1951}, where $Q \triangleq \mathrm{gcd}(\{p_k\}_{k = 1}^K)$ and $\left\langle \frac{p_k}{Q}\right\rangle$ is the remainder after division of $\frac{p_k}{Q}$ by $K$. 

For Vandermonde matrices $\vandermondeMat{\nbSamples \times \nbNodes}$, $\nbSamples \geq \nbNodes$, with nodes of the form $\node_k = e^{2\pi i \xi_k}$, where $\xi_k \in [0, 1)$, for $k \in \{1, 2, \ldots, \nbNodes\}$, \citet{Ferreira99} employed Ger\v{s}gorin's disc theorem~\cite[Thm.~6.1.1]{Horn1985} to derive a lower bound on $\sigma_\mathrm{min}^2(\vandermondeMat{\nbSamples \times \nbNodes})$ and an upper bound on $\sigma_\mathrm{max}^2(\vandermondeMat{\nbSamples \times \nbNodes})$, which when combined give
\begin{equation}
	\kappa\!\left(\vandermondeMat{\nbSamples \times \nbNodes}\right) \leq \left(\frac{\nbSamples + \left(\displaystyle[\beta] + \frac{\beta^2}{[\beta]} - 1\right)}{\nbSamples - \left(\displaystyle[\beta] + \frac{\beta^2}{[\beta]} - 1\right)}\right)^{1/2} \reversetriangleq B(\nbSamples, \beta),
	\label{eq: Ferreira's bound} 
\end{equation}
for $\nbSamples > [\beta] + \beta^2/[\beta] - 1$. Here, 
\begin{equation}
	\beta \triangleq \frac{\pi\Delta^{(w)}}{\sqrt{3}\sin(\pi\Delta^{(w)})\delta^{(w)}}
	\label{eq: defn beta Ferreira's bound} 
\end{equation}
with the minimum wrap-around distance
\begin{equation}
	\delta^{(w)} \triangleq \min_{\substack{1 \leq k, \ell \leq K \\ k \neq \ell}}\min_{n \in \Z}  \abs{\xi_k - \xi_\ell + n} \label{eq: defn minimum wrap arround distance first time} \\ 
\end{equation}
and the maximum wrap-around distance
\begin{equation*}
	\Delta^{(w)} \triangleq \max_{\substack{1 \leq k, \ell \leq K \\ k \neq \ell}}  \min_{n \in \Z} \abs{\xi_k - \xi_\ell + n} \notag
\end{equation*}
between the $\xi_k$. Note that $\delta^{(w)} \leq 1/\nbNodes$ as the maximum is achieved for $\nbNodes$ uniformly spaced nodes.  
\citet{Bazan2000}---also based on Ger\v{s}gorin's disc theorem---derived a lower bound on $\sigma_\mathrm{min}^2(\vandermondeMat{\nbSamples \times \nbNodes})$ and an upper bound on $\sigma_\mathrm{max}^2(\vandermondeMat{\nbSamples \times \nbNodes})$ based on which one gets
\begin{equation}
	\kappa(\vandermondeMat{\nbSamples \times \nbNodes}) \leq \sqrt{\frac{\nbSamples + (2\nbNodes-2)/\sigma}{\nbSamples - (2\nbNodes-2)/\sigma}},
	\label{eq: upper bound condition number for nodes on the unit circle Bazan} 
\end{equation}
for $\nbSamples > 2(\nbNodes-1)/\sigma$, where $\sigma$ is the minimum (Euclidean) distance between the nodes $\node_k$ defined as 
\begin{equation}
	\sigma \triangleq \min_{\substack{1 \leq k, \ell \leq \nbNodes \\ k \neq \ell}} \abs{\node_k - \node_\ell}\!.
	\label{eq: minimum distance complex number} 
\end{equation}
\citet{Negreanu2006} and \citet[Thm.~2]{Liao2014} discovered discrete versions of Ingham's inequalities~\citep{Ingham1936}. Besides the performance analysis of the MUSIC algorithm conducted in \citep{Liao2014}, these discrete Ingham inequalities also find application in the finite-difference discretization of homogeneous 1D wave equations~\cite{Negreanu2006}. In the present context, they provide a lower bound on $\sigma_\mathrm{min}^2(\vandermondeMat{\nbSamples \times \nbNodes})$ and an upper bound on $\sigma_\mathrm{max}^2(\vandermondeMat{\nbSamples \times \nbNodes})$, which, taken together, yield
\begin{equation}
	\kappa(\vandermondeMat{\nbSamples \times \nbNodes}) \leq \left(\frac{\displaystyle\frac{8\sqrt{2}\left\lceil (\nbSamples-1)/2\right\rceil}{\pi} + \frac{\sqrt{2}}{2\pi\left\lceil (\nbSamples-1)/2 \right\rceil(\delta^{(w)})^2} + 3\sqrt{2}}{\displaystyle\frac{2(\nbSamples-1)}{\pi} - \frac{2}{\pi(\nbSamples-1)(\delta^{(w)})^2}-4}\right)^{1/2},
	\label{eq: upper bound condition number liao} 
\end{equation}
for $\nbSamples \geq 7$ and 
\begin{equation*}
	\delta^{(w)} > \frac{1}{\nbSamples}\sqrt{\frac{2}{\pi}}\left(\frac{2}{\pi} - \frac{4}{\nbSamples}\right)^{-1/2}.
\end{equation*}
Another upper bound on $\kappa(\vandermondeMat{\nbSamples \times \nbNodes})$ was recently reported by Moitra in \citep{Moitra2014}. As Moitra's result is closely related to our main result, we review it in detail separately in Section \ref{sec: moitra's results summarized... and selberg}.

\subsubsection{Vandermonde matrices with nodes in the unit disk}
\label{sec: Vandermonde matrices with nodes in the unit disk previous work} 

For nodes $\node_k$ in the unit disk, i.e., $\abs{\node_k} \leq 1$, for all $k \in \{1, 2, \ldots, \nbNodes\}$, Gautschi's upper bound \eqref{eq: upper bound gautschi complex nodes} becomes
\begin{equation*}
	\kappa\!\left(\vandermondeMat{\nbNodes \times \nbNodes}, \norm{\cdot}_\infty\right) \leq \nbNodes (2/\sigma)^{\nbNodes-1}.
\end{equation*}
This result holds, however, for square Vandermonde matrices only. To the best of our knowledge, the only analytical result available on the condition number of rectangular (i.e., $\nbSamples \geq \nbNodes$) Vandermonde matrices with nodes in the unit disk is due to \citet{Bazan2000}. We review Baz\'an's result in Section \ref{sec: discussion bazan's results} in the course of a comparison to our results.


\section{Vandermonde matrices with nodes on the unit circle and the large sieve}

The proof of our main result is inspired by a link---first established by \citet[pp.~213--226]{Selberg1991} and later extended by \citet[Thm.~2.3]{Moitra2015}---between the extremal singular values of Vandermonde matrices with nodes on the unit circle and the ``large sieve''~\citep{Linnik1941, Renyi1951, Roth1964, Roth1965, Bombieri1965, Montgomery1968}, a family of inequalities involving polynomials in $e^{2\pi i \xi}$, $\xi \in [0, 1)$, originally developed in the field of analytic number theory~\citep{Bombiri1974, Montgomery2001}.

\subsection{A brief introduction to the large sieve}

We start with a brief introduction to the large sieve emphasizing the aspects relevant to the problem at hand. Specifically, we shall work with the definition of the large sieve as put forward by~\citet[Thm.~1]{Davenport1966}.

\begin{defn}[Large sieve inequality]
	Let $\boldsymbol{y} \triangleq \{y_n\}_{n = 0}^{\nbSamples-1} \in \C^\nbSamples$. Define the trigonometric polynomial
	\begin{equation}
		\forall \xi \in \R, \qquad S_{\boldsymbol{y}, \nbSamples}(\xi) \triangleq \sum_{n = 0}^{\nbSamples-1} y_n e^{-2\pi in\xi}.
		\label{eq: definition S in the large sieve} 
	\end{equation}
	Let $\xi_1, \xi_2, \ldots, \xi_\nbNodes \in [0, 1)$ be such that the minimum wrap-around distance satisfies
	\begin{equation*}
		\delta^{(w)} \triangleq \min_{\substack{1 \leq k, \ell \leq K \\ k \neq \ell}}\min_{n \in \Z} \abs{\xi_k - \xi_\ell + n} > 0.
	\end{equation*}
	A large sieve inequality is an inequality of the form 
	\begin{equation}
		\sum_{k = 1}^\nbNodes \abs{S_{\boldsymbol{y}, \nbSamples}(\xi_k)}^2 \leq \Delta(\nbSamples, \delta^{(w)}) \sum_{n = 0}^{\nbSamples-1} \abs{y_n}^2,
		\label{eq: def large sieve inequality} 
	\end{equation}
	where $\Delta(\nbSamples, \delta^{(w)})$ depends on $\nbSamples$ and $\delta^{(w)}$ only.	
\end{defn}

The large sieve inequality \eqref{eq: def large sieve inequality} essentially says that the energy contained in the samples $S_{\boldsymbol{y}, \nbSamples}(\xi_k)$, $k \in \{1, 2, \ldots, \nbNodes\}$, of the trigonometric polynomial $S_{\boldsymbol{y}, \nbSamples}$ is bounded by the total energy of $S_{\boldsymbol{y}, \nbSamples}$ (given by $\sum_{n = 0}^{\nbSamples-1} \abs{y_n}^2$) multiplied by a factor that depends on $\nbSamples$ and the minimum wrap-around distance between the $\xi_k$ only.

\citet[Thm.~1]{Davenport1966} established \eqref{eq: def large sieve inequality} with $\Delta(\nbSamples, \delta^{(w)}) = 2.2\times\max\{\nbSamples, 1/\delta^{(w)}\}$, \citet{Gallagher1967} with $\Delta(\nbSamples, \delta^{(w)}) = \pi\nbSamples+1/\delta^{(w)}$, \citet{Liu1969} with $\Delta(\nbSamples, \delta^{(w)}) = 2\max\{\nbSamples, 1/\delta^{(w)}\}$, Bombieri and Davenport with $\Delta(\nbSamples, \delta^{(w)}) = (\sqrt{\nbSamples} + 1/\sqrt{\delta^{(w)}})^2$ in \citep{Bombieri1968} and with $\Delta(\nbSamples, \delta^{(w)}) = \nbSamples+5/\delta^{(w)}$ in \citep{Bombieri1969}. 
\citet[Thm.~1]{Montgomery1973} proved \eqref{eq: def large sieve inequality} with $\Delta(\nbSamples, \delta^{(w)}) = \nbSamples + 1/\delta^{(w)}$, later improved to $\Delta(\nbSamples, \delta^{(w)}) = \nbSamples -1 + 1/\delta^{(w)}$ by Cohen and independently by Selberg \cite[Thm.~3]{Montgomery1978}. 
In particular, Cohen used a ``dilatation trick'' to replace $\nbSamples$ in the Montgomery--Vaughan result \citep[Thm.~1]{Montgomery1973} by $\nbSamples-1$, while Selberg's improvement~\cite[pp.~213--226]{Selberg1991} relies on the construction of an extremal majorant of the characteristic function $\chi_E$ of the interval $E \triangleq [0, (\nbSamples-1)\delta^{(w)}]$. An extremal majorant of a function $\psi \colon \R \rightarrow \R$ is an entire function $M_\psi \colon \C \rightarrow \C$ of exponential type at most $2\pi$~\cite[p.~839]{Boas1942} which majorizes $\psi$ along the real axis, i.e., $\psi(u) \leq M_\psi(u)$, for all $u \in \R$, and at the same time minimizes the integral $\int_{-\infty}^\infty (M_\psi(u) - \psi(u))\mathrm{d}u$. 

\subsection{Extremal singular values of Vandermonde matrices with nodes on the unit circle and the large sieve}
\label{sec: moitra's results summarized... and selberg} 

For $\nbSamples \times \nbNodes$, $\nbSamples \geq \nbNodes$, Vandermonde matrices with nodes $e^{2\pi i\xi_k}$\!, $\xi_k \!\in \![0, 1)$, $k \!\in\! \{1, 2, \ldots, \nbNodes\}$, and minimum wrap-around distance $\delta^{(w)}$,  \citet[Thm.~2.3]{Moitra2015} showed that
\begin{equation}
	\kappa(\vandermondeMat{\nbSamples \times \nbNodes}) \leq \sqrt{\frac{\nbSamples-1+1/\delta^{(w)}}{\nbSamples-1-1/\delta^{(w)}}},
	\label{eq: upper bound condition number for nodes on the unit circle Moitra} 
\end{equation}
for $\nbSamples > 1 + 1/\delta^{(w)}$. 
This result is obtained from the upper bound on $\sigma_\mathrm{max}^2(\vandermondeMat{\nbSamples \times \nbNodes})$ reported by Selberg in \citep{Selberg1991} and a new lower bound on $\sigma_\mathrm{min}^2(\vandermondeMat{\nbSamples \times \nbNodes})$ derived by Moitra in \citep{Moitra2014}. 

Moitra's main insight was to recognize that replacing the extremal majorant of $\chi_E$ in Selberg's proof of the large sieve inequality by the extremal minorant of $\chi_E$ readily leads to a lower bound on $\sigma_\mathrm{min}^2(\vandermondeMat{\nbSamples \times \nbNodes})$. 
We note that the condition $\nbSamples > 1+1/\delta^{(w)}$ for \eqref{eq: upper bound condition number for nodes on the unit circle Moitra} to hold excludes the case of square Vandermonde matrices, that is, $\nbSamples = \nbNodes$, because $\nbSamples > 1 + 1/\delta^{(w)} \geq \nbNodes + 1$ as a consequence of $\delta^{(w)} \leq 1/\nbNodes$.  

We proceed to explaining in detail how \eqref{eq: upper bound condition number for nodes on the unit circle Moitra} is obtained and to this end start by briefly reviewing Selberg's proof of the large sieve inequality. Selberg starts by considering the extremal majorant 
\begin{equation*}
	\forall z \in \C, \qquad C_E(z) \triangleq \frac{1}{2}\big(B((\nbSamples-1)\delta^{(w)}-z) + B(z)\big)
\end{equation*}
of the characteristic function $\chi_E$ of the interval $E = [0, (\nbSamples-1)\delta^{(w)}]$, 
where $B$ stands for Beurling's extremal majorant of the signum function given by~\citep{Beurling1989} 
\begin{equation}
	\forall z \in \C, \quad \!B(z) \triangleq \!\left(\!\frac{\sin(\pi z)}{\pi}\!\right)^2\!\left(\sum_{n = 0}^\infty \frac{1}{(z-n)^2} - \sum_{n = -\infty}^{-1} \frac{1}{(z-n)^2} + \frac{2}{z}\right)\!.
	\label{eq: beurling function} 
\end{equation}
An important property of $C_E$ is
\begin{equation}
	\int_{-\infty}^\infty \left(C_E(u) - \chi_E(u)\right)\mathrm{d}u = 1.
	\label{eq: equality integral extremal majorant characteristic function} 
\end{equation}
Letting $\vandermondeMat{\nbSamples \times \nbNodes}$ be the Vandermonde matrix with nodes $\node_k = e^{2\pi i\xi_k}$, $\xi_k \in [0, 1)$, for $k \in \{1, 2, \ldots, \nbNodes\}$, Selberg first notes that
\begin{equation}
	\sum_{k = 1}^\nbNodes \abs{S_{\boldsymbol{y}, \nbSamples}(\xi_k)}^2 = \norm{(\vandermondeMat{\nbSamples \times \nbNodes})^H\boldsymbol{y}}_2^2,
	\label{eq: selberg equation link vandermonde} 
\end{equation}
for all $\boldsymbol{y} \triangleq \{y_n\}_{n = 0}^{\nbSamples-1} \in \C^\nbSamples$. This implies that the large sieve inequality holds with every $\Delta(\nbSamples, \delta^{(w)})$ satisfying
\begin{equation}
	\Delta(\nbSamples, \delta^{(w)}) \geq \sigma_\mathrm{max}^2((\vandermondeMat{\nbSamples \times \nbNodes})^H) = \sigma_\mathrm{max}^2(\vandermondeMat{\nbSamples \times \nbNodes}). \label{eq: proof selberg 0} 
\end{equation}
Conversely, every $\Delta(\nbSamples, \delta^{(w)})$ such that \eqref{eq: def large sieve inequality} holds for all $\boldsymbol{y} \triangleq \{y_n\}_{n = 0}^{\nbSamples-1} \in \C^\nbSamples$ must satisfy \eqref{eq: proof selberg 0}. 
Selberg goes on to derive an upper bound on $\sigma_\mathrm{max}^2(\vandermondeMat{\nbSamples \times \nbNodes})$ as follows. Let $\boldsymbol{x} \triangleq \{x_k\}_{k = 1}^\nbNodes$, $\psi_{\boldsymbol{x}}(u) \triangleq \sum_{k = 1}^\nbNodes x_k e^{2\pi i\xi_k u}$, for all $u \in \R$, and note that
\begin{align}
	\norm{\vandermondeMat{\nbSamples \times \nbNodes}\boldsymbol{x}}_2^2 &= \sum_{n = 0}^{\nbSamples-1} \abs{\psi_{\boldsymbol{x}}(n)}^2 \notag \\
		&= \sum_{n = -\infty}^\infty \chi_E(\delta^{(w)} n)\abs{\psi_{\boldsymbol{x}}(n)}^2 \notag \\
		&\leq \sum_{n = -\infty}^\infty C_E(\delta^{(w)} n)\abs{\psi_{\boldsymbol{x}}(n)}^2 \label {eq: proof selberg 0.5} \\ 
		&= \sum_{n = -\infty}^\infty C_E(\delta^{(w)} n) \sum_{k, \ell = 1}^\nbNodes x_k\overline{x_\ell}e^{2\pi i(\xi_k - \xi_\ell)n}. \label{eq: proof selberg 1} 
\end{align}
$C_E$ is integrable over $\R$, thanks to $C_E \geq 0$ and \eqref{eq: equality integral extremal majorant characteristic function}, and therefore the Fourier transform $\widehat{C}_E$ of its restriction to $\R$ is continuous. Moreover, as $C_E$ is an entire function of exponential type at most $2\pi$, $\widehat{C}_E$ is supported on $[-1, 1]$.
The Poisson summation formula then yields
\begin{align}
	\sum_{n = -\infty}^\infty C_E(\delta^{(w)} n)e^{2\pi i(\xi_k - \xi_\ell)n} &= (\delta^{(w)})^{-1}\sum_{n = -\infty}^\infty \widehat{C}_E((\delta^{(w)})^{-1}(n-(\xi_k - \xi_\ell))) \notag \\
		&= \begin{cases} (\delta^{(w)})^{-1}\widehat{C}_E(0), & k = \ell \\ 0, & \text{otherwise,} \end{cases} \label{eq: application Poisson} 
\end{align}
where \eqref{eq: application Poisson} follows from $\abs{n - (\xi_k - \xi_\ell)} \geq \delta^{(w)}$, for all $n \in \Z$, and all $k, \ell \in \{1, 2, \ldots, \nbNodes\}$ such that $k \neq \ell$, and the fact that $\widehat{C}_E$ is a continuous function supported on $[-1, 1]$ (which implies $\widehat{C}_E(-1) = \widehat{C}_E(1) = 0$).
Note that the conditions for the application of the Poisson summation formula are met as $C_E$ is integrable over $\R$, which, combined with the fact that $C_E$ is an entire function of exponential type at most $2\pi$, implies that $C_E'$ is integrable over $\R$ \cite[Pt.~2, Sec.~3., Prob.~7]{Young1990} and $C_E(u) \rightarrow 0$ as $\abs{u} \rightarrow \infty$. From \eqref{eq: equality integral extremal majorant characteristic function} we therefore get
\begin{equation}
	\widehat{C}_E(0) = \int_{-\infty}^\infty C_E(u)\mathrm{d}u =  1 + \int_{-\infty}^\infty \chi_E(u)\mathrm{d}u = 1 + (\nbSamples-1)\delta^{(w)}.
	\label{eq: proof selberg 2} 
\end{equation}
Combining \eqref{eq: proof selberg 1}, \eqref{eq: application Poisson}, and \eqref{eq: proof selberg 2} thus yields 
\begin{equation}
	\norm{\vandermondeMat{\nbSamples \times \nbNodes}\boldsymbol{x}}_2^2 \leq \left(\nbSamples - 1 + 1/\delta^{(w)}\right)\!\norm{\boldsymbol{x}}_2^2.
	\label{eq: proof selberg result upper bound} 
\end{equation}
As \eqref{eq: proof selberg result upper bound} holds for all $\boldsymbol{x} \in \C^\nbNodes$, we can conclude that $\sigma_\mathrm{max}^2(\vandermondeMat{\nbSamples \times \nbNodes}) \leq \nbSamples - 1 + 1/\delta^{(w)}$, which, thanks to \eqref{eq: proof selberg 0}, yields the large sieve inequality with $\Delta(\nbSamples, \delta^{(w)}) = \nbSamples-1+1/\delta^{(w)}$.
\citet{Bombieri1969} showed that the large sieve inequality with $\Delta(\nbSamples, \delta^{(w)}) = \nbSamples -1 + 1/\delta^{(w)}$ is tight by constructing an explicit example saturating \eqref{eq: def large sieve inequality} with $\Delta(\nbSamples, \delta^{(w)}) = \nbSamples-1+1/\delta^{(w)}$.

We are now ready to review Moitra's lower bound on $\sigma_\mathrm{min}^2(\vandermondeMat{\nbSamples \times \nbNodes})$. Specifically, Moitra recognized that Selberg's idea for upper-bounding $\sigma_\mathrm{max}^2(\vandermondeMat{\nbSamples \times \nbNodes})$ can also be applied to lower-bound $\sigma_{\mathrm{min}}^2(\vandermondeMat{\nbSamples \times \nbNodes})$, simply by working with the extremal minorant of $\chi_E$, constructed by Selberg in \cite{Selberg1991}, instead of the extremal majorant. 
An extremal minorant of a function $\psi \colon \R \rightarrow \R$ is an entire function $m_\psi \colon \C \rightarrow \C$ of exponential type at most $2\pi$ which minorizes $\psi$ along the real axis, i.e., $m_\psi(u) \leq \psi(u)$, for all $u \in \R$, and at the same time minimizes the integral $\int_{-\infty}^\infty (\psi(u) - m_\psi(u))\mathrm{d}u$. The extremal minorant of $\chi_E$ constructed by Selberg is
\begin{equation*}
	\forall z \in \C, \qquad c_E(z) \triangleq -\frac{1}{2}\big(B(z-(\nbSamples-1)\delta^{(w)}) + B(-z)\big),
\end{equation*}
where $B$ was defined in \eqref{eq: beurling function}.
By construction, $c_E$ satisfies
\begin{equation}
	\int_{-\infty}^\infty (\chi_E(u) - c_E(u))\mathrm{d}u = 1.
	\label{eq: equality integral extremal minorant characteristic function} 
\end{equation}
Moitra showed that $\sigma_\mathrm{min}^2(\vandermondeMat{\nbSamples \times \nbNodes}) \geq \nbSamples-1-1/\delta^{(w)}$ by replacing $\leq$ in \eqref{eq: proof selberg 0.5} and $C_E$ in \eqref{eq: proof selberg 0.5}-\eqref{eq: proof selberg 1} by $\geq$ and $c_E$, respectively, and employing arguments similar to those in \eqref{eq: application Poisson} and \eqref{eq: proof selberg 2} with $c_E$ in place of $C_E$. 
The final result \eqref{eq: upper bound condition number for nodes on the unit circle Moitra} then follows by using this lower bound in conjunction with the Selberg upper bound on $\sigma_{\mathrm{max}}^2(\vandermondeMat{\nbSamples \times \nbNodes})$ in \eqref{eq: definition spectral condition number}.

\subsection{Relation to other bounds in the literature}

We now put the Selberg--Moitra bound into perspective with respect to other bounds (for nodes on the unit circle) available in the literature. 
Both the Selberg--Moitra bound \eqref{eq: upper bound condition number for nodes on the unit circle Moitra} as well as the Liao-Fannjiang bound \eqref{eq: upper bound condition number liao} depend neither on the maximum wrap-around distance $\Delta\!^{(w)}$, as Ferreira's bound \eqref{eq: Ferreira's bound} does, nor do they exhibit a dependence on $\nbNodes$ as is the case for Baz\'an's bound \eqref{eq: upper bound condition number for nodes on the unit circle Bazan}.
While \eqref{eq: Ferreira's bound}, \eqref{eq: upper bound condition number liao}, and \eqref{eq: upper bound condition number for nodes on the unit circle Moitra} depend on the minimum wrap-around distance $\delta^{(w)}$, Baz\'an's bound \eqref{eq: upper bound condition number for nodes on the unit circle Bazan} is in terms of the minimum distance $\sigma$ between the nodes $\node_k$.
However, as $\sigma = 2\sin(\pi\delta^{(w)})$ (which follows from a simple geometric argument) and $2x/\pi \leq \sin(x) \leq x$, for $x \in [0, \pi/2)$, we get $4\delta^{(w)} \leq \sigma \leq 2\pi\delta^{(w)}$, so that the bounds \eqref{eq: Ferreira's bound}, \eqref{eq: upper bound condition number liao}, and \eqref{eq: upper bound condition number for nodes on the unit circle Moitra} can readily be expressed in terms of $\sigma$.

We next analyze Ferreira's bound \eqref{eq: Ferreira's bound}. As $\delta^{(w)} \leq 1/2$ and $2x/\pi \leq \sin(x) \leq x$, for $x \in [0, \pi/2)$, it follows that $\beta$ in \eqref{eq: defn beta Ferreira's bound} satisfies
\begin{equation*}
	1 \leq \frac{1}{\sqrt{3}\delta^{(w)}} \leq \beta \leq \frac{\pi}{2\sqrt{3}\delta^{(w)}} \leq \frac{1}{\delta^{(w)}}.
\end{equation*}
Further, we have 
\begin{equation}
	\frac{17}{18\delta^{(w)}} - \frac{1}{2} \leq [\beta] + \frac{\beta^2}{[\beta]} \leq \frac{7}{3\delta^{(w)}} + \frac{1}{2}, \label{eq: third inequality comparison bounds ferreira}  
\end{equation}
where we used $x-1/2 \leq [x] \leq x+1/2$, the fact that the functions $x \mapsto x^2/(x-1/2) + x + 1/2$ and $x \mapsto x - 1/2 + x^2/(x+1/2)$ are non-decreasing on $[1, \infty)$, and $\delta^{(w)} \leq 1/2$.
Employing \eqref{eq: third inequality comparison bounds ferreira} in Ferreira's bound \eqref{eq: Ferreira's bound}, we get
\begin{equation*}
	\sqrt{\frac{\nbSamples - 3/2 + 17/(18\delta^{(w)})}{\nbSamples + 3/2 - 17/(18\delta^{(w)})}} \leq B(\nbSamples, \beta) \leq \sqrt{\frac{\nbSamples - 1/2 + 7/(3\delta^{(w)})}{\nbSamples + 1/2 - 7/(3\delta^{(w)})}},
\end{equation*}
which shows that Ferreira's bound \eqref{eq: Ferreira's bound} exhibits the same structure as the Selberg--Moitra bound \eqref{eq: upper bound condition number for nodes on the unit circle Moitra}.

\section{Main result}

The main conceptual contribution of the present paper is an extension of the connection between the extremal singular values of Vandermonde matrices and the large sieve principle from the unit circle to the unit disk. 
As a byproduct, we find a new large sieve-type inequality involving polynomials in $z \in \C$ with $\abs{z} \leq 1$ instead of trigonometric polynomials (i.e., polynomials in $e^{2\pi i\xi}$). 
This generalization can not be deduced from the Selberg--Moitra result whose proof relies on Fourier-analytic techniques and the Poisson summation formula and is hence restricted to nodes on the unit circle and to the classical large sieve inequality involving polynomials in variables that take value on the unit circle.  
It turns out, however, that an alternative connection between the extremal singular values of Vandermonde matrices with nodes on the unit circle and the large sieve can be obtained based on the Montgomery--Vaughan proof \cite{Montgomery1973} of the large sieve inequality. The key insight now is that this alternative connection---thanks to being built on generalizations of Hilbert's inequality---can be extended from the unit circle to the unit disk. As a byproduct, the corresponding result---when particularized to the unit circle---slightly improves upon the Selberg--Moitra upper bound.

For pedagogical reasons, we start by explaining our approach for the special case of nodes on the unit circle. The general case of nodes in the unit disk is presented in Section \ref{sec: spectral properties of vandermonde matrices with nodes in the unit disk}.

\subsection{An alternative connection for nodes on the unit circle}
The Montgomery--Vaughan proof of the large sieve inequality with $\Delta(\nbSamples, \delta^{(w)}) = \nbSamples+1/\delta^{(w)}$ is based on a generalization of Hilbert's inequality~\citep{Montgomery1974}, which in its original form states that\footnote{Hilbert actually proved \eqref{eq: Hilbert's inequality} with a factor of $2\pi$ instead of $\pi$. Later \citet{Schur1911} replaced the factor $2\pi$ by the best possible constant $\pi$, but the inequality \eqref{eq: Hilbert's inequality} has come to be referred to as ``Hilbert's inequality''.}
\begin{equation}
	\abs{\sum_{\substack{k, \ell = 1 \\ k \neq \ell}}^K \frac{x_k\overline{x_\ell}}{k - \ell}} \leq \pi \sum_{k = 1}^K \abs{x_k}^2,
	\label{eq: Hilbert's inequality} 
\end{equation}
for arbitrary $\boldsymbol{x} \triangleq \{x_k\}_{k = 1}^K \in \C^K$.
Specifically, Montgomery and Vaughan generalize \eqref{eq: Hilbert's inequality} as follows.

\begin{thm}[{Generalization of Hilbert's inequality,~\cite[Thms.~1 \& 2]{Montgomery1974}}] 
	\label{thm: first generalization of Hilbert's inequality} 
	Let $\nbNodes \in \N\!\setminus\! \{0\}$.
	\begin{enumerate}[a)] 
		\item\label{enumerate item: first generalization of Hilbert's inequality pas sinus} 
		For all $\boldsymbol{u} \triangleq \{u_k\}_{k = 1}^\nbNodes \in \R^\nbNodes$ such that 
		\begin{equation*}
			\delta \triangleq \min_{\substack{1 \leq k,\ell \leq K \\ k \neq \ell}} \abs{u_k - u_\ell} > 0,
		\end{equation*}
		we have 
	\begin{equation}
		\abs{\sum_{\substack{k, \ell = 1 \\ k \neq \ell}}^K \frac{\alpha_k\overline{\alpha_\ell}}{2\pi(u_k - u_\ell)}} \leq \frac{1}{2\delta} \sum_{k = 1}^K \abs{\alpha_k}^2,
		\label{eq: first generalization hilbert's inequality}  
	\end{equation}
	for all $\boldsymbol{\alpha} \triangleq \{\alpha_k\}_{k = 1}^\nbNodes \in \C^\nbNodes$.
	
		\item\label{enumerate item: first generalization of Hilbert's inequality sinus} 
	For all  $\boldsymbol{\xi} \triangleq \{\xi_k\} \in \R^\nbNodes$ such that 
	\begin{equation*}
		\delta^{(w)} \triangleq \min_{\substack{1 \leq k, \ell \leq K \\ k \neq \ell}}\min_{n \in \Z} \abs{\xi_k - \xi_\ell + n} > 0,
	\end{equation*}
	we have
	\begin{equation}
		\abs{\sum_{\substack{k, \ell = 1 \\ k \neq \ell}}^K \frac{a_k\overline{a_\ell}}{\sin(\pi(\xi_k - \xi_\ell))}} \leq \frac{1}{\delta^{(w)}} \sum_{k = 1}^K \abs{a_k}^2,
		\label{eq: inequality sin} 
	\end{equation}
	for all $\boldsymbol{a} \triangleq \{a_k\}_{k = 1}^\nbNodes \in \C^\nbNodes$.
	\end{enumerate}
\end{thm}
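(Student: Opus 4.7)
The plan is to prove part (a) first via a Cauchy--Schwarz reduction to an operator-norm bound, combined with a partial-fraction identity yielding a self-referential inequality on $|T|$, and then to derive part (b) from (a) using the Mittag--Leffler expansion of $\pi\csc(\pi\cdot)$ together with a Ces\`aro-type smoothing argument.

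For part (a), set $T\triangleq\sum_{k\neq\ell}\alpha_k\overline{\alpha_\ell}/(2\pi(u_k-u_\ell))$ and regard it as the sesquilinear form $\langle\boldsymbol{\alpha},B\boldsymbol{\alpha}\rangle$ associated with the real antisymmetric matrix $B$ defined by $B_{k\ell}=1/(2\pi(u_k-u_\ell))$ for $k\neq\ell$ and $B_{kk}=0$. Since $iB$ is Hermitian, $|T|\leq\|B\|_{\mathrm{op}}\|\boldsymbol{\alpha}\|_2^2$. Applying Cauchy--Schwarz gives $|T|^2\leq\|\boldsymbol{\alpha}\|_2^2\,\|B\boldsymbol{\alpha}\|_2^2$, and expanding $\|B\boldsymbol{\alpha}\|_2^2=\sum_k|(B\boldsymbol{\alpha})_k|^2$ produces a triple sum over $k,\ell,m$ (with $\ell,m\neq k$). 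The diagonal contribution $\ell=m$ is controlled by the elementary estimate $\sum_{k\neq\ell}(u_k-u_\ell)^{-2}\leq\pi^2/(3\delta^2)$, which follows from sorting the $u_k$ and using $|u_k-u_\ell|\geq\delta$. For the off-diagonal contribution $\ell\neq m$, I would invoke the partial-fraction identity
\[
\frac{1}{(u_k-u_\ell)(u_k-u_m)}=\frac{1}{u_\ell-u_m}\!\left[\frac{1}{u_k-u_\ell}-\frac{1}{u_k-u_m}\right]\!,
\]
swap the summation order so as to perform the $k$-sum first, and observe that the resulting expression reassembles into bounds involving $\|\boldsymbol{\alpha}\|_2^2$ and $|T|$ itself. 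This yields a quadratic-type inequality in $|T|$ whose resolution delivers the sharp bound $|T|\leq\|\boldsymbol{\alpha}\|_2^2/(2\delta)$.

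For part (b), I would apply the Mittag--Leffler expansion $\pi\csc(\pi z)=\sum_{n\in\Z}(-1)^n/(z-n)$ together with Ces\`aro-type smoothing to reduce to (a). Specifically, for each $N\in\N$, define
\[
T'_N\triangleq\sum_{(k,n)\neq(\ell,m)}\frac{(a_k(-1)^n)\,\overline{a_\ell(-1)^m}}{(2N+1)(\xi_k+n-\xi_\ell-m)},
\]
where $(k,n),(\ell,m)\in\{1,\ldots,K\}\times\{-N,\ldots,N\}$. This sum falls directly under (a), applied to the $K(2N+1)$ real points $\xi_k+n$, which are pairwise separated by at least $\delta^{(w)}$, with weights $a_k(-1)^n/\sqrt{2N+1}$ whose squared $\ell^2$-norm equals $\|\boldsymbol{a}\|_2^2$; hence $|T'_N|\leq(\pi/\delta^{(w)})\|\boldsymbol{a}\|_2^2$ uniformly in $N$. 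Grouping terms by $r\triangleq n-m$ produces Fej\'er-type weights $(2N+1-|r|)/(2N+1)$; the spurious diagonal terms with $k=\ell$ and $r\neq 0$ cancel in pairs under $r\leftrightarrow-r$; and $T'_N$ converges as $N\to\infty$ to $\pi$ times the sum on the left-hand side of (b), thereby yielding the desired bound $1/\delta^{(w)}$ after dividing by $\pi$.

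The principal obstacle is obtaining the sharp constant $1/(2\delta)$ in (a): naive estimates of $\|B\|_{\mathrm{op}}$ such as the Schur test with uniform weights diverge because $\sum_{j\geq 1}1/j=\infty$, so the partial-fraction manipulation is essential, and any loss of a multiplicative factor in combining the diagonal and off-diagonal pieces would destroy the optimal constant; a careful algebraic bookkeeping of the self-referential inequality is therefore required. A secondary but non-trivial difficulty in (b) is justifying the interchange of $\lim_{N\to\infty}$ with the $r$-summation in the presence of only conditional convergence; the Ces\`aro structure of the weights $(2N+1-|r|)/(2N+1)$ supplies the required dominated-convergence-type control.
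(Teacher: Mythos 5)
First, note that the paper does not prove this theorem at all: it is imported verbatim from Montgomery--Vaughan \cite[Thms.~1 \& 2]{Montgomery1974}, so the only in-paper analogue of your argument is the periodization step in \ref{app: proof of proposition first appendix} (the proof of Proposition~\ref{prop: equivalence hilbert's inequality sh}, direction i) $\Rightarrow$ ii)), which derives a $\sh$-kernel inequality from a half-plane one by exactly the mechanism you use for part b): replicate each $\xi_k$ as $\xi_k+n$ with alternating weights, group by $r=n-m$ to produce Fej\'er weights, and pass to the limit. Your part b) is correct: the combined point set is $\delta^{(w)}$-separated (for $k=\ell$ the gap is $\abs{n-m}\geq 1\geq\delta^{(w)}$), the $k=\ell$, $r\neq 0$ terms cancel by oddness in $r$, and the Ces\`aro limit of $\sum_r(-1)^r/(z+r)$ is $\pi/\sin(\pi z)$, so the constant $\pi/\delta^{(w)}$ from a) divides down to $1/\delta^{(w)}$ as required.

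The genuine gap is in part a), at precisely the step you describe most vaguely. After the partial-fraction identity and resummation, the off-diagonal contribution to $\norm{B\boldsymbol{\alpha}}_2^2$ splits into $2\sum_{\ell\neq m}\alpha_\ell\overline{\alpha_m}(u_\ell-u_m)^{-2}$ (harmless, bounded by $2\pi^2/(3\delta^2)\norm{\boldsymbol{\alpha}}_2^2$ up to normalization) plus the cross term $\sum_{\ell\neq m}\frac{\alpha_\ell\overline{\alpha_m}}{u_\ell-u_m}(s_\ell-s_m)$ with $s_\ell\triangleq\sum_{k\neq\ell}(u_k-u_\ell)^{-1}$. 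For a general $\boldsymbol{\alpha}$ this is a commutator-type form $\langle(DC-CD)\boldsymbol{\alpha},\boldsymbol{\alpha}\rangle$ with $D=\diag(s_\ell)$; it does \emph{not} reassemble into an expression in $\abs{T}$ and $\norm{\boldsymbol{\alpha}}_2^2$, and since $\max_\ell\abs{s_\ell}$ can be of order $\delta^{-1}\log K$, any direct estimate of it destroys not only the sharp constant but even uniformity in $K$. The Montgomery--Vaughan argument closes this step by first reducing to $\boldsymbol{\alpha}$ an eigenvector of the Hermitian matrix $iB$ (legitimate, since $\abs{T}\leq\norm{B}_{\mathrm{op}}\norm{\boldsymbol{\alpha}}_2^2$ and $\norm{B}_{\mathrm{op}}$ is attained at an eigenvector): the relation $B\boldsymbol{\alpha}=i\lambda\boldsymbol{\alpha}$ with $\lambda\in\R$ makes the two halves of the cross term equal to $\pm i\lambda\sum_\ell s_\ell\abs{\alpha_\ell}^2$ and hence cancel \emph{exactly}, giving $\lambda^2\leq\pi^2/(3\delta^2)+2\pi^2/(3\delta^2)=\pi^2/\delta^2$ directly --- there is no quadratic inequality in $\abs{T}$ to resolve. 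You should replace the ``self-referential inequality'' step by this eigenvector reduction; as written, the argument for a) does not close.
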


Setting $u_k = k$, for $k \in \{1, 2, \ldots, \nbNodes\}$ in \eqref{eq: first generalization hilbert's inequality}, we recover \eqref{eq: Hilbert's inequality} since
\vspace{-0.2cm}
\begin{equation*}
	\delta =\min_{\substack{1 \leq k, \ell \leq \nbNodes \\ k \neq \ell}} \abs{k - \ell} = 1. \\[-0.4cm]
\end{equation*}
Based on Theorem~\ref{thm: first generalization of Hilbert's inequality}, \citet[Thm.~1]{Montgomery1973} proved the large sieve inequality with $\Delta(\nbSamples, \delta^{(w)}) = \nbSamples + 1/\delta^{(w)}$, which, thanks to \eqref{eq: proof selberg 0}, implies $\sigma_\mathrm{max}^2(\vandermondeMat{\nbSamples \times \nbNodes}) \leq \Delta(\nbSamples, \delta^{(w)})$. 
We next adapt the methodology of the proof of \citep[Thm.~1]{Montgomery1973} to derive a lower bound on $\sigma_\mathrm{min}^2(\vandermondeMat{\nbSamples \times \nbNodes})$, and, en route, present the proof of $\sigma_\mathrm{max}^2(\vandermondeMat{\nbSamples \times \nbNodes}) \leq \nbSamples+1/\delta^{(w)}$ provided in \cite{Montgomery1973}. Improving the Montgomery--Vaughan result, by way of Cohen's dilatation trick, to $\sigma_\mathrm{max}^2(\vandermondeMat{\nbSamples \times \nbNodes}) \leq \nbSamples-1+1/\delta^{(w)}$ and combining the result thereof with our new lower bound yields an improvement of the Selberg--Moitra result.

Let $\boldsymbol{x} \triangleq \left(x_1\ x_2\ \ldots\ x_\nbNodes\right)^T \in \C^\nbNodes$. 
For $\node_k = e^{2\pi i\xi_k}$, $k \in \{1, 2, \ldots, \nbNodes\}$, we have
\begin{align}
	\norm{\vandermondeMat{\nbSamples \times \nbNodes}\boldsymbol{x}}_2^2 &= \sum_{n = 0}^{\nbSamples-1}\abs{\sum_{k = 1}^\nbNodes x_k\node_k^n}^2 = \sum_{n = 0}^{\nbSamples-1}\sum_{k, \ell=1}^\nbNodes x_k\overline{x_\ell} e^{2\pi i( \xi_k -  \xi_\ell)n} \notag \\
		&= \sum_{n = 0}^{\nbSamples-1} \left(\sum_{k = 1}^\nbNodes \abs{x_k}^2 + \sum_{\substack{k, \ell = 1 \\ k \neq \ell}}^\nbNodes x_k\overline{x_\ell} e^{2\pi i ( \xi_k -  \xi_\ell)n}\right) \notag \\
		&= \nbSamples\sum_{k = 1}^\nbNodes \abs{x_k}^2 +  \sum_{\substack{k, \ell = 1 \\ k \neq \ell}}^\nbNodes x_k\overline{x_\ell} \left(\sum_{n = 0}^{\nbSamples-1} e^{2\pi i ( \xi_k -  \xi_\ell)n}\right) \notag \\
		&= \nbSamples \norm{\boldsymbol{x}}_2^2 + \sum_{\substack{k, \ell = 1 \\ k \neq \ell}}^\nbNodes x_k\overline{x_\ell} \,\frac{1 - e^{2\pi i( \xi_k -  \xi_\ell)\nbSamples}}{1 - e^{2\pi i( \xi_k -   \xi_\ell)}} \notag \\
		&= \nbSamples \norm{\boldsymbol{x}}_2^2 - \sum_{\substack{k, \ell = 1 \\ k \neq \ell}}^\nbNodes x_k\overline{x_\ell} \,\frac{1 - e^{2\pi i( \xi_k -  \xi_\ell)\nbSamples}}{2i e^{\pi i ( \xi_k -  \xi_\ell)}\sin(\pi( \xi_k -  \xi_\ell))} \notag \\
		&= \nbSamples \norm{\boldsymbol{x}}_2^2 - \underbrace{\sum_{\substack{k, \ell = 1 \\ k \neq \ell}}^\nbNodes \frac{x_k\overline{x_\ell}e^{-\pi i( \xi_k -  \xi_\ell)}}{2i\sin(\pi( \xi_k -  \xi_\ell))}}_{\reversetriangleq \displaystyle X_1} + \underbrace{\sum_{\substack{k, \ell = 1 \\ k \neq \ell}}^\nbNodes \frac{x_k\overline{x_\ell}e^{\pi i( \xi_k -  \xi_\ell)(2\nbSamples-1)}}{2i\sin(\pi( \xi_k -  \xi_\ell))}}_{\reversetriangleq \displaystyle X_2}. \label{eq: equation proof condition number of the Vandermonde matrix with nodes on the unit circle 1} 
\end{align}
As the nodes $\node_1, \node_2, \ldots, \node_\nbNodes$ are, by assumption, pairwise distinct, we have
\begin{equation}
	\delta^{(w)} = \min_{\substack{1 \leq k, \ell \leq \nbNodes \\ k \neq \ell}}\min_{n \in \Z} \abs{ \xi_k -  \xi_\ell + n}> 0,
	\label{eq: minimum distance proof condition number of the Vandermonde matrix on the unit circle} 
\end{equation}
so that application of Theorem~\ref{thm: first generalization of Hilbert's inequality}\ref{enumerate item: first generalization of Hilbert's inequality sinus}) first with $a_k \triangleq x_ke^{-\pi i \xi_k}$, $k \in \{1, 2, \ldots, \nbNodes\}$, 
yields
\begin{equation}
	\abs{X_1} \leq \frac{1}{2\delta^{(w)}}\sum_{k = 1}^\nbNodes \abs{x_ke^{\pi i \xi_k}}^2 = \frac{\norm{\boldsymbol{x}}_2^2}{2\delta^{(w)}} \label{eq: equation proof condition number of the Vandermonde matrix with nodes on the unit circle 2} 
\end{equation}
and then with $a_k \triangleq x_k e^{\pi i \xi_k(2\nbSamples-1)}$, $k \in \{1, 2, \ldots, \nbNodes\}$, results in
\begin{equation}
	\abs{X_2} \leq \frac{1}{2\delta^{(w)}}\sum_{k = 1}^\nbNodes \abs{x_ke^{\pi i \xi_k(2\nbSamples-1)}}^2 = \frac{\norm{\boldsymbol{x}}_2^2}{2\delta^{(w)}}. \label{eq: equation proof condition number of the Vandermonde matrix with nodes on the unit circle 3} 
\end{equation}
Combining \eqref{eq: equation proof condition number of the Vandermonde matrix with nodes on the unit circle 1}, \eqref{eq: equation proof condition number of the Vandermonde matrix with nodes on the unit circle 2}, and \eqref{eq: equation proof condition number of the Vandermonde matrix with nodes on the unit circle 3}, and using the forward and the reverse triangle inequality, we obtain 
\begin{equation}
	(\nbSamples - 1/\delta^{(w)})\norm{\boldsymbol{x}}_2^2 \leq \norm{\vandermondeMat{\nbSamples \times \nbNodes}\boldsymbol{x}}_2^2 \leq (\nbSamples + 1/\delta^{(w)})\norm{\boldsymbol{x}}_2^2,
	\label{eq: equation proof condition number of the Vandermonde matrix with nodes on the unit circle 4} 
\end{equation}
for all $\boldsymbol{x} \in \C^\nbNodes$.
The lower and upper bounds in \eqref{eq: equation proof condition number of the Vandermonde matrix with nodes on the unit circle 4} therefore yield
\begin{align}
	\sigma_\mathrm{min}^2(\vandermondeMat{\nbSamples \times \nbNodes}) &\geq \nbSamples-1/\delta^{(w)} \label{eq: lower bound sigma min Vandermonde matrix first} \\ 
	\sigma_\mathrm{max}^2(\vandermondeMat{\nbSamples \times \nbNodes}) &\leq \nbSamples+1/\delta^{(w)}. \label{eq: upper bound sigma max Vandermonde matrix first} 
\end{align}
The upper bound $\nbSamples + 1/\delta^{(w)}$ in \eqref{eq: upper bound sigma max Vandermonde matrix first} can be refined to $\nbSamples - 1 + 1/\delta^{(w)}$ through Cohen's dilatation trick, explained for the general case of nodes in the unit disk in Section \ref{sec: spectral properties of vandermonde matrices with nodes in the unit disk} (proof of Theorem~\ref{thm: main result general upper bound condition number vandermonde matrix unit disk}).  In summary, we get
\begin{equation}
	\kappa\!\left(\vandermondeMat{\nbSamples \times \nbNodes}\right) = \frac{\sigma_\mathrm{max}(\vandermondeMat{\nbSamples \times \nbNodes})}{\sigma_\mathrm{min}(\vandermondeMat{\nbSamples \times \nbNodes})} \leq \sqrt{\frac{\nbSamples-1+1/\delta^{(w)}}{\nbSamples-1/\delta^{(w)}}},
	\label{eq: slight improvement condition number} 
\end{equation}
for $\nbSamples > 1/\delta^{(w)}$,
which constitutes a slight improvement over the Selberg--Moitra bound \eqref{eq: upper bound condition number for nodes on the unit circle Moitra}.

\subsection{Extremal singular values of Vandermonde matrices with nodes in the unit disk}
\label{sec: spectral properties of vandermonde matrices with nodes in the unit disk} 

We are now ready to proceed to our main result, namely a lower bound on $\sigma_\mathrm{min}^2(\vandermondeMat{\nbSamples \times \nbNodes})$ and an upper bound on $\sigma_\mathrm{max}^2(\vandermondeMat{\nbSamples \times \nbNodes})$ for nodes in the unit disk.
Extending the connection between the extremal singular values of Vandermonde matrices and the Montgomery--Vaughan proof of the large sieve inequality from the unit circle to the unit disk requires a further generalization of Hilbert's inequality as follows.

\begin{thm}[Further generalization of Hilbert's inequality, {\cite[Eq.~5.11]{Graham1981}, \cite{Montgomery1998}}]
	\label{thm: second generalization of hilbert's inequality} 

	Let $\nbNodes \in \N \setminus \{0\}.$ For all $\boldsymbol{\rho} \triangleq \{\rho_k\}_{k = 1}^\nbNodes \in \C^\nbNodes$ with $\rho_k \triangleq \lambda_k + 2\pi i u_k$, where $\lambda_k > 0$ and $u_k \in \R$ is such that 
	\begin{equation*}
		\delta_k \triangleq \min_{\substack{1 \leq \ell \leq \nbNodes \\ \ell \neq k}} \abs{u_k - u_\ell} > 0,
	\end{equation*}
	for all $k \in \{1, 2, \ldots, \nbNodes\}$, we have\footnote{Note that the center term in \eqref{eq: generalization hilbert's inequality case lambda constant} is real-valued as 
	\begin{equation*}
		\overline{\sum_{k, \ell = 1}^M \frac{\alpha_k\overline{\alpha_\ell}}{\rho_k + \overline{\rho_\ell}}} = \sum_{k, \ell = 1}^M \frac{\overline{\alpha_k}\alpha_\ell}{\overline{\rho_k} + \rho_\ell} = \sum_{k, \ell = 1}^M \frac{\alpha_k\overline{\alpha_\ell}}{\rho_k + \overline{\rho_\ell}}.
	\end{equation*}}
	\begin{equation}
		\abs{\sum_{\substack{k, \ell = 1 \\ k \neq \ell}}^K \frac{\alpha_k\overline{\alpha_\ell}}{\rho_k + \overline{\rho_\ell}}} \leq \frac{42}{\pi}\sum_{k = 1}^\nbNodes \frac{\abs{\alpha_k}^2}{\delta_k},
		\label{eq: second generalization hilbert's inequality} 
	\end{equation}
	for all $\boldsymbol{\alpha} \triangleq \{\alpha_k\}_{k = 1}^\nbNodes \in \C^\nbNodes$. 
	Moreover, in the case $\lambda_1 = \lambda_2 = \ldots = \lambda_\nbNodes = \lambda > 0$, \eqref{eq: second generalization hilbert's inequality} can be refined to 
	\begin{equation}
		\frac{1}{\delta(e^{2\lambda/\delta} - 1)}\sum_{k = 1}^\nbNodes \abs{\alpha_k}^2 \leq \sum_{k, \ell = 1}^\nbNodes \frac{\alpha_k\overline{\alpha_\ell}}{\rho_k + \overline{\rho_\ell}} \leq \frac{e^{2\lambda/\delta}}{\delta(e^{2\lambda/\delta} - 1)}\sum_{k = 1}^\nbNodes \abs{\alpha_k}^2,
		\label{eq: generalization hilbert's inequality case lambda constant} 
	\end{equation}
	for all $\boldsymbol{\alpha} \triangleq \{\alpha_k\}_{k = 1}^\nbNodes \in \C^\nbNodes$, where $\delta \triangleq \displaystyle \min_{1 \leq k \leq \nbNodes} \delta_k$.
\end{thm}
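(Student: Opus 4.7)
Both parts of the theorem rest on the Laplace integral representation
\[
\frac{1}{\rho_k + \overline{\rho_\ell}} = \int_0^\infty e^{-(\rho_k + \overline{\rho_\ell})t}\, dt,
\]
valid because $\mathrm{Re}(\rho_k + \overline{\rho_\ell}) = \lambda_k + \lambda_\ell > 0$. Substituting into the bilinear form and interchanging the finite sum with the integral (legitimate by absolute convergence of finite exponential sums) yields
\[
\sum_{k,\ell=1}^\nbNodes \frac{\alpha_k \overline{\alpha_\ell}}{\rho_k + \overline{\rho_\ell}} = \int_0^\infty \left|\sum_{k=1}^\nbNodes \alpha_k e^{-\rho_k t}\right|^2 dt,
\]
which at once settles the real-valuedness noted in the footnote and, in the constant-$\lambda$ setting of Part~2, secures the non-negativity underlying the lower bound of \eqref{eq: generalization hilbert's inequality case lambda constant}.

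For Part~2 with $\lambda_k \equiv \lambda$, the representation specializes to
\[
Q(\boldsymbol{\alpha}) = \int_0^\infty e^{-2\lambda t} |f(t)|^2\, dt, \qquad f(t) = \sum_{k=1}^\nbNodes \alpha_k e^{-2\pi i u_k t}.
\]
The plan is to compare $Q(\boldsymbol{\alpha})$ with a discretization of $|f|^2$ on the lattice $\{n/\delta\}_{n \geq 0}$, exploiting the fact that this lattice spacing is dual to the $\delta$-separation of the frequencies. The geometric prefactors $1/(e^{2\lambda/\delta}-1)$ and $e^{2\lambda/\delta}/(e^{2\lambda/\delta}-1)$ in the claim are precisely $\sum_{n\geq 1}e^{-2\lambda n/\delta}$ and $\sum_{n\geq 0}e^{-2\lambda n/\delta}$, respectively, strongly suggesting the correct intermediate step compares $Q(\boldsymbol{\alpha})$ to a weighted sum of the translated integrals $\int_{n/\delta}^{(n+1)/\delta} |f(t)|^2\, dt$, whose coefficient vectors $\{\alpha_k e^{-2\pi i u_k n/\delta}\}_k$ all share the $\ell^2$-norm $\|\boldsymbol{\alpha}\|_2$. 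Matching the residual Dirichlet-kernel cross-terms is then performed via an Ingham-type near-orthogonality statement on the critical length scale $1/\delta$, where the $\delta$-separation of the $u_k$ enforces that these cross-terms telescope out when averaged against the geometric weight.

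For Part~1 with variable $\lambda_k$, translation self-similarity breaks and the Hermitian form must be controlled directly. I would isolate the non-negative diagonal $\sum_k |\alpha_k|^2/(2\lambda_k)$ and apply Schur's test to the off-diagonal matrix $A_{k\ell} = (\rho_k + \overline{\rho_\ell})^{-1}$, for $k \neq \ell$, with weights calibrated against the $\delta_k$, aiming for an estimate of the form
\[
\left|\sum_{\substack{k,\ell=1\\k\neq\ell}}^\nbNodes \frac{\alpha_k \overline{\alpha_\ell}}{\rho_k + \overline{\rho_\ell}}\right| \leq C \sum_{k=1}^\nbNodes \frac{|\alpha_k|^2}{\delta_k}
\]
with an absolute constant $C$. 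Short-range contributions are controlled through the real-part bound $|\rho_k + \overline{\rho_\ell}| \geq \lambda_k + \lambda_\ell$, while long-range contributions are handled through $|\rho_k + \overline{\rho_\ell}| \geq 2\pi|u_k - u_\ell|$ combined with the separation $|u_k - u_\ell| \geq \max(\delta_k, \delta_\ell)$.

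The main obstacle is extracting the explicit numerical constant $42/\pi$ in Part~1. A bare Schur-type row-sum estimate leaves behind a divergent harmonic factor $\sum_{m\geq 1} 1/m$; achieving an absolute constant requires regularizing short- and long-range contributions simultaneously via Beurling--Selberg or Vaaler-type extremal majorants and minorants of exponential type, whose $L^1$-norms are computed explicitly in Graham--Vaaler \cite{Graham1981} and Montgomery \cite{Montgomery1998} and generate precisely the factor $42/\pi$. Because this extremal-function analysis is peripheral to the main thrust of the present paper, I would invoke Part~1 as a black-box quotation from the cited references rather than reproduce the detailed construction.
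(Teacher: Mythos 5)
The paper does not prove this theorem at all: it is stated as a verbatim quotation from \cite{Graham1981} and \cite{Montgomery1998}, so your decision to invoke part 1 as a black box is exactly what the paper does --- and the paper in fact does the same for part 2 as well. Your opening identity
\[
\sum_{k, \ell = 1}^\nbNodes \frac{\alpha_k\overline{\alpha_\ell}}{\rho_k + \overline{\rho_\ell}} = \int_0^\infty \Big|\sum_{k = 1}^\nbNodes \alpha_k e^{-\rho_k t}\Big|^2\,\mathrm{d}t
\]
is correct, is the standard entry point of the Montgomery--Vaughan/Graham--Vaaler circle of arguments, and does settle the footnote's real-valuedness claim and the non-negativity of the Hermitian form.

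To the extent that you offer an actual proof of the constant-$\lambda$ refinement \eqref{eq: generalization hilbert's inequality case lambda constant}, however, there is a genuine gap. The step in which the ``residual Dirichlet-kernel cross-terms \ldots telescope out when averaged against the geometric weight'' via an ``Ingham-type near-orthogonality statement'' is the entire content of the theorem and is asserted rather than argued; no mechanism is given that makes the off-diagonal terms disappear, and a naive discretization of $\int_0^\infty e^{-2\lambda t}|f(t)|^2\,\mathrm{d}t$ onto the blocks $[n/\delta, (n+1)/\delta]$ does not produce exact cancellation. The actual Graham--Vaaler argument --- which the paper itself points to in the proof of its Corollary~\ref{cor: result to be used for derivation of main result} --- replaces the weight $e^{-2\lambda t}\chi_{[0,\infty)}(t)$ by its Beurling--Selberg extremal minorant and majorant of exponential type $2\pi\delta$; because their Fourier transforms are supported in $[-\delta, \delta]$ and the frequencies $u_k$ are $\delta$-separated, every off-diagonal term vanishes \emph{exactly}, and the surviving diagonal term equals the Fourier transform at the origin, which is precisely where the geometric sums $\delta^{-1}\sum_{n \geq 1} e^{-2\lambda n/\delta}$ and $\delta^{-1}\sum_{n \geq 0} e^{-2\lambda n/\delta}$ enter. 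In other words, the extremal-function machinery you reserve for the $42/\pi$ bound of part 1 is also the engine of part 2; without it your discretization plan does not close, and the sharpness of \eqref{eq: generalization hilbert's inequality case lambda constant} (emphasized in the paper as the reason the equal-moduli bounds are better) is exactly the extremality of those majorants. Your diagnosis of part 1 (failure of a bare Schur test due to the divergent harmonic tail, repaired by extremal functions) is accurate, but since you quote it rather than prove it, that portion is consistent with the paper's treatment.
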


As pointed out in \citep{Montgomery1998} the inequalities in \eqref{eq: generalization hilbert's inequality case lambda constant} are best possible while \eqref{eq: second generalization hilbert's inequality} is not. This will be seen to have important ramifications for the range of validity of our main bounds in Theorem \ref{thm: main result general upper bound condition number vandermonde matrix unit disk} and Corollary~\ref{cor: main result general upper bound condition number vandermonde matrix unit disk}. We furthermore observe that \eqref{eq: first generalization hilbert's inequality} can be recovered from \eqref{eq: generalization hilbert's inequality case lambda constant} by subtracting $\sum_{k = 1}^\nbNodes \abs{\alpha_k}^2/(2\lambda)$ in \eqref{eq: generalization hilbert's inequality case lambda constant} and letting $\lambda \rightarrow 0$. Indeed, the lower bound on $\displaystyle\sum_{\substack{k, \ell = 1 \\ k \neq \ell}}^K \frac{\alpha_k\overline{\alpha_\ell}}{2\pi i(u_k - u_\ell)}$ resulting from \eqref{eq: first generalization hilbert's inequality} can be obtained from \eqref{eq: generalization hilbert's inequality case lambda constant} by noting that
\begin{equation}
	\lim_{\lambda \rightarrow 0} \left(\frac{1}{\delta(e^{2\lambda/\delta} - 1)} - \frac{1}{2\lambda}\right) = \lim_{\lambda \rightarrow 0} \frac{1}{\delta}\left(\frac{1}{e^{2\lambda/\delta}-1} - \frac{1}{2\lambda/\delta}\right) = -\frac{1}{2\delta}
	\label{eq: first limit generalization} 
\end{equation}
and the upper bound is a consequence of
\begin{equation}
	\lim_{\lambda \rightarrow 0} \left(\frac{e^{2\lambda/\delta}}{\delta(e^{2\lambda/\delta} - 1)} - \frac{1}{2\lambda}\right) = \lim_{\lambda \rightarrow 0} \left(\frac{1}{\delta} + \frac{1}{\delta(e^{2\lambda/\delta} - 1)} - \frac{1}{2\lambda}\right) = \frac{1}{2\delta},
	\label{eq: second limit generalization} 
\end{equation}
where \eqref{eq: first limit generalization} follows from l'H\^{o}pital's rule applied twice.

Theorem \ref{thm: second generalization of hilbert's inequality} generalizes Theorem \ref{thm: first generalization of Hilbert's inequality}\ref{enumerate item: first generalization of Hilbert's inequality pas sinus}) from $i\R$ to the complex plane, i.e., the $2\pi iu_k \in i\R$ in \eqref{eq: first generalization hilbert's inequality} are replaced by the $\rho_k = \lambda_k + 2\pi i u_k \in \C$ in \eqref{eq: second generalization hilbert's inequality}. We will also need a corresponding generalization of Theorem~\ref{thm: first generalization of Hilbert's inequality}\ref{enumerate item: first generalization of Hilbert's inequality sinus}). This generalization is formalized in Theorem \ref{thm: result to be used for derivation of main result} and builds on the following result.

\begin{prop}
	\label{prop: equivalence hilbert's inequality sh} 
	Let $A \colon (0,\infty)^3 \rightarrow (0,\infty)$ and $B \colon (0,\infty)^3 \rightarrow (0,\infty)$ be functions satisfying 
	\begin{align*}
		\varepsilon A(\varepsilon x, \varepsilon y, \varepsilon z) &= A(x, y, z) \\
		\varepsilon B(\varepsilon x, \varepsilon y, \varepsilon z) &= B(x, y, z), 
	\end{align*}
	for all $x > 0$, $y > 0$, $z > 0$, and $\varepsilon > 0$.
	The following statements are equivalent:
	\begin{enumerate}[i)]
		\item\label{item 1 prop hilbert sh} 
		For all $M \in \N \!\setminus\! \{0\}$ and $\boldsymbol{\rho} \triangleq \{\rho_k\}_{k = 1}^M \in \C^M$ with $\rho_k \triangleq \lambda_k + 2\pi iu_k$, where $\lambda_k > 0$ and $u_k \in \R$ is such that 
	\begin{equation*}
		\delta_k \triangleq \min_{\substack{1 \leq \ell \leq M \\ \ell \neq k}} \abs{u_k - u_\ell} > 0,
	\end{equation*}
	for all $k \in \{1, 2, \ldots, M\}$,  we have
	\begin{equation}
		\sum_{k = 1}^M A(\lambda_k, \delta_k, \delta) \abs{\alpha_k}^2 \leq \sum_{k, \ell = 1}^M \frac{\alpha_k\overline{\alpha_\ell}}{\rho_k + \overline{\rho_\ell}} \leq \sum_{k = 1}^M B(\lambda_k, \delta_k, \delta)\abs{\alpha_k}^2,
		\label{eq: general hilbert fraction} 
	\end{equation}
	for all $\boldsymbol{\alpha} \triangleq \{\alpha_k\}_{k = 1}^M \in \C^M$, 
	where $\delta \triangleq \displaystyle \min_{1 \leq k \leq M} \delta_k$.
	
		\item\label{item 2 prop hilbert sh} 
		For all $M \in \N \!\setminus\! \{0\}$ and $\boldsymbol{r} \triangleq \{r_k\}_{k = 1}^M \in \C^M$ with $r_k \triangleq d_k + 2\pi i\xi_k$, where $d_k > 0$ and $\xi_k \in \R$ is such that 
	\begin{equation*}
		\delta_k^{(w)} \triangleq \min_{\substack{1 \leq \ell \leq M \\ \ell \neq k}}\min_{n \in \Z} \abs{\xi_k - \xi_\ell + n} > 0,
	\end{equation*}
	for all $k \in \{1, 2, \ldots, M\}$,  we have
	\begin{equation}
		\!\sum_{k = 1}^M 2A(d_k, \delta_k^{(w)}, \delta^{(w)}) \abs{a_k}^2 \leq \!\sum_{k, \ell = 1}^M \frac{a_k\overline{a_\ell}}{\sh((r_k + \overline{r_\ell})/2)} \!\leq \sum_{k = 1}^M 2B(d_k, \delta_k^{(w)}, \delta^{(w)})\abs{a_k}^2\!,
		\label{eq: general hilbert sh} 
	\end{equation}
	for all $\boldsymbol{a} \triangleq \{a_k\}_{k = 1}^M \in \C^M$, 
	where $\delta^{(w)} \triangleq \displaystyle \min_{1 \leq k \leq M} \delta_k^{(w)}$.
	\end{enumerate}
\end{prop}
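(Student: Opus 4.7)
The plan is to link the two statements through the Mittag--Leffler-type partial-fractions identity
\begin{equation*}
	\frac{1}{\sinh(w)} = \sum_{n \in \Z} \frac{(-1)^n}{w + i\pi n},
\end{equation*}
understood as a symmetric limit, which follows from $\pi/\sin(\pi z) = \sum_n (-1)^n/(z - n)$ via the substitution $z = -iw/\pi$. Specializing to $w = (r_k + \overline{r_\ell})/2 = (d_k+d_\ell)/2 + \pi i(\xi_k - \xi_\ell)$ yields
\begin{equation*}
	\frac{1}{2\sinh((r_k+\overline{r_\ell})/2)} = \sum_{n \in \Z} \frac{(-1)^n}{(d_k+d_\ell) + 2\pi i(\xi_k - \xi_\ell + n)},
\end{equation*}
which is precisely the bridge between the two formulations.

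For the implication \ref{item 1 prop hilbert sh}) $\Rightarrow$ \ref{item 2 prop hilbert sh}), I would apply \eqref{eq: general hilbert fraction} to the enlarged system of $M(2N+1)$ nodes $\rho_{k,n} \triangleq d_k + 2\pi i(\xi_k + n)$ with coefficients $\alpha_{k,n} \triangleq (-1)^n a_k$, for $k \in \{1, \ldots, M\}$ and $n \in \{-N, \ldots, N\}$. For $N$ sufficiently large, the per-index minimum-distance parameter satisfies $\delta_{k,n} = \delta_k^{(w)}$ except for at most a bounded number of ``boundary'' indices per $k$, and the global minimum equals $\delta^{(w)}$. Changing the summation variable to $p = n - m$ collapses the middle term to
\begin{equation*}
	\sum_{k,\ell=1}^M a_k\overline{a_\ell} \sum_{p=-2N}^{2N} (2N+1-|p|) \frac{(-1)^p}{(d_k+d_\ell) + 2\pi i(\xi_k - \xi_\ell + p)}.
\end{equation*}
After dividing both sides of \eqref{eq: general hilbert fraction} by $2N+1$ and letting $N \to \infty$, the homogeneous bounds tend to $\sum_k A(d_k, \delta_k^{(w)}, \delta^{(w)}) |a_k|^2$ and $\sum_k B(d_k, \delta_k^{(w)}, \delta^{(w)}) |a_k|^2$ (the $O(1)$ boundary-index contributions vanishing after normalization), while the Fej\'er weights $(1 - |p|/(2N+1))$ turn the middle sum into a Ces\`aro mean of the symmetric partial sums of the bridge identity; its limit is the claimed value, yielding \eqref{eq: general hilbert sh} after multiplication by~$2$.

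The converse \ref{item 2 prop hilbert sh}) $\Rightarrow$ \ref{item 1 prop hilbert sh}) is established by a scaling argument: apply \eqref{eq: general hilbert sh} with $\hat\xi_k \triangleq u_k/R$ and $\hat d_k \triangleq \lambda_k/R$ for $R$ so large that the $\hat\xi_k$ lie in a tiny interval in which wrap-around distances coincide with Euclidean distances (hence $\hat\delta_k^{(w)} = \delta_k/R$ and $\hat\delta^{(w)} = \delta/R$). The homogeneity assumption yields $A(\hat d_k, \hat\delta_k^{(w)}, \hat\delta^{(w)}) = R\,A(\lambda_k, \delta_k, \delta)$ and similarly for $B$; combining this with the limit $R\sinh((\rho_k + \overline{\rho_\ell})/(2R)) \to (\rho_k + \overline{\rho_\ell})/2$ as $R \to \infty$ and dividing through by $R$ recovers \eqref{eq: general hilbert fraction}.

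The main obstacle lies in the first direction, where two convergence issues must be handled simultaneously: (a) the boundary indices $n$ close to $\pm N$---for which $\delta_{k,n}$ may strictly exceed $\delta_k^{(w)}$---contribute only $O(1)$ to the normalized homogeneous sums and thus vanish after dividing by $2N+1$, provided that $A$ and $B$ remain bounded over the relevant compact range of their arguments; and (b) the Fej\'er/Ces\`aro-weighted sum on the right-hand side actually converges to the symmetric-limit value of the bridge identity. The latter follows from the standard fact that Ces\`aro means of a conditionally convergent series---here convergent by Dirichlet's test, since the summands decay like $1/|p|$---inherit the same limit, but combining this with the boundary analysis in a uniform-in-$N$ manner is the most delicate step of the argument.
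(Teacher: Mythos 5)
Your proposal is correct and follows essentially the same route as the paper's proof: the converse direction is the identical dilation/scaling argument exploiting the homogeneity of $A$ and $B$ together with $\lim_{w\to0}\sh(w)/w=1$, and the forward direction is the same periodization with alternating signs $(-1)^n a_k$, collapse to a Fej\'er-weighted sum, and passage to the limit. The only cosmetic differences are that you index over $\{-N,\ldots,N\}$ rather than $\{1,\ldots,N\}$ and justify the limit of the Ces\`aro means via the Mittag--Leffler expansion of $1/\sh$ plus regularity of Ces\`aro summation, whereas the paper obtains the same identity from Fej\'er's theorem applied to the $1$-periodic extension of $t\mapsto e^{-\rho t}$; your explicit treatment of the boundary indices where $\delta_{k,n}$ may exceed $\delta_k^{(w)}$ is in fact slightly more careful than the paper's.
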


\begin{proof}
	See \ref{app: proof of proposition first appendix}.
\end{proof}

Theorem~\ref{thm: second generalization of hilbert's inequality} provides a generalization of Theorem~\ref{thm: first generalization of Hilbert's inequality}\ref{enumerate item: first generalization of Hilbert's inequality pas sinus}). 
Combining Proposition~\ref{prop: equivalence hilbert's inequality sh} with Theorem~\ref{thm: second generalization of hilbert's inequality}, we get the following generalization of Theorem~\ref{thm: first generalization of Hilbert's inequality}\ref{enumerate item: first generalization of Hilbert's inequality sinus}).

\begin{thm}
	\label{thm: result to be used for derivation of main result} 
	Let $\nbNodes \in \N \!\setminus\! \{0\}$. For all $\boldsymbol{r} \triangleq \{r_k\}_{k = 1}^\nbNodes$ with $r_k \triangleq d_k + 2\pi i \xi_k$, where $d_k > 0$ and $\xi_k \in \R$ is such that
	\begin{equation*}
		\delta_k^{(w)} \triangleq \min_{\substack{1 \leq \ell \leq \nbNodes \\ \ell \neq k}}\min_{n \in \Z} \abs{\xi_k - \xi_\ell + n} > 0,
	\end{equation*}
	for all $k \in \{1, 2, \ldots, \nbNodes\}$, we have 
	\begin{equation}
		\sum_{k = 1}^\nbNodes \!\left(\frac{1}{d_k} - \frac{84}{\pi\delta_k^{(w)}}\right)\!\abs{a_k}^2 \leq \sum_{k, \ell = 1}^\nbNodes \frac{a_k\overline{a_\ell}}{\sh((r_k + \overline{r_\ell})/2)} \leq \sum_{k = 1}^\nbNodes \!\left(\frac{1}{d_k} + \frac{84}{\pi \delta_k^{(w)}}\right)\!\abs{a_k}^2, \label{eq: second hilbert's inequality sh general} 
	\end{equation}
	for all $\boldsymbol{a} = \{a_k\}_{k = 1}^\nbNodes \in \C^\nbNodes$. 
	Moreover, in the case $d_1 = d_2 = \ldots = d_\nbNodes = d > 0$, \eqref{eq: second hilbert's inequality sh general} can be refined to
	\begin{equation}
		\frac{2}{\delta^{(w)}(e^{2d/\delta^{(w)}} - 1)}\sum_{k = 1}^\nbNodes \abs{a_k}^2 \leq \sum_{k, \ell = 1}^\nbNodes \frac{a_k\overline{a_\ell}}{\sh((r_k + \overline{r_\ell})/2)} \leq \frac{2e^{2d/\delta^{(w)}}}{\delta^{(w)}(e^{2d/\delta^{(w)}} - 1)}\sum_{k = 1}^\nbNodes \abs{a_k}^2,
		\label{eq: second hilbert's inequality sh particular case} 
	\end{equation}
	for all $\boldsymbol{a} = \{a_k\}_{k = 1}^\nbNodes \in \C^\nbNodes$, 
	where $\delta \triangleq \displaystyle \min_{1 \leq \ell \leq \nbNodes} \delta_k$.
\end{thm}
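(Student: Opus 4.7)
The plan is to derive the theorem by recognizing that both inequalities \eqref{eq: second hilbert's inequality sh general} and \eqref{eq: second hilbert's inequality sh particular case} are exactly the statements one obtains by specializing Proposition~\ref{prop: equivalence hilbert's inequality sh}\ref{item 2 prop hilbert sh}) to the coefficient functions that come out of Theorem~\ref{thm: second generalization of hilbert's inequality}. In other words, Theorem~\ref{thm: second generalization of hilbert's inequality} furnishes instances of statement \ref{item 1 prop hilbert sh}) of Proposition~\ref{prop: equivalence hilbert's inequality sh}, and the equivalence \ref{item 1 prop hilbert sh})$\Leftrightarrow$\ref{item 2 prop hilbert sh}) transfers them to the $\sh$-form required here.

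For the general bound \eqref{eq: second hilbert's inequality sh general}, I would first recast \eqref{eq: second generalization hilbert's inequality} as a two-sided inequality of the form \eqref{eq: general hilbert fraction}. Splitting off the diagonal in $\sum_{k,\ell = 1}^K \alpha_k \overline{\alpha_\ell}/(\rho_k + \overline{\rho_\ell})$ gives $\sum_{k=1}^K |\alpha_k|^2/(2\lambda_k)$, since $\rho_k + \overline{\rho_k} = 2\lambda_k$, while the off-diagonal part is bounded in modulus by $\tfrac{42}{\pi}\sum_{k=1}^K |\alpha_k|^2/\delta_k$ thanks to \eqref{eq: second generalization hilbert's inequality}. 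As the full double sum is real-valued, the triangle and reverse triangle inequalities yield \eqref{eq: general hilbert fraction} with
\begin{equation*}
A(x,y,z) = \frac{1}{2x} - \frac{42}{\pi y}, \qquad B(x,y,z) = \frac{1}{2x} + \frac{42}{\pi y}.
\end{equation*}
A direct check confirms $\varepsilon A(\varepsilon x,\varepsilon y,\varepsilon z) = A(x,y,z)$ and analogously for $B$, so Proposition~\ref{prop: equivalence hilbert's inequality sh} applies and produces \eqref{eq: general hilbert sh} with these $A,B$; plugging them in, the coefficients $2A(d_k, \delta_k^{(w)}, \delta^{(w)})$ and $2B(d_k, \delta_k^{(w)}, \delta^{(w)})$ are exactly $\tfrac{1}{d_k} \mp \tfrac{84}{\pi \delta_k^{(w)}}$, which is \eqref{eq: second hilbert's inequality sh general}.

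For the refined bound \eqref{eq: second hilbert's inequality sh particular case} in the constant-$d$ case, the same route applies, the only difference being that the starting point is \eqref{eq: generalization hilbert's inequality case lambda constant}, which is already in the form \eqref{eq: general hilbert fraction} with
\begin{equation*}
A(x,y,z) = \frac{1}{z(e^{2x/z}-1)}, \qquad B(x,y,z) = \frac{e^{2x/z}}{z(e^{2x/z}-1)}.
\end{equation*}
Both are manifestly $(-1)$-homogeneous in $(x,y,z)$, so Proposition~\ref{prop: equivalence hilbert's inequality sh} again applies and the doubled coefficients $2A$ and $2B$ reproduce the bracketed constants in \eqref{eq: second hilbert's inequality sh particular case}.

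The main subtlety I anticipate lies in the positivity hypothesis of Proposition~\ref{prop: equivalence hilbert's inequality sh}, which requires $A,B \colon (0,\infty)^3 \to (0,\infty)$: in the general case above, $A(x,y,z) = \tfrac{1}{2x} - \tfrac{42}{\pi y}$ can be negative (precisely in the regime where the resulting lower bound is vacuous), whereas in the constant-$d$ case positivity is automatic. I expect this to cause no genuine difficulty because the proof of Proposition~\ref{prop: equivalence hilbert's inequality sh} proceeds by a term-wise linear manipulation that is insensitive to the sign of $A$; alternatively, one may apply the proposition to $A+c$ and $B+c$ for a sufficiently large constant $c > 0$ and then remove the shift. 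Aside from this sign book-keeping, the argument is essentially a direct substitution into the two cited results.
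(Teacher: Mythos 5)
Your proof is correct and follows exactly the paper's route: you add the diagonal $\sum_k \abs{\alpha_k}^2/(2\lambda_k)$ to the Montgomery--Vaaler bound \eqref{eq: second generalization hilbert's inequality} (resp.\ take the Graham--Vaaler bound \eqref{eq: generalization hilbert's inequality case lambda constant} directly) to obtain an inequality of the form \eqref{eq: general hilbert fraction}, and then invoke Proposition~\ref{prop: equivalence hilbert's inequality sh} to pass to the $\sh$-form; your choice $A(x,y,z)=\tfrac{1}{2x}-\tfrac{42}{\pi y}$ is the correct one (the paper's appendix writes $1/x-42/(\pi y)$, a typo, since the doubling in \eqref{eq: general hilbert sh} must produce $1/d_k \mp 84/(\pi\delta_k^{(w)})$). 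Your observation that positivity of $A$ is never used in the proof of the Proposition is the right way to handle the sign issue—the paper silently does the same—whereas your alternative of shifting to $A+c$ would violate the homogeneity hypothesis $\varepsilon A(\varepsilon x,\varepsilon y,\varepsilon z)=A(x,y,z)$, so the first justification is the one to keep.
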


\begin{proof}
	See \ref{app: proof of theorem 2 appendix 2}.
\end{proof}

We note that \eqref{eq: inequality sin} can be recovered from \eqref{eq: second hilbert's inequality sh particular case} by subtracting the diagonal terms $\sum_{k = 1}^\nbNodes \abs{a_k}^2\!/\!\sh(d)$ in \eqref{eq: second hilbert's inequality sh particular case}, letting $d \rightarrow 0$, and noting that $\sh(i\pi\xi) = i\sin(\pi \xi)$, for $\xi \in \R$. Indeed, we have
\begin{align*}
	&\lim_{d \rightarrow 0} \!\left(\frac{2}{\delta^{(w)}(e^{2d/\delta^{(w)}} - 1)} - \frac{1}{\sh(d)}\right) \\
		&\hspace{2.5cm}= \lim_{d \rightarrow 0} 2\!\left[\frac{1}{\delta^{(w)}(e^{2d/\delta^{(w)}} - 1)} - \frac{1}{2d} + \frac{1}{2d}\!\left(\!1 - \frac{d}{\sh(d)}\!\right)\!\right]\! = -\frac{1}{\delta^{(w)}}
\end{align*}
and 
\begin{align*}
	&\lim_{d \rightarrow 0} \!\left(\frac{2e^{2d/\delta^{(w)}}}{\delta^{(w)}(e^{2d/\delta^{(w)}} - 1)} - \frac{1}{\sh(d)}\right)\! \\
		&\hspace{2.5cm}= \lim_{\lambda \rightarrow 0} 2\!\left[\frac{e^{2d/\delta^{(w)}}}{\delta^{(w)}(e^{2d/\delta^{(w)}} - 1)} - \frac{1}{2d} + \frac{1}{2d}\!\left(\!1 - \frac{d}{\sh(d)}\right)\!\right]\! = \frac{1}{\delta^{(w)}},
\end{align*}
where we used \eqref{eq: first limit generalization}, \eqref{eq: second limit generalization}, and $\lim_{z \rightarrow 0} \sh(z)/z = 1$. 

We next show that the constant in the lower bound in \eqref{eq: second hilbert's inequality sh particular case} can be improved through a slight modification of a result by \citet{Graham1981}. This improvement is relevant as it leads to improved bounds on $\sigma_\mathrm{min}^2(\vandermondeMat{\nbSamples \times \nbNodes})$ and $\kappa(\vandermondeMat{\nbSamples \times \nbNodes})$ and to a more general condition for these bounds to be valid.

\begin{cor}
	\label{cor: result to be used for derivation of main result} 
	Let $\nbNodes \in \N \!\setminus\! \{0\}$. For all $d > 0$ and $\boldsymbol{\xi} \triangleq \{\xi_k\}_{k = 1}^\nbNodes \in \R^\nbNodes$ such that
	\begin{equation*}
		\delta^{(w)} \triangleq \min_{\substack{1 \leq k,\ell \leq \nbNodes \\ k \neq \ell}}\min_{n \in \Z} \abs{\xi_k - \xi_\ell + n} > 0,
	\end{equation*}
	for all $k \in \{1, 2, \ldots, \nbNodes\}$, we have 
	\begin{equation}
		\frac{2e^d}{\delta^{(w)}(e^{2d/\delta^{(w)}} - 1)}\!\sum_{k = 1}^\nbNodes \!\abs{a_k}^2\! \leq \!\sum_{k, \ell = 1}^\nbNodes \frac{a_k\overline{a_\ell}}{\sh(d + \pi i(\xi_k - \xi_\ell))}\! \leq \frac{2e^{2d/\delta^{(w)}}}{\delta^{(w)}(e^{2d/\delta^{(w)}} - 1)}\!\sum_{k = 1}^\nbNodes \!\abs{a_k}^2\!,
		\label{eq: second hilbert's inequality sh particular case 2} 
	\end{equation}
	for all $\boldsymbol{a} = \{a_k\}_{k = 1}^\nbNodes \in \C^\nbNodes$. 
\end{cor}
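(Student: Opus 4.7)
The plan is to observe first that the upper bound in \eqref{eq: second hilbert's inequality sh particular case 2} is identical to the upper bound in \eqref{eq: second hilbert's inequality sh particular case} of Theorem~\ref{thm: result to be used for derivation of main result} specialized to $d_1 = \cdots = d_\nbNodes = d$, and therefore requires no new argument: one simply invokes Theorem~\ref{thm: result to be used for derivation of main result}. Only the lower bound is new, and there only a multiplicative factor of $e^d$ has to be extracted.

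For this lower bound I would exploit the elementary identity
\begin{equation*}
	\frac{1}{\sh(w)} = \frac{2 e^w}{e^{2w} - 1}, \qquad w \in \C, \ e^{2w} \neq 1,
\end{equation*}
specialized to $w = d + \pi i(\xi_k - \xi_\ell)$. Introducing $b_k \triangleq a_k e^{\pi i \xi_k}$, which satisfies $\abs{b_k} = \abs{a_k}$ for all $k \in \{1, 2, \ldots, \nbNodes\}$, a direct calculation gives
\begin{equation*}
	\sum_{k, \ell = 1}^\nbNodes \frac{a_k \overline{a_\ell}}{\sh(d + \pi i(\xi_k - \xi_\ell))} = 2 e^d \sum_{k, \ell = 1}^\nbNodes \frac{b_k \overline{b_\ell}}{e^{2d + 2\pi i(\xi_k - \xi_\ell)} - 1},
\end{equation*}
so that the target lower bound is equivalent to
\begin{equation*}
	\sum_{k, \ell = 1}^\nbNodes \frac{b_k \overline{b_\ell}}{e^{2d + 2\pi i(\xi_k - \xi_\ell)} - 1} \geq \frac{1}{\delta^{(w)}(e^{2d/\delta^{(w)}} - 1)} \sum_{k = 1}^\nbNodes \abs{b_k}^2.
\end{equation*}
The factor $e^d$ absent from the lower bound in \eqref{eq: second hilbert's inequality sh particular case} but present in \eqref{eq: second hilbert's inequality sh particular case 2} has thus been produced at the outset through this algebraic rewriting. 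Expanding $1/(e^w - 1) = \sum_{n \geq 1} e^{-nw}$, valid for $\real(w) > 0$, one recognizes the left-hand side as the manifestly non-negative weighted sum $\sum_{n = 1}^\infty e^{-2nd}\big|\sum_{k} b_k e^{-2\pi i n \xi_k}\big|^2$, to which I would then apply the Graham--Vaaler extremal-function machinery of \cite[Eq.~5.11]{Graham1981}---the very machinery that already underlies the proof of Theorem~\ref{thm: result to be used for derivation of main result}---to conclude with the sharp constant $1/(\delta^{(w)}(e^{2d/\delta^{(w)}} - 1))$.

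The step I expect to be the main obstacle is this last one: verifying that Graham's extremal-function construction, when applied to the rewritten sum above, really delivers the sharp constant $1/(\delta^{(w)}(e^{2d/\delta^{(w)}} - 1))$, so that the factor $e^d$ extracted at the outset is preserved and not silently dissipated through the subsequent analysis. This is the sense in which only a \emph{slight} modification of Graham's argument is needed: the majorant/minorant construction and the Poisson-summation bookkeeping are essentially unchanged, but the exponential factors must be tracked carefully so that no $e^{-d}$ sneaks in to cancel the desired gain. Once this is established, multiplying by $2 e^d$ restores the claimed lower bound of \eqref{eq: second hilbert's inequality sh particular case 2} and completes the proof.
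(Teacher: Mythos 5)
Your proposal is correct and follows essentially the same route as the paper: the upper bound is taken verbatim from \eqref{eq: second hilbert's inequality sh particular case}, and the lower bound is exactly the Graham--Vaaler lower bound from \cite{Graham1981}, which the paper simply cites (modulo the factor-of-2 correction it flags in a footnote). Your rewriting via $1/\sh(w) = 2e^w/(e^{2w}-1)$ and the geometric-series expansion correctly isolates where the factors $2$ and $e^d$ originate, and the step you defer to ``Graham's extremal-function machinery'' is precisely the content of the cited result, so no further argument is required beyond what the paper itself invokes.
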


\begin{proof}
	Based on an extremal minorant and an extremal majorant of the function
	\begin{equation*}
		\forall t \in \R, \qquad f(t) = \begin{cases} e^{-2\delta t}, & t \geq 0\\
					0, & t < 0, \end{cases}
	\end{equation*} 
	\citet{Graham1981} showed\footnote{The inequalities provided in \citet{Graham1981} are actually given by
	\begin{equation*}
		\!\frac{e^d}{\delta^{(w)}(e^{2d/\delta^{(w)}} - 1)} \!\sum_{k = 1}^\nbNodes \!\abs{a_k}^2\! \leq \sum_{k, \ell = 1}^\nbNodes \frac{a_k\overline{a_\ell}}{\sh(d + \pi i(\xi_\ell - \xi_k))} \leq \frac{e^de^{2d/\delta^{(w)}}}{\delta^{(w)}(e^{2d/\delta^{(w)}} - 1)}\!\sum_{k = 1}^\nbNodes \!\abs{a_k}^2\!.
	\end{equation*}
	We believe, however, that there is a mathematical typo in \citep{Graham1981} and that a factor of $2$ is missing in the lower and the upper bounds.
	} that for $d > 0$, and $\boldsymbol{\xi} \triangleq \{\xi_k\}_{k = 1}^\nbNodes \in \R^\nbNodes$ such that
	\begin{equation*}
		\delta^{(w)} \triangleq \min_{\substack{1 \leq k, \ell \leq \nbNodes \\ k \neq \ell}}\min_{n \in \Z} \abs{\xi_k - \xi_\ell + n} > 0,
	\end{equation*}
	we have
	\begin{equation}
		\!\frac{2e^d}{\delta^{(w)}(e^{2d/\delta^{(w)}} - 1)} \!\sum_{k = 1}^\nbNodes \!\abs{a_k}^2\! \leq \sum_{k, \ell = 1}^\nbNodes \frac{a_k\overline{a_\ell}}{\sh(d + \pi i(\xi_\ell - \xi_k))} \leq \frac{2e^de^{2d/\delta^{(w)}}}{\delta^{(w)}(e^{2d/\delta^{(w)}} - 1)}\!\sum_{k = 1}^\nbNodes \!\abs{a_k}^2\!,
		\label{eq: new bound graham} 
	\end{equation}
	for all $\boldsymbol{a} = \{a_k\}_{k = 1}^\nbNodes \in \C^\nbNodes$. 
	Since $d > 0$, the lower and upper bounds in \eqref{eq: new bound graham} are larger than those in \eqref{eq: second hilbert's inequality sh particular case}. Combining the improved lower bound in \eqref{eq: new bound graham} with the upper bound in \eqref{eq: second hilbert's inequality sh particular case} yields the desired result.
\end{proof}

We are now ready to establish our new bounds on $\sigma_\mathrm{min}^2(\vandermondeMat{\nbSamples \times \nbNodes})$ and $\sigma_\mathrm{max}^2(\vandermondeMat{\nbSamples \times \nbNodes})$ for nodes $\node_1, \node_2, \ldots, \node_\nbNodes$ in the unit disk.

\begin{thm}[Lower bound on $\sigma_\mathrm{min}^2(\vandermonde{\nbSamples \times \nbNodes}{\node_1, \node_2, \ldots, \node_\nbNodes})$ and upper bound on $\sigma_\mathrm{max}^2(\vandermonde{\nbSamples \times \nbNodes}{\node_1, \node_2, \ldots, \node_\nbNodes})$ for nodes in the unit disk]
	\label{thm: main result general upper bound condition number vandermonde matrix unit disk} 
	Let $\boldsymbol{z} \triangleq \{\node_k\}_{k = 1}^\nbNodes \in \C^\nbNodes$ with $\node_k \triangleq \abs{\node_k}\!e^{2\pi i\xi_k}$ be such that $0 < \abs{\node_k} \leq 1$, $\xi_k \in [0, 1)$, and
	\begin{equation*}
		\delta_k^{(w)} \triangleq \min_{\substack{1 \leq \ell \leq \nbNodes \\ \ell \neq k}} \min_{n \in \Z} \abs{\xi_k - \xi_\ell + n} > 0,
	\end{equation*}
	for all $k \in \{1, 2, \ldots, \nbNodes\}$. The extremal singular values of the \mbox{\!Vandermonde} matrix $\vandermondeMat{\nbSamples \times \nbNodes}$ with nodes $\node_1, \node_2, \ldots, \node_\nbNodes$ satisfy
	\begin{align}
		\sigma_\mathrm{min}^2(\vandermonde{\nbSamples \times \nbNodes}{\node_1, \node_2, \ldots, \node_\nbNodes}) &\geq \mathscr{L}(\nbSamples, \abs{\boldsymbol{z}}\!, \boldsymbol{\delta}^{(w)}) \label{eq: lower bound sigma min general} \\ 
		\sigma_\mathrm{max}^2(\vandermonde{\nbSamples \times \nbNodes}{\node_1, \node_2, \ldots, \node_\nbNodes}) &\leq \min\!\Big\{\mathscr{U}(\nbSamples, \abs{\boldsymbol{z}}\!, \boldsymbol{\delta}^{(w)}), \mathscr{U}(\nbSamples-1, \abs{\boldsymbol{z}}\!, \boldsymbol{\delta}^{(w)})\Big\}, \label{eq: upper bound sigma max general} 
	\end{align}
	where $\abs{\boldsymbol{z}} \triangleq \{\abs{z_k}\}_{k = 1}^\nbNodes$, $\boldsymbol{\delta}^{(w)} \triangleq \{\delta_k^{(w)}\}_{k = 1}^\nbNodes$, and
	\begin{align}
		\mathscr{L}(\nbSamples, \abs{\boldsymbol{z}}\!, \boldsymbol{\delta}^{(w)}) &\triangleq \min_{1 \leq k \leq \nbNodes} \left\{\frac{1}{\abs{\node_k}}\!\left[\varphi_\nbSamples(\abs{\node_k}) - \frac{42}{\pi\delta_k^{(w)}}\!\left(1 + \abs{\node_k}^{2\nbSamples}\right)\right]\right\} \label{eq: definition lower bound L} \\ 
		\mathscr{U}(\nbSamples, \abs{\boldsymbol{z}}\!, \boldsymbol{\delta}^{(w)}) &\triangleq \max_{1 \leq k \leq \nbNodes} \left\{\frac{1}{\abs{\node_k}}\!\left[\varphi_\nbSamples({\abs{\node_k}}) + \frac{42}{\pi\delta_k^{(w)}}\!\left(1 + \abs{\node_k}^{2\nbSamples}\right)\right]\right\}, \label{eq: definition upper bound U} 
	\end{align}
	with
	\begin{equation}
		\forall k \in \{1, 2, \ldots, \nbNodes\}, \quad \varphi_\nbSamples(\abs{\node_k}) \triangleq \begin{cases} \displaystyle\frac{\abs{\node_k}^{2\nbSamples}-1}{2\ln\abs{\node_k}}, & \abs{\node_k} < 1 \\ \nbSamples, & \abs{\node_k} = 1.\end{cases}
		\label{eq: definition phi function} 
	\end{equation}
	Moreover, if $\abs{\node_1} = \abs{\node_2} = \ldots = \abs{\node_\nbNodes} = A < 1$, \eqref{eq: lower bound sigma min general} and \eqref{eq: upper bound sigma max general} can be refined to
	\begin{align}
		\sigma_\mathrm{min}^2(\vandermonde{\nbSamples \times \nbNodes}{\node_1, \node_2, \ldots, \node_\nbNodes}) &\geq \frac{1 - A^{2(\nbSamples + 1/2 - 1/\delta^{(w)})}}{\delta^{(w)}(A^{-2/\delta^{(w)}} - 1)A^2} \label{eq: lower bound sigma min particular case} \\ 
		\sigma_\mathrm{max}^2(\vandermonde{\nbSamples \times \nbNodes}{\node_1, \node_2, \ldots, \node_\nbNodes}) &\leq \frac{A^{-2/\delta^{(w)}}\!\left(1 - A^{2(\nbSamples-1+1/\delta^{(w)})}\right)}{\delta^{(w)}(A^{-2/\delta^{(w)}} - 1)}, \label{eq: upper bound sigma max particular case} 
	\end{align}
	where $\delta^{(w)} \triangleq \displaystyle\min_{1 \leq k \leq \nbNodes} \delta_k^{(w)}$. 
\end{thm}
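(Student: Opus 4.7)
The plan is to extend the unit-circle argument of Section~5.1 to the unit disk by expressing $\norm{\vandermondeMat{\nbSamples \times \nbNodes}\boldsymbol{x}}_2^2$ as a double sum with denominator $\sh(\cdot)$ and then invoking the generalized Hilbert-type inequality of Theorem~\ref{thm: result to be used for derivation of main result}. I first parameterize each node as $\node_k = e^{-r_k}$ with $r_k \triangleq d_k - 2\pi i\xi_k$ and $d_k \triangleq -\ln|\node_k| \geq 0$; then $\node_k\overline{\node_\ell} = e^{-\rho_{k\ell}}$ with $\rho_{k\ell} \triangleq r_k + \overline{r_\ell}$ and $\mathrm{Re}(\rho_{k\ell}/2) = (d_k+d_\ell)/2 \geq 0$. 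Symmetrizing the closed-form geometric sum about its midpoint yields the disk-analogue of the identity underlying~\eqref{eq: equation proof condition number of the Vandermonde matrix with nodes on the unit circle 1},
\begin{equation*}
	\sum_{n=0}^{\nbSamples-1}(\node_k\overline{\node_\ell})^n = \frac{e^{\rho_{k\ell}/2} - e^{-(2\nbSamples-1)\rho_{k\ell}/2}}{2\sh(\rho_{k\ell}/2)}.
\end{equation*}

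The key observation is that both exponentials in the numerator factor as a product of a function of $k$ and a function of $\ell$. With $a_k \triangleq x_k|\node_k|^{-1/2}e^{-\pi i\xi_k}$ and $b_k \triangleq x_k|\node_k|^{(2\nbSamples-1)/2}e^{(2\nbSamples-1)\pi i\xi_k}$, one checks that $x_k\overline{x_\ell}e^{\rho_{k\ell}/2} = a_k\overline{a_\ell}$ and $x_k\overline{x_\ell}e^{-(2\nbSamples-1)\rho_{k\ell}/2} = b_k\overline{b_\ell}$, so
\begin{equation*}
	\norm{\vandermondeMat{\nbSamples \times \nbNodes}\boldsymbol{x}}_2^2 = \frac{1}{2}\sum_{k,\ell=1}^\nbNodes\frac{a_k\overline{a_\ell}}{\sh(\rho_{k\ell}/2)} - \frac{1}{2}\sum_{k,\ell=1}^\nbNodes\frac{b_k\overline{b_\ell}}{\sh(\rho_{k\ell}/2)},
\end{equation*}
with $\norm{\boldsymbol{a}}_2^2 = \sum_k|x_k|^2/|\node_k|$ and $\norm{\boldsymbol{b}}_2^2 = \sum_k|x_k|^2|\node_k|^{2\nbSamples-1}$. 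Both sums have exactly the structure required by Theorem~\ref{thm: result to be used for derivation of main result}; the sign of the imaginary part of $r_k$ is immaterial because the resulting sums are real-valued and the bounds depend only on the wrap-around distances $\delta_k^{(w)}$.

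Applying the lower bound of Theorem~\ref{thm: result to be used for derivation of main result} to the $a$-sum and the upper bound to the $b$-sum produces a lower bound on $\norm{\vandermondeMat{\nbSamples \times \nbNodes}\boldsymbol{x}}_2^2$; swapping roles produces the upper bound. The $1/d_k$ contributions combine as $(|a_k|^2 - |b_k|^2)/(2d_k) = |x_k|^2|\node_k|^{-1}\varphi_\nbSamples(|\node_k|)$, while the $84/(\pi\delta_k^{(w)})$ contributions combine as $\frac{42}{\pi\delta_k^{(w)}}(|a_k|^2+|b_k|^2) = \frac{42}{\pi\delta_k^{(w)}}|x_k|^2|\node_k|^{-1}(1+|\node_k|^{2\nbSamples})$. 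Taking the worst case over $k$ yields \eqref{eq: lower bound sigma min general} with $\mathscr{L}$ as in~\eqref{eq: definition lower bound L} and \eqref{eq: upper bound sigma max general} with $\mathscr{U}(\nbSamples,\abs{\boldsymbol{z}},\boldsymbol{\delta}^{(w)})$ as in~\eqref{eq: definition upper bound U}; the continuous extension $\varphi_\nbSamples(1) = \nbSamples$ handles the boundary case $|\node_k|=1$ uniformly. The $\mathscr{U}(\nbSamples-1,\abs{\boldsymbol{z}},\boldsymbol{\delta}^{(w)})$ alternative inside the $\min$ in~\eqref{eq: upper bound sigma max general} is obtained by Cohen's dilatation trick: pulling an extra factor $e^{\pm\rho_{k\ell}/2}$ out of the symmetrized identity reshuffles the exponents in the numerator so that running the same Hilbert-inequality argument yields the bound with $\nbSamples$ replaced by $\nbSamples-1$, in direct analogy with the unit-circle case.

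Finally, when $|\node_1| = \ldots = |\node_\nbNodes| = A < 1$ all $d_k$ coincide, so the sharper, best-possible two-sided bound of Corollary~\ref{cor: result to be used for derivation of main result} can be substituted for Theorem~\ref{thm: result to be used for derivation of main result}. Plugging in $\norm{\boldsymbol{a}}_2^2 = A^{-1}\norm{\boldsymbol{x}}_2^2$ and $\norm{\boldsymbol{b}}_2^2 = A^{2\nbSamples-1}\norm{\boldsymbol{x}}_2^2$ and using $e^d = A^{-1}$ and $e^{2d/\delta^{(w)}} = A^{-2/\delta^{(w)}}$, a direct calculation collapses to~\eqref{eq: lower bound sigma min particular case} and~\eqref{eq: upper bound sigma max particular case}. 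The main technical obstacle is the algebraic bookkeeping in extracting the rank-one factorizations $a_k\overline{a_\ell}$ and $b_k\overline{b_\ell}$ from $x_k\overline{x_\ell}e^{\pm\rho_{k\ell}/2}$ so that both sums exactly match the hypotheses of Theorem~\ref{thm: result to be used for derivation of main result}; a secondary point is verifying that Cohen's dilatation trick, originally tailored to unimodular nodes, remains effective when $d_k$ is nonzero.
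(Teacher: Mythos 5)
Your decomposition is, up to a cosmetic reparameterization, exactly the one the paper uses: writing $\overline{\node_k}\node_\ell = e^{-(r_k+\overline{r_\ell})}$, symmetrizing the geometric sum about its midpoint to obtain the difference of two Hilbert-type sums with denominator $2\sh((r_k+\overline{r_\ell})/2)$, and absorbing the numerator exponentials into the coefficients $a_k \propto x_k(\overline{\node_k})^{-1/2}$ and $b_k \propto x_k(\overline{\node_k})^{\nbSamples-1/2}$ before invoking Theorem~\ref{thm: result to be used for derivation of main result} (resp.\ Corollary~\ref{cor: result to be used for derivation of main result} in the equal-modulus case). The resulting expressions for the $1/d_k$ and $42/(\pi\delta_k^{(w)})$ contributions are correct, so \eqref{eq: lower bound sigma min general}, the $\mathscr{U}(\nbSamples,\abs{\boldsymbol{z}},\boldsymbol{\delta}^{(w)})$ part of \eqref{eq: upper bound sigma max general}, and \eqref{eq: lower bound sigma min particular case} all go through as you describe. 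One minor point of care: when $\abs{\node_k}=1$ the diagonal terms of your $\sh$-decomposition degenerate ($d_k=0$ is excluded by Theorem~\ref{thm: result to be used for derivation of main result}), so ``the continuous extension $\varphi_\nbSamples(1)=\nbSamples$ handles the boundary case'' has to be backed by an actual limiting argument on a sequence of node configurations approaching the unit circle, as the paper does; this is routine but not automatic.

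The genuine gap is in how you obtain the $\mathscr{U}(\nbSamples-1,\abs{\boldsymbol{z}},\boldsymbol{\delta}^{(w)})$ term and the refined bound \eqref{eq: upper bound sigma max particular case}. ``Pulling an extra factor $e^{\pm\rho_{k\ell}/2}$ out of the symmetrized identity'' cannot reshuffle the exponents: the identity $\frac{1-q^{\nbSamples}}{1-q} = \frac{q^{-1/2}-q^{\nbSamples-1/2}}{2\sh(\rho/2)}$ forces the exponents $-1/2$ and $\nbSamples-1/2$ (any other choice of ``midpoint'' cancels out), so the same Hilbert-inequality argument always returns $\varphi_\nbSamples$, never $\varphi_{\nbSamples-1}$. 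Indeed, your own direct calculation in the equal-modulus case yields $\frac{A^{-2/\delta^{(w)}}(1-A^{2(\nbSamples-1/2+1/\delta^{(w)})})}{\delta^{(w)}(A^{-2/\delta^{(w)}}-1)A}$, which is the intermediate bound \eqref{eq: upper bound sigma max particular case bis} of the paper's Lemma~\ref{lem: bound strictly inside the unit circle}, not \eqref{eq: upper bound sigma max particular case}. Cohen's dilatation trick is a different mechanism: one exploits the $1$-periodicity of $\xi\mapsto\sum_n y_n\abs{\node_k}^n e^{-2\pi i\xi n}$ to write $R$ copies of the sampled energy as samples of a polynomial of degree $(\nbSamples-1)R$ at the points $(\xi_k+r)/R$ with moduli $\abs{\node_k}^{1/R}$ and separations $\delta_k^{(w)}/R$, applies the already-established bound $\mathscr{U}$ to this dilated configuration, divides by $R$, and lets $R\to\infty$; it is this limit that converts $\varphi_\nbSamples$ into $\varphi_{\nbSamples-1}$ and removes the extra factor of $A$. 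Without carrying out this replication argument, the second entry of the $\min$ in \eqref{eq: upper bound sigma max general} and the bound \eqref{eq: upper bound sigma max particular case} remain unproved.
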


\begin{proof}
	See \ref{app: proof of theorem 5 appendix 4}.
\end{proof}

The following upper bound on the condition number $\kappa(\vandermondeMat{\nbSamples \times \nbNodes})$ is an immediate consequence of Theorem \ref{thm: main result general upper bound condition number vandermonde matrix unit disk}.

\begin{cor}[Upper bound on $\kappa(\vandermonde{\nbSamples \times \nbNodes}{\node_1, \node_2, \ldots, \node_\nbNodes})$ for nodes in the unit disk]
	\label{cor: main result general upper bound condition number vandermonde matrix unit disk} 
	Let $\boldsymbol{z} \triangleq \{\node_k\}_{k = 1}^\nbNodes \in \C^\nbNodes$ with $\node_k \triangleq \abs{\node_k}\!e^{2\pi i\xi_k}$ be such that $0 < \abs{\node_k} \leq 1$, $\xi_k \in [0, 1)$, and
	\begin{equation*}
		\delta_k^{(w)} \triangleq \min_{\substack{1 \leq \ell \leq \nbNodes \\ \ell \neq k}} \min_{n \in \Z} \abs{\xi_k - \xi_\ell + n} > 0,
	\end{equation*}
	for all $k \in \{1, 2, \ldots, \nbNodes\}$.
	The spectral condition number satisfies
	\begin{equation}
		\!\kappa(\vandermonde{\nbSamples \times \nbNodes}{\node_1, \node_2, \ldots, \node_\nbNodes}) \leq \sqrt{\frac{\min\!\Big\{\mathscr{U}(\nbSamples, \abs{\boldsymbol{z}}\!, \boldsymbol{\delta}^{(w)}), \mathscr{U}(\nbSamples-1, \abs{\boldsymbol{z}}\!, \boldsymbol{\delta}^{(w)})\Big\}}{\mathscr{L}(\nbSamples, \abs{\boldsymbol{z}}\!, \boldsymbol{\delta}^{(w)})}},
		\label{eq: main result general} 
	\end{equation}
	if for all $k \in \{1, 2, \ldots, \nbNodes\}$,
	\begin{equation}
		\delta_k^{(w)} > \frac{42}{\pi}\!\left(\frac{1 + \abs{\node_k}^{2\nbSamples}}{\varphi_\nbSamples(\abs{\node_k})}\right),
		\label{eq: condition main result general} 
	\end{equation}
		where $\mathscr{L}(\nbSamples, \abs{\boldsymbol{z}}\!, \boldsymbol{\delta}^{(w)})$, $\mathscr{U}(\nbSamples, \abs{\boldsymbol{z}}\!, \boldsymbol{\delta}^{(w)})$, and $\varphi_\nbSamples(\abs{\node_k})$ are defined in \eqref{eq: definition lower bound L}, \eqref{eq: definition upper bound U}, and \eqref{eq: definition phi function}, respectively, $\abs{\boldsymbol{z}} \triangleq \{\abs{z_k}\}_{k = 1}^\nbNodes$, and $\boldsymbol{\delta}^{(w)} \triangleq \{\delta_k^{(w)}\}_{k = 1}^\nbNodes$. Moreover, if $\abs{\node_1} = \abs{\node_2} = \ldots = \abs{\node_\nbNodes} = A < 1$, the spectral condition number satisfies
	\begin{equation}
		\kappa(\vandermonde{\nbSamples \times \nbNodes}{\node_1, \node_2, \ldots, \node_\nbNodes}) \leq A^{-1/\delta^{(w)}}\sqrt{\frac{A^2\!\left(1 - A^{2(\nbSamples-1+1/\delta^{(w)})}\right)}{1 - A^{2(\nbSamples + 1/2 - 1/\delta^{(w)})}}}
		\label{eq: main result particular case} 
	\end{equation}
	if $\nbSamples > 1/\delta^{(w)} - 1/2$. 
\end{cor}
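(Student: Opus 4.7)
The plan is to derive the corollary as a direct consequence of Theorem~\ref{thm: main result general upper bound condition number vandermonde matrix unit disk} by taking the ratio of $\sigma_\mathrm{max}$ over $\sigma_\mathrm{min}$, using the definition of the spectral condition number in \eqref{eq: definition spectral condition number}. The only substantive work lies in identifying the conditions under which the lower bound on $\sigma_\mathrm{min}^2(\vandermondeMat{\nbSamples \times \nbNodes})$ furnished by Theorem~\ref{thm: main result general upper bound condition number vandermonde matrix unit disk} is strictly positive, as only then is the resulting bound on $\kappa(\vandermondeMat{\nbSamples \times \nbNodes})$ meaningful.

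For the general bound \eqref{eq: main result general}, I would apply \eqref{eq: lower bound sigma min general}--\eqref{eq: upper bound sigma max general} directly and observe that $\mathscr{L}(\nbSamples, \abs{\boldsymbol{z}}\!, \boldsymbol{\delta}^{(w)}) > 0$ as defined in \eqref{eq: definition lower bound L} requires, for each $k \in \{1, 2, \ldots, \nbNodes\}$, that
\begin{equation*}
\varphi_\nbSamples(\abs{\node_k}) > \frac{42}{\pi\delta_k^{(w)}}\left(1 + \abs{\node_k}^{2\nbSamples}\right),
\end{equation*}
since $\abs{\node_k} > 0$ by assumption. Rearranging this inequality yields condition \eqref{eq: condition main result general}. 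Taking the square root of the ratio of the upper bound in \eqref{eq: upper bound sigma max general} to the lower bound in \eqref{eq: lower bound sigma min general} then produces \eqref{eq: main result general}.

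For the refined bound \eqref{eq: main result particular case} valid when $\abs{\node_1} = \ldots = \abs{\node_\nbNodes} = A < 1$, I would start from \eqref{eq: lower bound sigma min particular case}--\eqref{eq: upper bound sigma max particular case}. Since $A < 1$ gives $A^{-2/\delta^{(w)}} > 1$ and $\delta^{(w)} > 0$, the denominator $\delta^{(w)}(A^{-2/\delta^{(w)}} - 1)A^2$ in the lower bound on $\sigma_\mathrm{min}^2(\vandermondeMat{\nbSamples \times \nbNodes})$ is automatically positive. For the numerator $1 - A^{2(\nbSamples + 1/2 - 1/\delta^{(w)})}$ to be positive, I need the exponent to be positive, i.e., $\nbSamples > 1/\delta^{(w)} - 1/2$, which is precisely the stated condition. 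Taking the ratio
\begin{equation*}
\frac{\sigma_\mathrm{max}^2(\vandermondeMat{\nbSamples \times \nbNodes})}{\sigma_\mathrm{min}^2(\vandermondeMat{\nbSamples \times \nbNodes})} \leq \frac{A^{-2/\delta^{(w)}}\!\left(1 - A^{2(\nbSamples-1+1/\delta^{(w)})}\right)}{\delta^{(w)}(A^{-2/\delta^{(w)}}-1)} \cdot \frac{\delta^{(w)}(A^{-2/\delta^{(w)}} - 1)A^2}{1 - A^{2(\nbSamples + 1/2 - 1/\delta^{(w)})}}
\end{equation*}
and simplifying by cancelling the factor $\delta^{(w)}(A^{-2/\delta^{(w)}}-1)$ yields
\begin{equation*}
\frac{A^{-2/\delta^{(w)}} A^2 \left(1 - A^{2(\nbSamples-1+1/\delta^{(w)})}\right)}{1 - A^{2(\nbSamples + 1/2 - 1/\delta^{(w)})}},
\end{equation*}
whose square root is exactly the right-hand side of \eqref{eq: main result particular case}. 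No step presents a real obstacle; this is purely a bookkeeping argument once Theorem~\ref{thm: main result general upper bound condition number vandermonde matrix unit disk} is in hand.
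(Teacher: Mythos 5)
Your proposal is correct and follows exactly the paper's own argument: both simply insert the bounds of Theorem~\ref{thm: main result general upper bound condition number vandermonde matrix unit disk} into \eqref{eq: definition spectral condition number} and observe that \eqref{eq: condition main result general} (respectively $\nbSamples > 1/\delta^{(w)} - 1/2$) is precisely what makes the lower bound on $\sigma_\mathrm{min}^2$ strictly positive so that the division is legitimate. The algebraic simplification leading to \eqref{eq: main result particular case} is also carried out correctly.
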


\begin{proof}
	See \ref{app: proof of corollary main result appendix 3}.
\end{proof}

First, we note that the upper bounds in \eqref{eq: upper bound sigma max general} and \eqref{eq: upper bound sigma max particular case} lead to a generalization of the large sieve inequality from the unit circle to the unit disk in the following sense. For $\boldsymbol{z} \triangleq \{\node_k\}_{k = 1}^\nbNodes \in \C^\nbNodes$ such that $\node_k \triangleq \abs{\node_k}\!e^{2\pi i\xi_k}$, $0 < \abs{\node_k} \leq 1$, $\xi_k \in [0, 1)$, and
	\begin{equation*}
		\delta_k^{(w)} \triangleq \min_{\substack{1 \leq \ell \leq \nbNodes \\ \ell \neq k}} \min_{n \in \Z} \abs{\xi_k - \xi_\ell + n} > 0,
	\end{equation*}
	and $\boldsymbol{y} \triangleq \{y_n\}_{n = 0}^{\nbSamples-1} \in \C^\nbSamples$, we have
	\begin{equation*}
		\sum_{k = 1}^\nbNodes \abs{S_{\boldsymbol{y}, \nbSamples}(\node_k)}^2 \leq \Delta(\nbSamples, \abs{\boldsymbol{z}}\!, \boldsymbol{\delta}^{(w)}) \sum_{n = 0}^{\nbSamples-1} \abs{y_n}^2,
	\end{equation*}
	where the trigonometric polynomial \eqref{eq: definition S in the large sieve} in \eqref{eq: def large sieve inequality} is replaced by the polynomial 
	\begin{equation*}
		\forall z \in \C, \qquad S_{\boldsymbol{y}, \nbSamples}(z) \triangleq \sum_{n = 0}^{\nbSamples-1} y_n \overline{z}^n,
	\end{equation*}
	the sieve factor $\Delta(\nbSamples, \abs{\boldsymbol{z}}\!, \boldsymbol{\delta}^{(w)})$ is given by
	\begin{equation*}
		\Delta(\nbSamples, \abs{\boldsymbol{z}}\!, \boldsymbol{\delta}^{(w)}) \!= \!\begin{cases} 
			\displaystyle \frac{A^{-2/\delta^{(w)}}\!\left(1 - A^{2(\nbSamples-1+1/\delta^{(w)})}\right)}{\delta^{(w)}(A^{-2/\delta^{(w)}} - 1)}, & \!\!\!\!\abs{\node_1}\! = \ldots = \!\abs{\node_\nbNodes}\! = A,\\[0.5cm]
			\min\!\Big\{\!\mathscr{U}(\nbSamples, \abs{\boldsymbol{z}}\!, \boldsymbol{\delta}^{(w)}), \mathscr{U}(\nbSamples-1, \abs{\boldsymbol{z}}\!, \boldsymbol{\delta}^{(w)})\!\Big\}, & \!\!\!\!\text{else,}\end{cases}
	\end{equation*}
 	$\abs{\boldsymbol{z}} \triangleq \{\abs{z_k}\}_{k = 1}^\nbNodes$, $\boldsymbol{\delta}^{(w)} \triangleq \{\delta_k^{(w)}\}_{k = 1}^\nbNodes$, and $\mathscr{U}(\nbSamples, \abs{\boldsymbol{z}}\!, \boldsymbol{\delta}^{(w)})$ is as defined in \eqref{eq: definition upper bound U}.

We furthermore note that the bound \eqref {eq: slight improvement condition number}, valid for nodes on the unit circle, can be recovered by letting $A \rightarrow 1$ in \eqref{eq: main result particular case}. Moreover, \eqref{eq: lower bound sigma min Vandermonde matrix first} and \eqref{eq: slight improvement condition number} can be improved to
\begin{equation}
	\sigma_\mathrm{min}^2(\vandermonde{\nbSamples \times \nbNodes}{\node_1, \node_2, \ldots, \node_\nbNodes}) \geq \nbSamples + 1/2 - 1/\delta^{(w)} \label{eq: improvement sigma min unit circle with A} 
\end{equation}
and
\begin{equation}
	\kappa(\vandermonde{\nbSamples \times \nbNodes}{\node_1, \node_2, \ldots, \node_\nbNodes}) \leq \sqrt{\frac{\nbSamples-1+1/\delta^{(w)}}{\nbSamples+1/2-1/\delta^{(w)}}}, \label{eq: condition to hold precision} 
\end{equation}
respectively, by letting $A \rightarrow 1$ in \eqref{eq: lower bound sigma min particular case} and \eqref{eq: main result particular case}, respectively, which leads to the announced  improvement of the Selberg--Moitra bound. Indeed, $\sigma_\mathrm{min}^2(\vandermondeMat{\nbSamples \times \nbNodes})$ and $\kappa(\vandermondeMat{\nbSamples \times \nbNodes})$ are continuous functions of $A$ for $\nbSamples > 1/\delta^{(w)} -1/2$.  
We can therefore establish \eqref{eq: improvement sigma min unit circle with A} and \eqref{eq: condition to hold precision} by taking the limits
\begin{align}
	\frac{1 - A^{2(\nbSamples+1/2 - 1/\delta^{(w)})}}{\delta^{(w)}(A^{-2/\delta^{(w)}} - 1)A^2} &= \underbrace{\frac{e^{2(\nbSamples+1/2-1/\delta^{(w)})\ln(A)} - 1}{2(\nbSamples+1/2-1/\delta^{(w)})\ln(A)}}_{\xrightarrow[A \rightarrow 1]{} 1} \cdot \underbrace{\frac{-(2/\delta^{(w)})\ln(A)}{e^{-(2/\delta^{(w)})\ln(A)} - 1}}_{\xrightarrow[A \rightarrow 1]{} 1} \notag \\
	&\hspace{2cm}\cdot \underbrace{\frac{\nbSamples+1/2-1/\delta^{(w)}}{A^2}}_{\xrightarrow[A \rightarrow 1]{} \nbSamples+1/2 - \frac{1}{\delta^{(w)}}} \notag \\
	&\xrightarrow[A \rightarrow 1]{} \nbSamples\!+\!\frac{1}{2}\!-\!\frac{1}{\delta^{(w)}} \label{eq: limit computation 1} 
\end{align}
and 
\begin{align}
	\frac{A^{-2/\delta^{(w)}}\!\left(1 - A^{2(\nbSamples-1+1/\delta^{(w)})}\right)}{\delta^{(w)}(A^{-2/\delta^{(w)}} - 1)A} &= \underbrace{e^{-(2\delta^{(w)})\ln(A)}}_{\xrightarrow[A \rightarrow 1]{} 1}\cdot\underbrace{\frac{1 - A^{2(\nbSamples-1+1/\delta^{(w)})}}{\delta^{(w)}(A^{-2/\delta^{(w)}} - 1)A}}_{\xrightarrow[A \rightarrow 1]{} \nbSamples-1+1/\delta^{(w)}} \notag \\
		&\xrightarrow[A \rightarrow 1]{} \nbSamples-1+\frac{1}{\delta^{(w)}}, \label{eq: limit computation 2} 
\end{align}
respectively. 
We note that \eqref{eq: condition to hold precision} holds under the condition $\nbSamples > 1/\delta^{(w)} - 1/2$. This improvement is interesting as the condition for validity of the bound on $\kappa(\vandermondeMat{\nbSamples \times \nbNodes})$ no longer excludes the case of square Vandermonde matrices, as is the case for the original Selberg--Moitra bound and our first improvement thereof provided in \eqref{eq: slight improvement condition number}. To see this, simply note that thanks to $\nbSamples > 1/\delta^{(w)}-1/2 \geq \nbNodes -1/2$, the square case $\nbSamples = \nbNodes$ is now allowed as here $\nbSamples = \nbNodes > \nbNodes - 1/2$. Owing to $\delta^{(w)} \leq 1/\nbNodes$ this comes, however, at the cost of the nodes being almost equally spaced. 

We next investigate the qualitative dependence of our bounds on the quantities $\nbSamples$, $\delta_k^{(w)}$, and $\abs{\node_k}$. To this end, we first show that $\varphi_\nbSamples(\abs{z_k})$ is non-decreasing in $\abs{\node_k}$, for fixed $\nbSamples$, and non-increasing in $\nbSamples$, for fixed $\abs{\node_k}$. While the latter follows by inspection, to see the former, we write $\varphi_\nbSamples(\abs{\node_k}) = \nbSamples f(\abs{\node_k}^{2\nbSamples})$ and note that
\begin{equation*}
	f(x) \triangleq \begin{cases} \displaystyle \frac{x-1}{\ln(x)}, & x \in (0, 1) \\
		1, & x = 1 \end{cases}
\end{equation*}
is non-decreasing. Consequently, the lower bound \eqref{eq: lower bound sigma min general} increases and the upper bound \eqref{eq: upper bound sigma max general} decreases as the nodes $\node_k = \abs{\node_k}\!e^{2\pi i\xi_k}$ move closer to the unit circle. 
 Furthermore, $\mathscr{L}(\nbSamples, \abs{\boldsymbol{z}}\!, \boldsymbol{\delta}^{(w)})$ and $\mathscr{U}(\nbSamples, \abs{\boldsymbol{z}}\!, \boldsymbol{\delta}^{(w)})$ are increasing and decreasing in $\delta^{(w)}$, respectively. This allows us to conclude that the upper bound on $\kappa(\vandermondeMat{\nbSamples \times \nbNodes})$ in \eqref{eq: main result general} decreases as the nodes $\node_k = \abs{\node_k}\!e^{2\pi i\xi_k}$ get closer to the unit circle and/or the node frequencies $\xi_k$ are more separated. 
Indeed, we have
\begin{equation*}
	\frac{1+\abs{\node_k}^{2\nbSamples}}{\varphi_\nbSamples(\abs{\node_k})} = -2\ln\abs{\node_k} \!\left(\frac{1 + \abs{\node_k}^{2\nbSamples}}{1 - \abs{\node_k}^{2\nbSamples}}\right) \reversetriangleq h(\abs{\node_k})
\end{equation*}
and note that the function $h \colon x \mapsto -2\ln(x)(1+x^{2\nbSamples})/(1-x^{2\nbSamples})$ is non-increasing. 
The condition \eqref{eq: condition main result general} therefore requires that the wrap-around distance $\delta_k^{(w)}$ increase as $\abs{\node_k}$ gets smaller. 
Specifically, \eqref{eq: condition main result general} is violated if there exists a node $\node_k$ with small modulus $\abs{\node_k}$ together with another node $\node_\ell$ (of arbitrary modulus $\abs{\node_\ell} \leq 1$) so that the wrap-around distance between $\xi_k$ and $\xi_\ell$, i.e., $\min_{n \in \Z} \abs{\xi_k - \xi_\ell + n}$, is small.  This shows that a large minimum distance
\begin{equation*}
	\sigma_k \triangleq \min_{\substack{1 \leq \ell \leq \nbNodes \\  \ell \neq k}} \abs{\node_k - \node_\ell}
\end{equation*}
alone is not enough to guarantee \eqref{eq: condition main result general}. 
Moreover, condition \eqref{eq: condition main result general} excludes the case of nodes placed on a ray emanating from the origin, as here the wrap-around distance equals zero. 
Finally, we emphasize that owing to the large constant $42/\pi$ in \eqref{eq: condition main result general}, which stems from the Montgomery--Vaaler result \eqref{eq: second generalization hilbert's inequality} not being best possible, \eqref{eq: condition main result general} is quite restrictive as it will be satisfied only for nodes $\node_k$ that are very close to the unit circle. 
For $\abs{\node_1} = \abs{\node_2} = \ldots = \abs{\node_\nbNodes} = A$, we not only get a much larger range of validity for our upper bound \eqref{eq: main result particular case} on $\kappa(\vandermonde{\nbSamples \times \nbNodes}{\node_1, \node_2, \ldots, \node_\nbNodes})$ than for the general upper bound \eqref{eq: main result general}, but we also obtain sharper bounds on $\sigma_\mathrm{min}^2(\vandermonde{\nbSamples \times \nbNodes}{\node_1, \node_2, \ldots, \node_\nbNodes})$ and $\sigma_\mathrm{max}^2(\vandermonde{\nbSamples \times \nbNodes}{\node_1, \node_2, \ldots, \node_\nbNodes})$ as the corresponding results are based on \eqref{eq: generalization hilbert's inequality case lambda constant}, which, as pointed out in \cite{Graham1981}, is best possible. One would hope that the constant $42/\pi$ in \eqref{eq: second generalization hilbert's inequality} could be improved to be closer to the corresponding constant $1/2$ in \eqref{eq: first generalization hilbert's inequality} or that $42/\pi$ could be turned into a smaller constant which would possibly depend on $\min_{1 \leq k \leq \nbNodes} \abs{\node_k}$ and/or $\max_{1 \leq k \leq \nbNodes} \abs{\node_k}$ as in the Graham--Vaaler result \eqref{eq: generalization hilbert's inequality case lambda constant}.

\subsection{Comparison to Baz\'an's bound}
\label{sec: discussion bazan's results} 

We finally compare our bounds \eqref {eq: main result general} and \eqref {eq: main result particular case} to Baz\'an's bound \citep{Bazan2000} on $\kappa(\vandermonde{\nbSamples \times \nbNodes}{\node_1, \node_2, \ldots, \node_\nbNodes})$ and start by reviewing Baz\'an's bound.  
It is shown in~\cite[Thm.~6]{Bazan2000} that the spectral condition number of $\vandermondeMat{\nbSamples \times \nbNodes}$, $\nbSamples \geq \nbNodes$, with nodes $\node_k$ in the unit disk satisfies
\begin{equation}
	\frac{\sigma_\mathrm{max}(\mathbf{G}_{\nbSamples})}{A_\mathrm{max}} \leq \kappa(\vandermondeMat{\nbSamples \times \nbNodes}) \leq \frac{1}{2}\left(\eta + \sqrt{\eta^2 - 4}\right)\!.
	\label{eq: bazan quality upper bound 0} 
\end{equation}
Here, $A_\mathrm{max} \triangleq \displaystyle\max_{1 \leq k \leq \nbNodes} \abs{\node_k}$ and $\mathbf{G}_{\nbSamples} \in \C^{\nbNodes \times \nbNodes}$ is a matrix constructed as follows. Let $\widehat{\boldsymbol{f}}\!_\nbSamples \in \C^\nbSamples$ be the minimum $\ell^2$-norm solution of the linear system of equations $(\vandermondeMat{\nbSamples \times \nbNodes})^T\boldsymbol{f} = \boldsymbol{z}_\nbSamples$, where $\boldsymbol{\node}_\nbSamples \triangleq \left(\node_1^\nbSamples\ \node_2^\nbSamples\ \ldots\ \node_\nbNodes^\nbSamples\right)^T \in \C^\nbNodes$. 
Set 
\begin{equation*}
	\mathbf{G}_\nbSamples \triangleq \mathbf{W}_{\nbSamples \times \nbNodes}^H\boldsymbol{\Gamma}_\nbSamples\mathbf{W}_{\nbSamples \times \nbNodes},
\end{equation*}
where
\begin{equation*}
	\mathbf{W}_{\nbSamples \times \nbNodes} \triangleq \overline{\vandermondeMat{\nbSamples \times \nbNodes}\left(\left(\vandermondeMat{\nbSamples \times \nbNodes}\right)^H\vandermondeMat{\nbSamples \times \nbNodes}\right)^{-1/2}} \in \C^{\nbSamples \times \nbNodes},
\end{equation*}
$\boldsymbol{\Gamma}_\nbSamples \triangleq \left(\mathbf{e}_2\ \mathbf{e}_3\ \ldots \ \mathbf{e}_\nbSamples \ \widehat{\boldsymbol{f}}\!_\nbSamples\right) \in \C^{\nbSamples \times \nbSamples}$, and $\boldsymbol{e}_n \in \C^\nbSamples$, for $n \in \{2, 3, \ldots, \nbSamples\}$, is the $n$th unit vector whose elements are all zero apart from the $n$th entry which equals $1$. 
The quantity $\eta$ is given by
\begin{equation}
	\eta \triangleq \nbNodes\left(1 + \frac{D_\nbSamples^2}{(\nbNodes-1)\sigma^2}\right)^{\frac{\nbNodes-1}{2}}\!\!\left(\frac{\psi_\nbSamples(A_\mathrm{max})}{\psi_\nbSamples(A_\mathrm{min})}\right)^{1/2}\!\! - \nbNodes + 2,
	\label{eq: bazan quality upper bound} 
\end{equation}
where $D_\nbSamples^2 \triangleq \norm{\mathbf{G}_\nbSamples}^2 - \big(\abs{\node_1}^2 + \ldots + \abs{\node_\nbNodes}^2\big)$ is the so-called departure of $\mathbf{G}_\nbSamples$ from normality, $\sigma$ is the minimum distance between the nodes $\node_1, \node_2, \ldots, \node_\nbNodes$ as defined in \eqref{eq: minimum distance complex number}, $\psi_\nbSamples(x) \triangleq \sum_{n = 0}^{\nbSamples-1} x^{2n}$, and $A_\mathrm{min} \triangleq \displaystyle\min_{1 \leq k \leq \nbNodes} \abs{\node_k}$. 

Owing to the complicated and, in particular, implicit nature of Baz\'an's bound, it appears difficult to draw crisp conclusions therefrom on the behavior of $\kappa(\vandermondeMat{\nbSamples \times \nbNodes})$ as a function of the nodes $\node_1, \node_2, \ldots, \node_\nbNodes$ and $\nbSamples$.   
It is, however, possible to extract a statement of asymptotic (in $\nbSamples$, for fixed $\nbNodes$) nature from \eqref{eq: bazan quality upper bound 0}. Specifically, it is stated in \citep[Lem.~7]{Bazan2000} that
\begin{align*}
	(\nbNodes-1) + \frac{\prod_{k = 1}^\nbNodes \!\abs{\node_k}^2\!}{1 + \|\widehat{\boldsymbol{f}}\!_\nbSamples\big\|_2^2} - \sum_{k = 1}^\nbNodes \!\abs{\node_k}^2\! \leq D_\nbSamples^2 \leq (\nbNodes-1) + \big\|\widehat{\boldsymbol{f}}\!_\nbSamples\big\|_2^2 + \prod_{k = 1}^\nbNodes \!\abs{\node_k}^2\! - \sum_{k = 1}^\nbNodes \!\abs{z_k}^2\!,
	\label{eq: upper bound DN} 
\end{align*}
for $\nbSamples \geq \nbNodes$. Since 
$\lim_{\nbSamples \rightarrow \infty} \norm{\widehat{\boldsymbol{f}}\!_\nbSamples}_2^2 = 0$ \citep[Thm.~2]{Bazan2000}, we get
\begin{equation}
	\lim_{\nbSamples \rightarrow \infty} D_\nbSamples^2 = (\nbNodes-1) + \prod_{k = 1}^\nbNodes \abs{\node_k}^2 - \sum_{k = 1}^\nbNodes \abs{\node_k}^2.
	\label{eq: limit of DN as N tends to infinity} 
\end{equation}
Now, since for fixed $u_\ell \in [0, 1]$, $\ell \in \{1, 2, \ldots, \nbNodes\} \setminus \{k\}$, the function $u_k \mapsto \prod_{\ell = 1}^\nbNodes u_\ell - \sum_{\ell = 1}^\nbNodes u_\ell$ is non-increasing, the limit in \eqref{eq: limit of DN as N tends to infinity} increases when the nodes $\node_1, \node_2, \ldots, \node_\nbNodes$ move closer to the unit circle. Moreover, \eqref{eq: limit of DN as N tends to infinity} equals zero when $\abs{\node_1} = \abs{\node_2} = \ldots = \abs{\node_\nbNodes} = 1$. Based on \eqref{eq: limit of DN as N tends to infinity} Baz\'an obtained the large--$\nbSamples$ asymptotes of the lower and upper bounds in \eqref{eq: bazan quality upper bound 0}. Specifically, it is shown in \cite[Lem.~8, Cor.~9]{Bazan2000} that the limit $\kappa_a \triangleq \lim_{\nbSamples \rightarrow \infty} \kappa(\vandermondeMat{\nbSamples \times \nbNodes})$ exists, and for $A_\mathrm{min} \leq A_\mathrm{max} < 1$ satisfies
\begin{equation*}
	\frac{1}{A_\mathrm{max}} \leq \kappa_a \leq \frac{1}{2}\left(\eta_a + \sqrt{\eta_a^2 - 4}\right),
\end{equation*}
where
\begin{align*}
	\eta_a \triangleq \nbNodes\left(1 + \frac{1}{\sigma^2} + \frac{\prod_{k = 1}^\nbNodes \abs{\node_k}^2 - \sum_{k = 1}^\nbNodes \abs{\node_k}^2}{(\nbNodes-1)\sigma^2}\right)^{\frac{\nbNodes-1}{2}}\!\!\left(\frac{1 - A_\mathrm{min}^2}{1 - A_\mathrm{max}^2}\right)^{1/2}\! - \nbNodes + 2.
\end{align*}
In addition, it is proven in \cite[Cor.~10]{Bazan2000} that for $\abs{\node_1} = \abs{\node_2} = \ldots = \abs{\node_\nbNodes} = A < 1$ and $1 - A^2 \leq \sigma^2$, one has
\begin{equation}
	\kappa_a \leq \nbNodes 2^{\frac{\nbNodes-1}{2}} - \nbNodes+2.
	\label{eq: Bazan's asymptitic equal A} 
\end{equation}
For $A_\mathrm{min} = A_\mathrm{max} = 1$, it follows from \eqref{eq: limit of DN as N tends to infinity} that $\lim_{\nbSamples \rightarrow \infty} D_\nbSamples^2 = 0$, and hence, by \eqref{eq: bazan quality upper bound 0}, that $\kappa_a \leq 1$, which, together with $\kappa_a \geq 1$, implies $\kappa_a = 1$. 

While these results provide insight into the asymptotic behavior of $D_\nbSamples^2$ as $\nbSamples \rightarrow \infty$, an analysis of the speed of convergence of $D_\nbSamples^2$ to the right-hand side (RHS) of \eqref{eq: limit of DN as N tends to infinity} does not seem to be available in the literature. It therefore appears difficult to draw conclusions about the finite--$\nbSamples$ behavior of $D_\nbSamples^2$. 
In fact, $D_\nbSamples^2$ seems as difficult to characterize, in the finite--$\nbSamples$ regime, as the condition number itself. 
Moreover, the numerical evaluation of $D_\nbSamples^2$ requires the computation of $\widehat{\boldsymbol{f}}_\nbSamples$, which in turn requires solving the linear system of equations $(\vandermondeMat{\nbSamples \times \nbNodes})^T\boldsymbol{f} = \boldsymbol{z}_\nbSamples$. When $\kappa(\vandermondeMat{\nbSamples \times \nbNodes})$ is large, the computation of $D_\nbSamples^2$ and therefore the numerical evaluation of Baz\'an's upper bound can become numerically unstable.

Finally, we compare Baz\'an's upper bound with our results. We start by noting that, owing to the large constant $42/\pi$ in condition \eqref{eq: condition main result general} needed for our bound \eqref{eq: main result general} to hold, Baz\'an's bound is valid for more general node configurations. 
In particular, since our bound \eqref{eq: main result general} holds only for nodes that are very close to the unit circle, a comparison to Baz\'an's bound in the general case is not particularly meaningful. For the special case $\abs{\node_1} = \abs{\node_2} = \ldots = \abs{\node_\nbNodes} = A$, however, our bound \eqref{eq: main result particular case} is based on the Graham--Vaaler result \eqref{eq: generalization hilbert's inequality case lambda constant} for which the (non-optimal) constant $42/\pi$ does not appear.  
Specifically, the asymptote (in $\nbSamples$, with $\nbNodes$ fixed) of our bound \eqref{eq: main result particular case} satisfies
\begin{equation}
	\kappa_a \leq A^{1/2-1/\delta^{(w)}}.
	\label{eq: our bound asymptotic equal A} 
\end{equation}
A general comparison of \eqref{eq: Bazan's asymptitic equal A} and \eqref{eq: our bound asymptotic equal A} is difficult as the two bounds do not depend on the same quantities. We can, however, make specific exemplary statements. For example, for $A \geq 0.8$ and $\delta^{(w)} = 1/\nbNodes$, \eqref{eq: our bound asymptotic equal A} implies $\kappa_a \leq 1.25^{\nbNodes-1/2}$, which improves upon \eqref{eq: Bazan's asymptitic equal A} for $\nbNodes \geq 1$. On the other hand, for $A \geq 1/2$ and equally spaced nodes, i.e., $\delta^{(w)} = 1/\nbNodes$, \eqref{eq: our bound asymptotic equal A} becomes $\kappa_a \leq 2^{\nbNodes - 1/2}$, so that Baz\'an's bound \eqref{eq: Bazan's asymptitic equal A}, for $\nbNodes \geq 4$, is better in that case.
Detailed numerical comparisons between our bound \eqref{eq: main result particular case} and Baz\'an's bound for $\abs{\node_1} = \abs{\node_2} = \ldots = \abs{\node_\nbNodes} = A$ are provided in the next section. 

We finally note that our upper bound on $\kappa(\vandermondeMat{\nbSamples \times \nbNodes})$ is obtained by combining a lower bound on $\sigma_\mathrm{min}^2(\vandermondeMat{\nbSamples \times \nbNodes})$ and an upper bound on $\sigma_\mathrm{max}^2(\vandermondeMat{\nbSamples \times \nbNodes})$ in \eqref{eq: definition spectral condition number}. 
Baz\'an, on the other hand, directly provides a condition number upper bound and does not report individual bounds on $\sigma_\mathrm{min}^2(\vandermondeMat{\nbSamples \times \nbNodes})$ and $\sigma_\mathrm{max}^2(\vandermondeMat{\nbSamples \times \nbNodes})$.


\section{Numerical results}
\label{sec: numerical results}

We consider the case $\abs{\node_1} = \abs{\node_2} = \ldots = \abs{\node_\nbNodes} = A$ and compare our bound \eqref{eq: main result particular case} to Baz\'an's bound by averaging over $500$ randomly selected node configurations. Specifically, for each $d \in (0, 0.5]$, we construct an $\nbSamples \times \nbNodes$ Vandermonde matrix with nodes $\node_k \triangleq Ae^{2\pi i\xi_k}$, for $k \in \{1, 2, \ldots, \nbNodes\}$, where $\nbSamples = 100$, $A$ varies from $0.1$ to $1$, $\nbNodes$ is chosen uniformly at random in the set $\left\{2, 3, \ldots, \lfloor 1/d\rfloor\right\}$, $\xi_k = k/\nbNodes+r_k$, and $r_k$ is chosen uniformly at random in the interval $\left[0, 1/\nbNodes-d\right]$. The minimum wrap-around distance $\delta^{(w)}$ between the $\xi_k$ is therefore guaranteed to satisfy $\delta^{(w)} \geq d$.  
The results are depicted in Figures~\ref{fig: comparison Bazan's bound} and \ref{fig: comparison Bazan's bound 1}. We observe that for $d \leq 0.1$, our bound is much tighter than Baz\'an's bound. For $d = 0.18$ (Figure \ref{fig: comparison Bazan's bound 1}d), Baz\'an's bound is tighter than our bound for small values of $A$.  
For $d \geq 0.2$ (not depicted), Baz\'an's bound is slightly tighter than our bound, and this for all values of $A$. However, as $\delta^{(w)} \leq 1/\nbNodes$ and our construction guarantees that $\delta^{(w)} \geq d$, $d \geq 0.2$ implies $\nbNodes \leq 5$, which is small compared to $\nbSamples = 100$; the assumption $d \geq 0.2$ leading to Baz\'an's bound being tighter than our bound \eqref{eq: main result particular case} is therefore quite restrictive. We finally note that the curve corresponding to Baz\'an's bound in the case $d = 0.05$ (Figure~\ref{fig: comparison Bazan's bound 1}a) is wiggly for small values of $A$. This is because numerical evaluation of Baz\'an's bound involves solving the linear system $(\vandermondeMat{\nbSamples \times \nbNodes})^T\boldsymbol{f} = \boldsymbol{z}_\nbSamples$ and $\vandermondeMat{\nbSamples \times \nbNodes}$ is ill-conditioned in these cases, leading to numerical instability.

\begin{figure}
	\ifthenelse{\equal{\versionColor}{true}}{
		\ifthenelse{\equal{\pdfFigs}{true}}{
			\includegraphics[width = 0.94\textwidth]{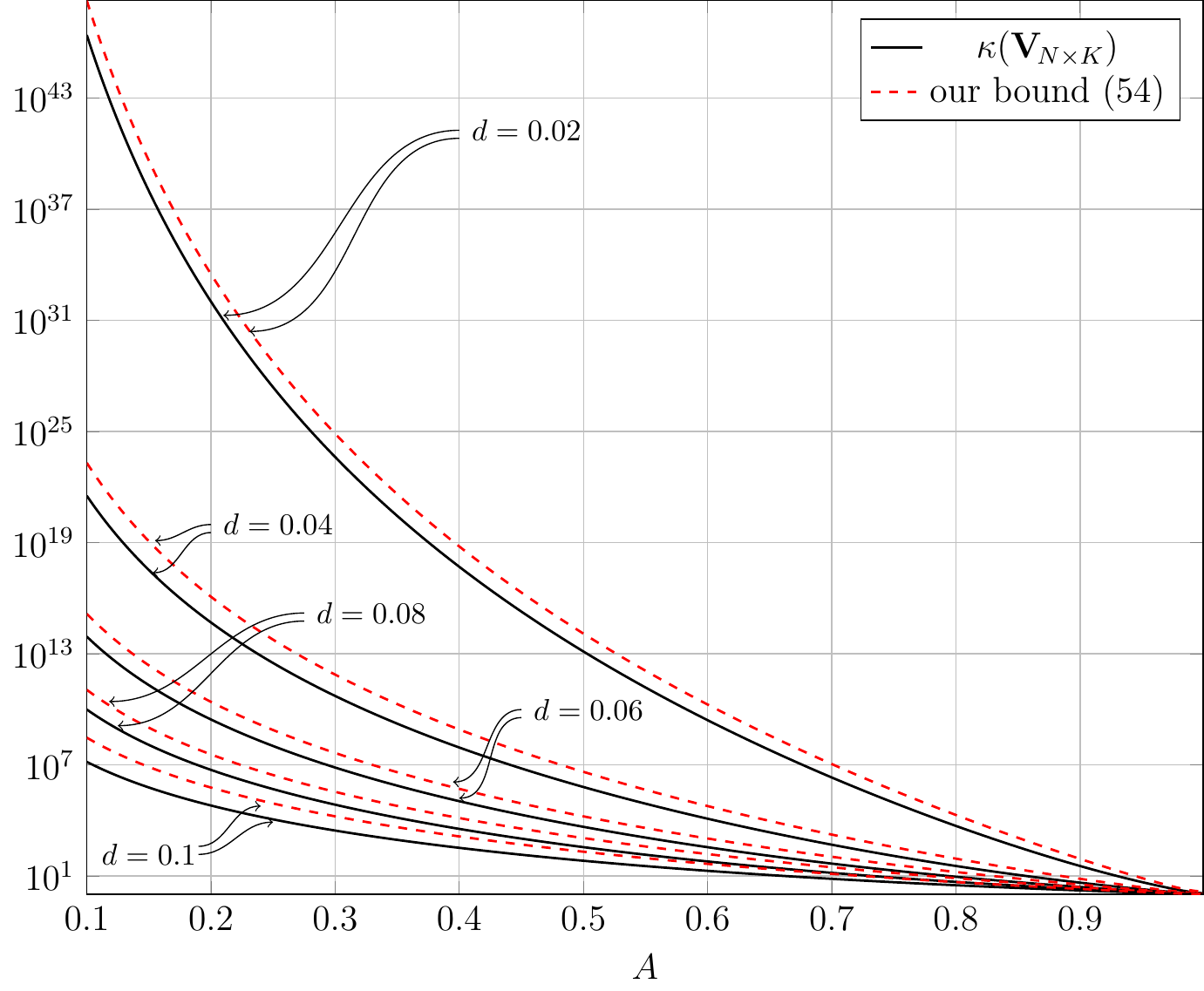}
		}{
			\ifthenelse{\equal{\generateFigs}{yes}}{\tikzsetnextfilename{numericalResults1}}{}
			\begin{tikzpicture}
				\begin{semilogyaxis}
			                    [no markers,
			                    width = \textwidth, height = 12cm,
			                    xlabel = {$A$},
			                    grid = both,
			                    enlarge x limits = false, enlarge y limits = false,
						]       
					   \addplot[thick, black] file {data/conditionNumber_N100_100delta2nbTrials500.dat};       			
			                    \addplot[thick, red, dashed] file {data/ourBound_N100_100delta2nbTrials500.dat};
			
			                    \addplot[thick, black] file {data/conditionNumber_N100_100delta4nbTrials500.dat};
			                    \addplot[thick, red, dashed] file {data/ourBound_N100_100delta4nbTrials500.dat};
			                    
			                    \addplot[thick, black] file {data/conditionNumber_N100_100delta6nbTrials500.dat};
			                    \addplot[thick, red, dashed] file {data/ourBound_N100_100delta6nbTrials500.dat};
			                    
			                    \addplot[thick, black] file {data/conditionNumber_N100_100delta8nbTrials500.dat};
			                     \addplot[thick, red, dashed] file {data/ourBound_N100_100delta8nbTrials500.dat};
			                    
			                    \addplot[thick, black] file {data/conditionNumber_N100_100delta10nbTrials500.dat};
			                    \addplot[thick, red, dashed] file {data/ourBound_N100_100delta10nbTrials500.dat};
			                    
			                     \addlegendentry{$\kappa(\vandermondeMat{\nbSamples \times \nbNodes})$}
			                     \addlegendentry{our bound \eqref{eq: main result particular case}}

					   \node[anchor=west] (curve2_1) at (300,95) {{\footnotesize $d = 0.02$}};
			                    \draw (curve2_1) edge[out=180,in=0,->] (110,72);
			                    \node[anchor=west] (curve2_2) at (300,94) {};
			                    \draw (curve2_2) edge[out=180,in=0,->] (131,70);
			                    
			                    \node[anchor=west] (curve4_1) at (100,46) {{\footnotesize $d = 0.04$}};
			                    \draw (curve4_1) edge[out=180,in=0,->] (55,44);
			                    \node[anchor=west] (curve4_2) at (100,45) {};
			                    \draw (curve4_2) edge[out=180,in=0,->] (53,40);
			                    
			                    \node[anchor=west] (curve6_1) at (350,23) {{\footnotesize $d = 0.06$}};
			                    \draw (curve6_1) edge[out=180,in=0,->] (295,14);
			                    \node[anchor=west] (curve6_2) at (350,22) {};
			                    \draw (curve6_2) edge[out=180,in=0,->] (300,12);
						
			                    \node[anchor=west] (curve8_1) at (175,35) {{\footnotesize $d = 0.08$}};
			                    \draw (curve8_1) edge[out=180,in=0,->] (18,24);
			                    \node[anchor=west] (curve8_2) at (175,34) {};
			                    \draw (curve8_2) edge[out=180,in=0,->] (25,21);
			                    
			                    \node[anchor=east] (curve10_1) at (90,5) {{\footnotesize $d = 0.1$\!\!}};
			                    \draw (curve10_1) edge[out=0,in=180,->] (150,9);
			                    \node[anchor=east] (curve10_2) at (90,6) {};
			                    \draw (curve10_2) edge[out=0,in=180,->] (140,11);
				\end{semilogyaxis}
			\end{tikzpicture}
		}
	}{
		\ifthenelse{\equal{\pdfFigs}{true}}{
			\includegraphics[width = 0.95\textwidth]{numericalResults1BW.pdf}
		}{
			\ifthenelse{\equal{\generateFigs}{yes}}{\tikzsetnextfilename{numericalResults1BW}}{}
			\begin{tikzpicture}
				\begin{semilogyaxis}
			                    [no markers,
			                    width = \textwidth, height = 12cm,
			                    xlabel = {$A$},
			                    grid = both,
			                    enlarge x limits = false, enlarge y limits = false,
						]       
					   \addplot[thick, black] file {data/conditionNumber_N100_100delta2nbTrials500.dat};       			
			                    \addplot[thick, gray, dashed] file {data/ourBound_N100_100delta2nbTrials500.dat};
			
			                    \addplot[thick, black] file {data/conditionNumber_N100_100delta4nbTrials500.dat};
			                    \addplot[thick, gray, dashed] file {data/ourBound_N100_100delta4nbTrials500.dat};
			                    
			                    \addplot[thick, black] file {data/conditionNumber_N100_100delta6nbTrials500.dat};
			                    \addplot[thick, gray, dashed] file {data/ourBound_N100_100delta6nbTrials500.dat};
			                    
			                    \addplot[thick, black] file {data/conditionNumber_N100_100delta8nbTrials500.dat};
			                     \addplot[thick, gray, dashed] file {data/ourBound_N100_100delta8nbTrials500.dat};
			                    
			                    \addplot[thick, black] file {data/conditionNumber_N100_100delta10nbTrials500.dat};
			                    \addplot[thick, gray, dashed] file {data/ourBound_N100_100delta10nbTrials500.dat};
			                    
			                     \addlegendentry{$\kappa(\vandermondeMat{\nbSamples \times \nbNodes})$}
			                     \addlegendentry{our bound \eqref{eq: main result particular case}}

					   \node[anchor=west] (curve2_1) at (300,95) {{\footnotesize $d = 0.02$}};
			                    \draw (curve2_1) edge[out=180,in=0,->] (110,72);
			                    \node[anchor=west] (curve2_2) at (300,94) {};
			                    \draw (curve2_2) edge[out=180,in=0,->] (131,70);
			                    
			                    \node[anchor=west] (curve4_1) at (100,46) {{\footnotesize $d = 0.04$}};
			                    \draw (curve4_1) edge[out=180,in=0,->] (55,44);
			                    \node[anchor=west] (curve4_2) at (100,45) {};
			                    \draw (curve4_2) edge[out=180,in=0,->] (53,40);
			                    
			                    \node[anchor=west] (curve6_1) at (350,23) {{\footnotesize $d = 0.06$}};
			                    \draw (curve6_1) edge[out=180,in=0,->] (295,14);
			                    \node[anchor=west] (curve6_2) at (350,22) {};
			                    \draw (curve6_2) edge[out=180,in=0,->] (300,12);
						
			                    \node[anchor=west] (curve8_1) at (175,35) {{\footnotesize $d = 0.08$}};
			                    \draw (curve8_1) edge[out=180,in=0,->] (18,24);
			                    \node[anchor=west] (curve8_2) at (175,34) {};
			                    \draw (curve8_2) edge[out=180,in=0,->] (25,21);
			                    
			                    \node[anchor=east] (curve10_1) at (90,5) {{\footnotesize $d = 0.1$\!\!}};
			                    \draw (curve10_1) edge[out=0,in=180,->] (150,9);
			                    \node[anchor=east] (curve10_2) at (90,6) {};
			                    \draw (curve10_2) edge[out=0,in=180,->] (140,11);
				\end{semilogyaxis}
			\end{tikzpicture}
		}	
	}
	\caption{Comparison of our upper bound \eqref{eq: main result particular case} to the true (average) condition number.}
	\label{fig: comparison Bazan's bound} 
\end{figure}

\begin{figure}
	\ifthenelse{\equal{\versionColor}{true}}{
		\subfigure[$d = 0.05$]{
			\ifthenelse{\equal{\pdfFigs}{true}}{
				\includegraphics[width = 0.95\textwidth, height=0.3\textwidth]{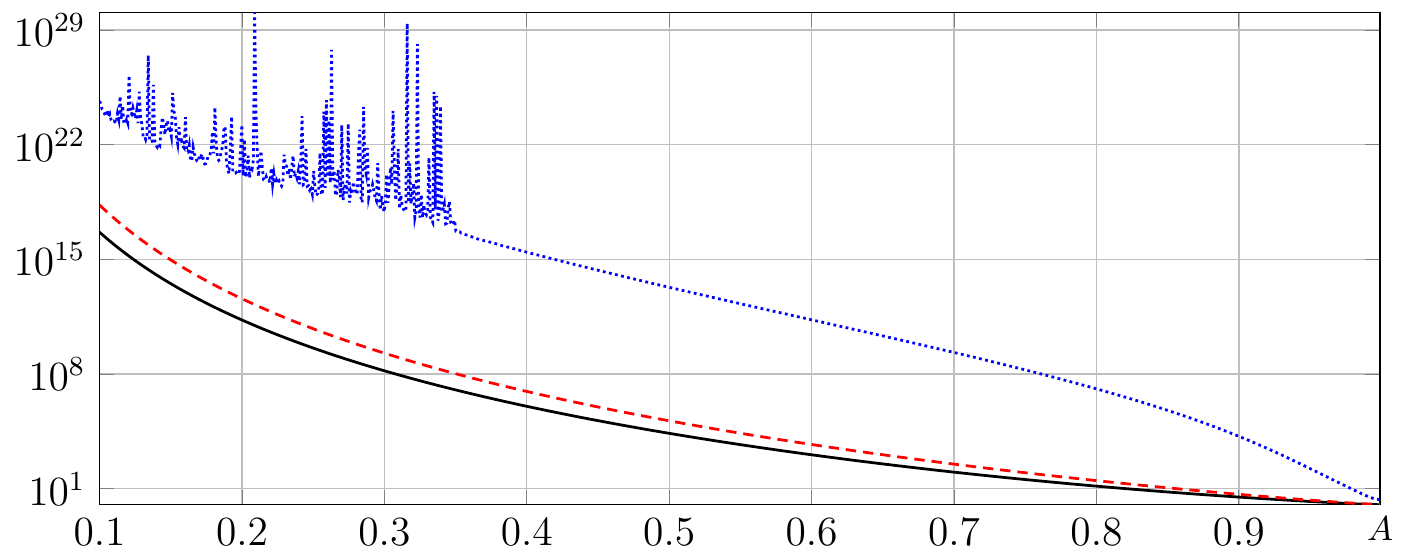}
			}{
				\ifthenelse{\equal{\generateFigs}{yes}}{\tikzsetnextfilename{numericalResults2-2}}{}
				\begin{tikzpicture}
					\begin{semilogyaxis}
				                    [no markers,
				                    width = \textwidth, height = 0.3\textheight,
				                    xlabel = {\footnotesize $A$},
				                    grid = both,
				                    enlarge x limits = false, enlarge y limits = false, 
				             	   every axis x label/.style={ 
				                    	at={(current axis.right of origin)},
				                    	anchor=west,above=-5mm},
						   ]              			
				                    \addplot[thick, black] file {data/conditionNumber_N100_100delta5nbTrials500.dat};
				                    \addplot[thick, red, densely dashed] file {data/ourBound_N100_100delta5nbTrials500.dat};
				                    \addplot[thick, blue, densely dotted] file {data/bazan_N100_100delta5nbTrials500.dat};
					\end{semilogyaxis}
				\end{tikzpicture}
			}
		}
		\subfigure[$d = 0.10$]{
			\ifthenelse{\equal{\pdfFigs}{true}}{
				\includegraphics[width = 0.95\textwidth]{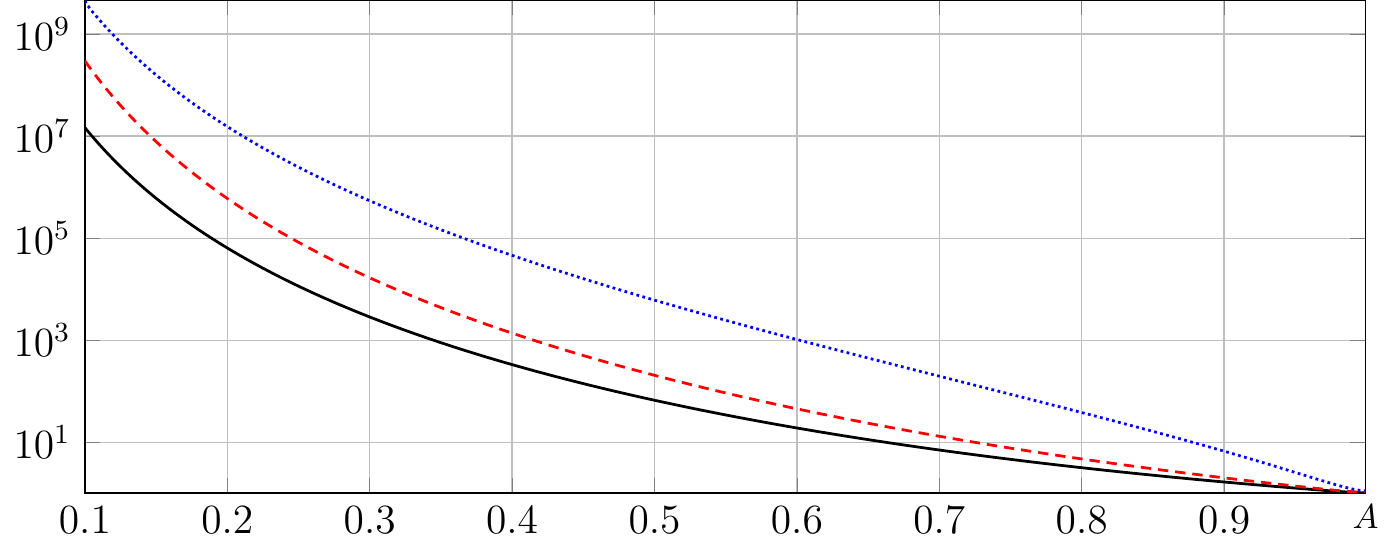}
			}{
				\ifthenelse{\equal{\generateFigs}{yes}}{\tikzsetnextfilename{numericalResults2-3}}{}
				\begin{tikzpicture}
					\begin{semilogyaxis}
				                    [no markers,
				                    width = \textwidth, height = 0.3\textheight,
				                    xlabel = {\footnotesize $A$},
				                    grid = both,
				                    enlarge x limits = false, enlarge y limits = false, 
				     	            every axis x label/.style={ 
				                    	at={(current axis.right of origin)},
				                    	anchor=west,above=-5mm},
						   ]              			
				                    \addplot[thick, black] file {data/conditionNumber_N100_100delta10nbTrials500.dat};
				                    \addplot[thick, red, densely dashed] file {data/ourBound_N100_100delta10nbTrials500.dat};
				                    \addplot[thick, blue, densely dotted] file {data/bazan_N100_100delta10nbTrials500.dat};
					\end{semilogyaxis}
				\end{tikzpicture}
			}
		}
		\subfigure[$d = 0.18$]{
			\ifthenelse{\equal{\pdfFigs}{true}}{
				\includegraphics[width = 0.95\textwidth]{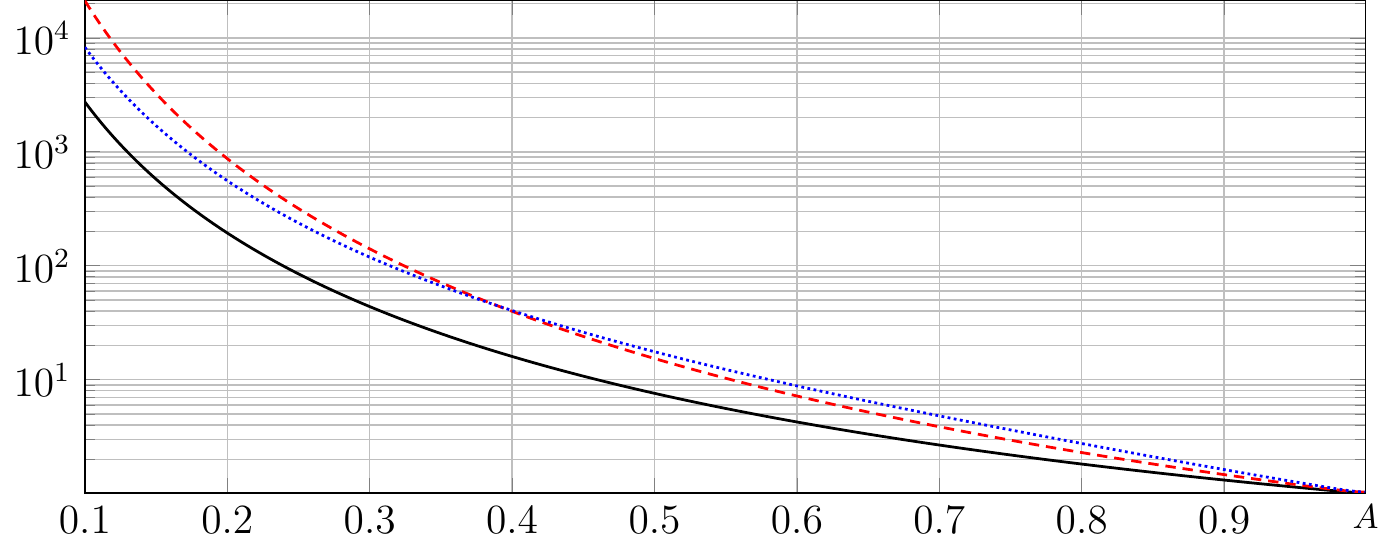}
			}{
				\ifthenelse{\equal{\generateFigs}{yes}}{\tikzsetnextfilename{numericalResults2-4}}{}
				\begin{tikzpicture}
					\begin{semilogyaxis}
				                    [no markers,
				                    width = \textwidth, height = 0.3\textheight,
				                    xlabel = {\footnotesize $A$},
				                    grid = both,
				                    enlarge x limits = false, enlarge y limits = false, 
				                    every axis x label/.style={ 
				                    	at={(current axis.right of origin)},
				                    	anchor=west,above=-5mm},
						   ]              			
				                    \addplot[thick, black] file {data/conditionNumber_N100_100delta18nbTrials500.dat};
				                    \addplot[thick, red, densely dashed] file {data/ourBound_N100_100delta18nbTrials500.dat};
				                    \addplot[thick, blue, densely dotted] file {data/bazan_N100_100delta18nbTrials500.dat};
					\end{semilogyaxis}
				\end{tikzpicture}
			}
		}
	}{
		\subfigure[$d = 0.05$]{
			\ifthenelse{\equal{\pdfFigs}{true}}{
				\includegraphics[width = 0.95\textwidth]{numericalResults2-2BW.pdf}
			}{
				\ifthenelse{\equal{\generateFigs}{yes}}{\tikzsetnextfilename{numericalResults2-2BW}}{}
				\begin{tikzpicture}
					\begin{semilogyaxis}
				                    [no markers,
				                    width = \textwidth, height = 0.3\textheight,
				                    xlabel = {\footnotesize $A$},
				                    grid = both,
				                    enlarge x limits = false, enlarge y limits = false, 
				             	   every axis x label/.style={ 
				                    	at={(current axis.right of origin)},
				                    	anchor=west,above=-5mm},
						   ]              			
				                    \addplot[thick, black] file {data/conditionNumber_N100_100delta5nbTrials500.dat};
				                    \addplot[thick, gray, densely dashed] file {data/ourBound_N100_100delta5nbTrials500.dat};
				                    \addplot[thick, gray, densely dotted] file {data/bazan_N100_100delta5nbTrials500.dat};
					\end{semilogyaxis}
				\end{tikzpicture}
			}
		}
		\subfigure[$d = 0.10$]{
			\ifthenelse{\equal{\pdfFigs}{true}}{
				\includegraphics[width = 0.95\textwidth]{numericalResults2-3BW.pdf}
			}{
				\ifthenelse{\equal{\generateFigs}{yes}}{\tikzsetnextfilename{numericalResults2-3BW}}{}
				\begin{tikzpicture}
					\begin{semilogyaxis}
				                    [no markers,
				                    width = \textwidth, height = 0.3\textheight,
				                    xlabel = {\footnotesize $A$},
				                    grid = both,
				                    enlarge x limits = false, enlarge y limits = false, 
				     	            every axis x label/.style={ 
				                    	at={(current axis.right of origin)},
				                    	anchor=west,above=-5mm},
						   ]              			
				                    \addplot[thick, black] file {data/conditionNumber_N100_100delta10nbTrials500.dat};
				                    \addplot[thick, gray, densely dashed] file {data/ourBound_N100_100delta10nbTrials500.dat};
				                    \addplot[thick, gray, densely dotted] file {data/bazan_N100_100delta10nbTrials500.dat};
					\end{semilogyaxis}
				\end{tikzpicture}
			}
		}
		\subfigure[$d = 0.18$]{
			\ifthenelse{\equal{\pdfFigs}{true}}{
				\includegraphics[width = 0.95\textwidth]{numericalResults2-4BW.pdf}
			}{
				\ifthenelse{\equal{\generateFigs}{yes}}{\tikzsetnextfilename{numericalResults2-4BW}}{}
				\begin{tikzpicture}
					\begin{semilogyaxis}
				                    [no markers,
				                    width = \textwidth, height = 0.3\textheight,
				                    xlabel = {\footnotesize $A$},
				                    grid = both,
				                    enlarge x limits = false, enlarge y limits = false, 
				                    every axis x label/.style={ 
				                    	at={(current axis.right of origin)},
				                    	anchor=west,above=-5mm},
						   ]              			
				                    \addplot[thick, black] file {data/conditionNumber_N100_100delta18nbTrials500.dat};
				                    \addplot[thick, gray, densely dashed] file {data/ourBound_N100_100delta18nbTrials500.dat};
				                    \addplot[thick, gray, densely dotted] file {data/bazan_N100_100delta18nbTrials500.dat};
					\end{semilogyaxis}
				\end{tikzpicture}
			}
		}	
	}
	\caption{Comparison of our upper bound \eqref{eq: main result particular case} (dashed) and Baz\'an's upper bound \cite[Thm.~6]{Bazan2000} (dotted) with the condition number $\kappa(\vandermondeMat{\nbSamples \times \nbNodes})$ (solid).}
	\label{fig: comparison Bazan's bound 1} 
\end{figure}

\appendix

\section{Proof of Proposition \ref{prop: equivalence hilbert's inequality sh}}
\label{app: proof of proposition first appendix} 

We start by showing \ref{item 2 prop hilbert sh}) $\Rightarrow$ \ref{item 1 prop hilbert sh}), which is accomplished using a scaling argument that generalizes the scaling argument in the proof of \citep[Thm.~2]{Montgomery1974}.   
	To this end, let $M \in \N \!\setminus\! \{0\}$, $\boldsymbol{\alpha} \triangleq \{\alpha_k\}_{k = 1}^M \in \C^M$, consider $\boldsymbol{\rho} \triangleq \{\rho_k\}_{k = 1}^M \in \C^M$ with $\rho_k \triangleq \lambda_k + 2\pi iu_k$, $\lambda_k > 0$, and $u_k \in \R$ such that 
	\begin{equation}
		\delta_k \triangleq \min_{\substack{1 \leq \ell \leq M \\ \ell \neq k}} \abs{u_k - u_\ell} > 0,
		\label{eq: delta proof proposition} 
	\end{equation}
	for all $k \in \{1, 2, \ldots, M\}$, and take $\varepsilon$ satisfying
	\begin{equation}
		0 < \varepsilon < \frac{1}{\displaystyle 2\max_{1 \leq k, \ell \leq M} \abs{u_k - u_\ell}}. 
		\label{eq: varepsilon def} 
	\end{equation}
	We apply \ref{item 2 prop hilbert sh}) with $d_k \triangleq \varepsilon \lambda_k$, $\xi_k \triangleq \varepsilon u_k$, $r_k \triangleq \varepsilon\rho_k = d_k + 2\pi i \xi_k$, and $a_k \triangleq \alpha_k$, for all $k \in \{1, 2, \ldots, M\}$.
	As $\min_{n \in \Z} \abs{\varepsilon (u_k - u_\ell) + n}$ is the distance between $\varepsilon(u_k - u_\ell)$ and its nearest integer and, by \eqref{eq: varepsilon def}, $0 \leq \varepsilon \!\abs{u_k - u_\ell} < 1/2$, we have $\min_{n \in \Z} \abs{\varepsilon (u_k - u_\ell) + n} = \varepsilon\!\abs{u_k - u_\ell}$, and hence
	\begin{equation}
		\delta_k^{(w)} \triangleq \min_{\substack{1 \leq \ell \leq M \\ \ell \neq k}} \min_{n \in \Z} \abs{\varepsilon (u_k - u_\ell) + n} = \varepsilon\min_{\substack{1 \leq \ell \leq M \\ \ell \neq k}} \abs{u_k - u_\ell} = \varepsilon\delta_k,
		\label{eq: equality delta k proposition proof} 
	\end{equation}
	for $k \in \{1, 2, \ldots, M\}$, and
	\begin{equation*}
		\delta^{(w)} = \min_{1 \leq k \leq M} \delta_k^{(w)} =  \min_{1 \leq k \leq M} \varepsilon\delta_k = \varepsilon\delta.  
	\end{equation*}
	It follows from \eqref{eq: delta proof proposition}, \eqref{eq: varepsilon def}, and \eqref{eq: equality delta k proposition proof} that $\delta_k^{(w)} > 0$, for $k \in \{1, 2, \ldots, \nbNodes\}$. Moreover, as $\lambda_k > 0$, by assumption, it follows from $d_k = \varepsilon\lambda_k$ and \eqref{eq: varepsilon def} that $d_k > 0$, for $k \in \{1, 2, \ldots, M\}$.
	The conditions for application of \ref{item 2 prop hilbert sh}) are therefore met, and \eqref{eq: general hilbert sh} yields
	\begin{equation*}
		\sum_{k = 1}^M 2\varepsilon A(\varepsilon\lambda_k, \varepsilon\delta_k, \varepsilon\delta) \abs{a_k}^2 \leq \sum_{k, \ell = 1}^M \frac{\varepsilon a_k\overline{a_\ell}}{\sh(\varepsilon(\rho_k + \overline{\rho_\ell})/2)} \leq \sum_{k = 1}^M 2\varepsilon B(\varepsilon\lambda_k, \varepsilon\delta_k, \varepsilon\delta)\abs{a_k}^2.
	\end{equation*}
	As $\varepsilon A(\varepsilon\lambda_k, \varepsilon\delta_k, \varepsilon\delta) = A(\lambda_k, \delta_k, \delta)$ and $\varepsilon B(\varepsilon \lambda_k, \varepsilon\delta_k, \varepsilon\delta) = B(\lambda_k, \delta_k, \delta)$, for all $k \in \{1, 2, \ldots, M\}$, both by assumption, we have 
	\begin{equation}
		\sum_{k = 1}^M A(\lambda_k, \delta_k, \delta) \abs{a_k}^2 \leq \sum_{k, \ell = 1}^M \frac{\varepsilon a_k\overline{a_\ell}}{2\sh(\varepsilon(\rho_k + \overline{\rho_\ell})/2)} \leq \sum_{k = 1}^M B(\lambda_k, \delta_k, \delta)\abs{a_k}^2.
		\label{eq: last step before conclusion equivalence hilbert} 
	\end{equation}
	Taking the limit $\varepsilon \rightarrow 0$ in \eqref{eq: last step before conclusion equivalence hilbert} and noting that $\lim_{z \rightarrow 0} \sh(z)/z = 1$ and hence 
	\begin{equation*}
		\frac{1}{\rho_k + \overline{\rho_\ell}} = \lim_{\varepsilon \rightarrow 0} \frac{\varepsilon}{2\sh(\varepsilon(\rho_k + \overline{\rho_\ell})/2)},
	\end{equation*}
	we get \eqref{eq: general hilbert fraction}.

	The proof of \ref{item 1 prop hilbert sh}) $\Rightarrow$ \ref{item 2 prop hilbert sh}) is accomplished by generalizing the proof of Theorem~\ref{thm: first generalization of Hilbert's inequality}\ref{enumerate item: first generalization of Hilbert's inequality sinus}) in \citep[Prop.~LS1.3]{Jameson2006}. Let $\nbNodes \in \N \!\setminus\! \{0\}$, $\boldsymbol{a} \triangleq \{a_k\}_{k = 1}^\nbNodes \in \C^\nbNodes$, and consider $\boldsymbol{r} \triangleq \{r_k\}_{k = 1}^\nbNodes \in \C^\nbNodes$ with $r_k \triangleq d_k + 2\pi i\xi_k$, where $d_k > 0$ and $\xi_k \in \R$ is such that 
	\begin{equation}
		\delta_k^{(w)} \triangleq \min_{\substack{1 \leq \ell \leq \nbNodes \\ \ell \neq k}}\min_{n \in \Z} \abs{\xi_k - \xi_\ell + n} > 0,
		\label{eq: second assumption delta k proof proposition} 
	\end{equation}
	for all $k \in \{1, 2, \ldots, \nbNodes\}$. For $k \in \{1, 2, \ldots, \nbNodes\}$ and $m \in \{1, 2, \ldots, N\}$, define $\lambda_{k, m} \triangleq d_k$, $u_{k, m} \triangleq \xi_k + m$, 
	$\rho_{k, m} \triangleq r_k + 2\pi im = d_k + 2\pi i(\xi_k + m)$, and $\alpha_{k, m} \triangleq (-1)^m a_k$, and apply \ref{item 1 prop hilbert sh}) with $M = \nbNodes\nbSamples$, $\boldsymbol{\rho} \triangleq \vectorize(\mathbf{P}) \in \C^M$, and $\boldsymbol{\alpha} \triangleq \vectorize(\mathbf{A}) \in \C^M$, where 
	\begin{equation*} 
		\mathbf{P} \triangleq \{\rho_{k, m}\}_{\substack{1 \leq k \leq \nbNodes \\ 1 \leq m \leq N}} \in \C^{\nbNodes \times N} \qquad \text{and }\qquad \mathbf{A} \triangleq \{\alpha_{k, m}\}_{\substack{1 \leq k \leq \nbNodes \\ 1 \leq m \leq N}} \in \C^{\nbNodes \times N}.
	\end{equation*}
	The conditions for application of \ref{item 1 prop hilbert sh}) are met, as for all $k \in \{1, 2, \ldots, K\}$ and all $m \in \{1, 2, \ldots, N\}$, 
	\begin{align*}
		\delta_{k, m} &\triangleq \min_{\substack{1 \leq \ell \leq K \\ 1 \leq n \leq N \\  (\ell, n) \neq (k, m)}} \abs{u_{k, m} - u_{\ell, n}} =\min_{\substack{1 \leq \ell \leq K \\ 1 \leq n \leq N \\  (\ell, n) \neq (k, m)}} \abs{(\xi_k + m) - (\xi_\ell + n)} \\
			&= \min_{\substack{1 \leq \ell \leq K \\ \ell \neq k}}\min_{n \in \Z} \abs{\xi_k - \xi_\ell + n} = \delta_k^{(w)} > 0,
	\end{align*}
	where the inequality follows by \eqref{eq: second assumption delta k proof proposition}. 
	We therefore get
	\begin{align}
		N\!\sum_{k = 1}^\nbNodes \!A(d_k, \delta_k^{(w)}\!, \delta^{(w)})\!\abs{a_k}^2 \leq \sum_{k, \ell = 1}^\nbNodes\sum_{m, n = 1}^N &\!\frac{(-1)^{m+n}a_k\overline{a_\ell}}{r_k + \overline{r_\ell} + 2\pi i(m-n)} \notag \\
			&\qquad\leq N\!\sum_{k = 1}^\nbNodes \!B(d_k, \delta_k^{(w)}\!, \delta^{(w)})\!\abs{a_k}^2,
		\label{eq: inequality to write in another form} 
	\end{align}
	where $\delta^{(w)} \triangleq \displaystyle\min_{1 \leq k \leq \nbNodes} \delta_k^{(w)}$.
	Next, we find an alternative expression for the center term in \eqref{eq: inequality to write in another form}. To this end, let $z \in \C \!\setminus\! 2\pi i\Z$, and note that
	\begin{align}
		\sum_{m, n = 1}^N \frac{(-1)^{m+n}}{z + 2\pi i(m-n)} &= \frac{N}{z} + \sum_{q = 1}^{N-1}\sum_{m = 1}^{N-q} \frac{(-1)^q}{z -2\pi iq} + \sum_{q = 1}^{N-1}\sum_{m = 1}^{N-q} \frac{(-1)^q}{z + 2\pi iq} \notag \\
			&= \frac{N}{z} + \sum_{q = 1}^{N-1} (N - q)\frac{(-1)^q}{z -2\pi iq} + \sum_{q = 1}^{N-1} (N - q)\frac{(-1)^q}{z + 2\pi iq} \notag \\
			&= \frac{N}{z} + \sum_{q = -N}^{-1} (N - \abs{q}) \frac{(-1)^q}{z + 2\pi iq} + \sum_{q = 1}^{N} (N - \abs{q})\frac{(-1)^q}{z +2\pi iq} \notag \\
			&= \sum_{q = -N}^{N} (N - \abs{q}) \frac{(-1)^q}{z + 2\pi iq}. \label{eq: correction sum diagonal term} 
	\end{align}
	Setting $z = r_k + \overline{r_\ell}$, $k, \ell \in \{1, 2, \ldots, \nbNodes\}$, in \eqref{eq: correction sum diagonal term} to recover the center term in \eqref{eq: inequality to write in another form} yields
	\begin{equation*}
		\sum_{k, \ell = 1}^\nbNodes\sum_{m, n = 1}^N \!\frac{(-1)^{m+n}a_k\overline{a_\ell}}{r_k + \overline{r_\ell} + 2\pi i(m-n)} = \sum_{k, \ell = 1}^\nbNodes\sum_{q = -N}^{N} (N - \abs{q}) \frac{(-1)^qa_k\overline{a_\ell}}{r_k + \overline{r_\ell} + 2\pi iq}.
	\end{equation*}
	We next establish that 
	\begin{equation}
		\lim_{N \rightarrow \infty} \sum_{q = -N}^N \left(1 - \frac{\abs{q}}{N}\right)\!\! \frac{(-1)^q}{\rho + 2\pi iq} = \frac{1}{2\sh(\rho/2)},
		\label{eq: equality 1 over sh} 
	\end{equation}
	for $\rho \in \C \!\setminus\! 2\pi i\Z$. 
	To this end, take $\rho \in \C \!\setminus\! 2\pi i \Z$ and let $\varphi \colon \R \rightarrow \C$ be the $1$-periodic function defined by $\varphi(t) \triangleq e^{-\rho t}$, for $t \in [-1/2, 1/2)$. The $q$-th Fourier series coefficient of $\varphi$ is 
	\begin{equation*}
		\int_{-1/2}^{1/2} \varphi(t) e^{-2\pi iqt} \mathrm{d}t = \int_{-1/2}^{1/2} e^{-(\rho + 2\pi iq)t} \mathrm{d}t = \frac{2\sh(\rho/2)(-1)^q}{\rho + 2\pi iq}.
	\end{equation*}
	As $\varphi$ is continuous on $\left(-1/2, 1/2\right)$, according to Fej\'er's theorem~\cite[Thm.~III.3.4]{Zygmund2002}, the sequence $\{\sigma_N\}_{N \in \N}$ of functions $\sigma_N \colon \R \rightarrow \C$ defined as
	\begin{equation*}
		\forall t \in \R, \qquad \sigma_N(t) \triangleq 2\sh(\rho/2)\sum_{q = -N}^N \left(1 - \frac{\abs{q}}{N}\right)\!\!\frac{(-1)^q}{\rho + 2\pi iq}e^{2\pi iqt},
	\end{equation*}
	converges pointwise to $\varphi$ on $\left(-1/2, 1/2\right)$, that is,
	\begin{equation}
		 e^{-\rho t} = 2\sh(\rho/2)\lim_{N \rightarrow \infty} \sum_{n = -N}^N \!\left(1 - \frac{\abs{q}}{N}\right) \!\!\frac{(-1)^q}{\rho + 2\pi iq}e^{2\pi iqt},
		\label{eq: fourier series proof} 
	\end{equation}
	for $t \in \left(-1/2, 1/2\right)$. 
	Evaluating \eqref{eq: fourier series proof} at $t = 0$, we obtain \eqref{eq: equality 1 over sh} as desired and hence 
	\begin{align}
		\lim_{N \rightarrow \infty} \frac{1}{N}\sum_{k, \ell = 1}^\nbNodes\sum_{m, n = 1}^N \!\frac{(-1)^{m+n}a_k\overline{a_\ell}}{r_k + \overline{r_\ell} + 2\pi i(m-n)} &= \!\sum_{k, \ell = 1}^\nbNodes\lim_{N \rightarrow \infty}\sum_{q = -N}^{N} \!\!\left(1 - \frac{\abs{q}}{N}\right) \!\!\frac{(-1)^qa_k\overline{a_\ell}}{r_k + \overline{r_\ell} + 2\pi iq} \notag \\
			&= \sum_{k, \ell = 1}^\nbNodes \frac{a_k\overline{a_\ell}}{2\sh((r_k + \overline{r_\ell})/2)}. \label{eq: conclusion proposition sh hilbert equivalence} 
	\end{align}
	Dividing \eqref{eq: inequality to write in another form} by $\nbSamples$ and letting $\nbSamples \rightarrow \infty$, it follows from \eqref{eq: conclusion proposition sh hilbert equivalence} that
	\begin{equation*}
		\sum_{k = 1}^\nbNodes 2A(d_k, \delta_k^{(w)}, \delta^{(w)}) \abs{a_k}^2 \leq \sum_{k, \ell = 1}^\nbNodes \frac{a_k\overline{a_\ell}}{\sh((r_k + \overline{r_\ell})/2)} \leq \sum_{k = 1}^\nbNodes 2B(d_k, \delta_k^{(w)}, \delta^{(w)})\abs{a_k}^2,
	\end{equation*}
	which completes the proof. 

\section{Proof of Theorem \ref{thm: result to be used for derivation of main result}}
\label{app: proof of theorem 2 appendix 2} 

	 According to Theorem~\ref{thm: second generalization of hilbert's inequality}, for all $\nbNodes \in \N \setminus\! \{0\}$, $\boldsymbol{\alpha} \triangleq \{\alpha_k\}_{k = 1}^\nbNodes \in \C^\nbNodes$, and $\boldsymbol{\rho} \triangleq \{\rho_k\}_{k = 1}^\nbNodes \in \C^\nbNodes$ with $\rho_k \triangleq \lambda_k + 2\pi i u_k$, where $\lambda_k > 0$ and $u_k \in \R$ is such that
	\begin{equation*}
		\delta_k \triangleq \min_{\substack{1 \leq \ell \leq \nbNodes \\ \ell \neq k}} \abs{u_k - u_\ell} > 0,
	\end{equation*}
	for all $k \in \{1, 2, \ldots, \nbNodes\}$, \eqref{eq: second generalization hilbert's inequality} holds so that adding $\sum_{k = 1}^\nbNodes \abs{a_k}^2\!/(2\lambda_k)$ to \eqref{eq: second generalization hilbert's inequality} yields
	\begin{equation*}
		\sum_{k = 1}^\nbNodes \!\left(\frac{1}{2\lambda_k} - \frac{42}{\pi \delta_k}\right)\!\abs{\alpha_k}^2 \leq \sum_{k, \ell = 1}^\nbNodes \frac{\alpha_k\overline{\alpha_\ell}}{\rho_k + \overline{\rho_\ell}} \leq \sum_{k = 1}^\nbNodes \!\left(\frac{1}{2\lambda_k} + \frac{42}{\pi\delta_k}\right)\!\abs{\alpha_k}^2.
	\end{equation*}
	Application of Proposition \ref{prop: equivalence hilbert's inequality sh}, with
	\begin{align*}
		A(x, y, z) &= 1/x - 42/(\pi y) \\
		B(x, y, z) &= 1/x + 42/(\pi y),
	\end{align*}
	for $x > 0$, $y > 0$, and $z > 0$, to \eqref{eq: second generalization hilbert's inequality} now yields \eqref{eq: second hilbert's inequality sh general}. 
	In the case $d_1 = d_2 = \ldots = d_\nbNodes = d$, \eqref{eq: second generalization hilbert's inequality} is refined to \eqref{eq: generalization hilbert's inequality case lambda constant}. 
	Application of Proposition \ref{prop: equivalence hilbert's inequality sh}, with 
	\begin{align*}
		A(x, y, z) &= 1/(z(e^{2x/z} - 1)) \\
		B(x, y, z) &= e^{2x/z}/(z(e^{2x/z} - 1)), 
	\end{align*}
	for $x > 0$, $y > 0$, and $z > 0$, to \eqref{eq: generalization hilbert's inequality case lambda constant} then yields \eqref{eq: second hilbert's inequality sh particular case}.

\section{Proof of Theorem~\ref{thm: main result general upper bound condition number vandermonde matrix unit disk}}
\label{app: proof of theorem 5 appendix 4}

The proof strategy is as follows. \!We first establish a lower bound on 
$\sigma_\mathrm{min}^2(\vandermonde{\nbSamples \times \nbNodes}{\node_1, \node_2, \ldots, \node_\nbNodes})$ and an upper bound on $\sigma_\mathrm{max}^2(\vandermonde{\nbSamples \times \nbNodes}{\node_1, \node_2, \ldots, \node_\nbNodes})$, both valid for nodes $\node_1, \node_2, \ldots, \node_\nbNodes$ strictly inside the unit circle, i.e., $\abs{\node_k} < 1$, $k \in \{1, 2, \ldots, \nbNodes\}$, and then use a limiting argument to extend these bounds, stated in Lemma \ref{lem: bound strictly inside the unit circle},  to the case where one or more of the nodes lie on the unit circle. Finally, we refine the resulting upper bound on $\sigma_\mathrm{max}^2(\vandermonde{\nbSamples \times \nbNodes}{\node_1, \node_2, \ldots, \node_\nbNodes})$ through Cohen's dilatation trick.

\begin{lem}[Lower bound on $\sigma_\mathrm{min}^2(\vandermonde{\nbSamples \times \nbNodes}{\node_1, \node_2, \ldots, \node_\nbNodes})$ and upper bound on $\sigma_\mathrm{max}^2(\vandermonde{\nbSamples \times \nbNodes}{\node_1, \node_2, \ldots, \node_\nbNodes})$ for nodes strictly inside the unit circle]
	\label{lem: bound strictly inside the unit circle} 
	Let $\boldsymbol{\node} \triangleq \{\node_k\}_{k = 1}^\nbNodes \in \C^\nbNodes$ with $\node_k \triangleq \abs{\node_k}\!e^{2\pi i\xi_k}$ be such that $0 < \abs{\node_k} < 1$, $\xi_k \in [0, 1)$, and
	\begin{equation}
		\delta_k^{(w)} \triangleq \min_{\substack{1 \leq \ell \leq \nbNodes \\ \ell \neq k}} \min_{n \in \Z} \abs{\xi_k - \xi_\ell + n} > 0,
		\label{eq: condition delta strictly positive bis} 
	\end{equation}
	for all $k \in \{1, 2, \ldots, \nbNodes\}$. The extremal singular values of the Vandermonde matrix $\vandermondeMat{\nbSamples \times \nbNodes}$ with nodes $\node_1, \node_2, \ldots, \node_\nbNodes$ satisfy
	\begin{align}
		\sigma_\mathrm{min}^2(\vandermonde{\nbSamples \times \nbNodes}{\node_1, \node_2, \ldots, \node_\nbNodes}) &\geq \min_{1 \leq k \leq \nbNodes} \!\left\{\!\frac{1}{2\!\abs{\node_k}}\!\left[\frac{1}{-\ln\!\abs{\node_k}}\!\left(1 - \abs{\node_k}^{2\nbSamples}\right) - \frac{84}{\pi\delta_k^{(w)}}\!\left(1 + \abs{\node_k}^{2\nbSamples}\right)\right]\!\right\} \label{eq: lower bound sigma min general bis} \\ 
		\sigma_\mathrm{max}^2(\vandermonde{\nbSamples \times \nbNodes}{\node_1, \node_2, \ldots, \node_\nbNodes}) &\leq \max_{1 \leq k \leq \nbNodes} \!\left\{\!\frac{1}{2\!\abs{\node_k}}\!\left[\frac{1}{-\ln\!\abs{\node_k}}\!\left(1 - \abs{\node_k}^{2\nbSamples}\right) + \frac{84}{\pi\delta_k^{(w)}}\!\left(1 + \abs{\node_k}^{2\nbSamples}\right)\right]\!\right\}\!. \label{eq: upper bound sigma max general bis} 
	\end{align}
	Moreover, if $\abs{\node_1} = \abs{\node_2} = \ldots = \abs{\node_\nbNodes} = A$, \eqref{eq: lower bound sigma min general bis} and \eqref{eq: upper bound sigma max general bis} can be refined to
	\begin{equation}
		\sigma_\mathrm{min}^2(\vandermonde{\nbSamples \times \nbNodes}{\node_1, \node_2, \ldots, \node_\nbNodes}) \geq \frac{1 - A^{2(\nbSamples + 1/2 - 1/\delta^{(w)})}}{\delta^{(w)}(A^{-2/\delta^{(w)}} - 1)A^2} \label{eq: lower bound sigma min particular case bis} 
	\end{equation}
	and
	\begin{equation}
		\sigma_\mathrm{max}^2(\vandermonde{\nbSamples \times \nbNodes}{\node_1, \node_2, \ldots, \node_\nbNodes}) \leq \frac{A^{-2/\delta^{(w)}}\!\left(1 - A^{2(\nbSamples - 1/2 +1/\delta^{(w)})}\right)}{\delta^{(w)}(A^{-2/\delta^{(w)}} - 1)A}, \label{eq: upper bound sigma max particular case bis} 
	\end{equation}
	respectively, with $\delta^{(w)} \triangleq \displaystyle\min_{1 \leq k \leq \nbNodes} \delta_k^{(w)}$. 
\end{lem}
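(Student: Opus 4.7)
The plan is to mimic the alternative unit-circle argument in Section~5.1, replacing Theorem~\ref{thm: first generalization of Hilbert's inequality}\ref{enumerate item: first generalization of Hilbert's inequality sinus}) by its hyperbolic-sine generalization Theorem~\ref{thm: result to be used for derivation of main result} (and, in the equal-modulus case, by the sharper Corollary~\ref{cor: result to be used for derivation of main result}). The central algebraic step is to recast $\sum_{n=0}^{\nbSamples-1}(\node_k\overline{\node_\ell})^n$ in a form featuring $\sh((r_k + \overline{r_\ell})/2)$ in the denominator, so that the diagonal and off-diagonal contributions to $\|\vandermondeMat{\nbSamples \times \nbNodes}\boldsymbol{x}\|_2^2$ are absorbed into a single sum to which Theorem~\ref{thm: result to be used for derivation of main result} applies verbatim; this is what makes the unit-disk case cleaner than the unit-circle derivation of Section~5.1, where the diagonal had to be split off by hand.

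To realize this, set $d_k \triangleq -\ln\abs{\node_k} > 0$ and $r_k \triangleq d_k - 2\pi i\xi_k$, so that $\node_k\overline{\node_\ell} = e^{-(r_k + \overline{r_\ell})}$. The identity $1 - e^{-a} = 2e^{-a/2}\sh(a/2)$ then yields, for all $k,\ell \in \{1, 2, \ldots, \nbNodes\}$,
\begin{equation*}
\sum_{n=0}^{\nbSamples-1}(\node_k\overline{\node_\ell})^n = \frac{e^{(r_k+\overline{r_\ell})/2} - e^{-(2\nbSamples-1)(r_k+\overline{r_\ell})/2}}{2\sh((r_k+\overline{r_\ell})/2)},
\end{equation*}
which in the diagonal case $k=\ell$ correctly reduces to the geometric sum $(1-\abs{\node_k}^{2\nbSamples})/(1-\abs{\node_k}^2)$, so no term needs to be pulled out by hand. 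Introducing the scaled variables $a_k \triangleq x_ke^{r_k/2}$ and $b_k \triangleq x_ke^{-(2\nbSamples-1)r_k/2}$, for which $\abs{a_k}^2 = \abs{x_k}^2/\abs{\node_k}$ and $\abs{b_k}^2 = \abs{x_k}^2\abs{\node_k}^{2\nbSamples-1}$, substitution yields
\begin{equation*}
\|\vandermondeMat{\nbSamples \times \nbNodes}\boldsymbol{x}\|_2^2 = \frac{1}{2}\sum_{k,\ell=1}^{\nbNodes}\frac{a_k\overline{a_\ell}}{\sh((r_k+\overline{r_\ell})/2)} - \frac{1}{2}\sum_{k,\ell=1}^{\nbNodes}\frac{b_k\overline{b_\ell}}{\sh((r_k+\overline{r_\ell})/2)}.
\end{equation*}

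Since the sign change $\xi_k\mapsto -\xi_k$ preserves minimum wrap-around distances, Theorem~\ref{thm: result to be used for derivation of main result} applies to both sums with the $\delta_k^{(w)}$ and $d_k$ of the lemma. To lower-bound $\sigma_\mathrm{min}^2$, combine the lower bound in~\eqref{eq: second hilbert's inequality sh general} on the first sum with the upper bound on the second; swapping these roles yields the upper bound on $\sigma_\mathrm{max}^2$. Substituting $\abs{a_k}^2$, $\abs{b_k}^2$, and $1/d_k = 1/(-\ln\abs{\node_k})$ and grouping coefficients of $\abs{x_k}^2$ produces the quantities $\frac{1}{2\abs{\node_k}}\bigl[(1-\abs{\node_k}^{2\nbSamples})/(-\ln\abs{\node_k}) \pm (84/(\pi\delta_k^{(w)}))(1+\abs{\node_k}^{2\nbSamples})\bigr]$; taking $\min$ and $\max$ over $k\in\{1, 2, \ldots, \nbNodes\}$ yields \eqref{eq: lower bound sigma min general bis} and \eqref{eq: upper bound sigma max general bis}, respectively. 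For the equal-modulus refinement, Corollary~\ref{cor: result to be used for derivation of main result} replaces~\eqref{eq: second hilbert's inequality sh general}; with $d \triangleq -\ln A$, $e^d = 1/A$, and $e^{2d/\delta^{(w)}} = A^{-2/\delta^{(w)}}$, the arithmetic collapses cleanly to \eqref{eq: lower bound sigma min particular case bis} and \eqref{eq: upper bound sigma max particular case bis}. The main obstacle is bookkeeping: the convention $r_k = d_k - 2\pi i\xi_k$ (rather than $d_k + 2\pi i\xi_k$) and the phase conventions for $a_k$ and $b_k$ must be chosen so that the argument of $\sh$ in the denominator matches the form $(r_k+\overline{r_\ell})/2$ required by Theorem~\ref{thm: result to be used for derivation of main result}; otherwise the complex-conjugate expression $\sh((r_\ell+\overline{r_k})/2)$ appears and the theorem cannot be invoked directly.
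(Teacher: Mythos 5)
Your proposal is correct and follows essentially the same route as the paper's own proof: the identity $\sum_{n=0}^{\nbSamples-1}(\node_k\overline{\node_\ell})^n$ rewritten with $2\sh((r_k+\overline{r_\ell})/2)$ in the denominator (where $d_k=-\ln\abs{\node_k}$), the split into the two Hilbert-type sums with weights $\abs{x_k}^2/\abs{\node_k}$ and $\abs{x_k}^2\abs{\node_k}^{2\nbSamples-1}$, and the application of Theorem~\ref{thm: result to be used for derivation of main result} (resp.\ Corollary~\ref{cor: result to be used for derivation of main result} in the equal-modulus case) with opposite-sign bounds on the two sums. The only differences are cosmetic sign/conjugation conventions ($r_k = d_k - 2\pi i\xi_k$ versus $d_k + 2\pi i\xi_k$), which you correctly note are immaterial since wrap-around distances are invariant under $\xi_k \mapsto -\xi_k$.
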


\begin{proof}
	Let $\boldsymbol{x} \triangleq \left(x_1\ x_2\ \ldots \ x_\nbNodes\right)^T \in \C^\nbNodes$. We have
	\begin{align}
		\norm{\vandermonde{\nbSamples \times \nbNodes}{\node_1, \node_2, \ldots, \node_\nbNodes}\boldsymbol{x}}_2^2  &= \sum_{n = 0}^{\nbSamples-1} \abs{\sum_{k = 1}^\nbNodes x_k\node_k^n}^2 = \sum_{n = 0}^{\nbSamples-1} \sum_{k, \ell = 1}^\nbNodes \overline{x_k}x_\ell (\overline{\node_k}\node_\ell)^n \notag \\
		&= \sum_{k, \ell = 1}^\nbNodes \overline{x_k}x_\ell\frac{1 - (\overline{\node_k}\node_\ell)^\nbSamples}{1 - \overline{\node_k}\node_\ell} \label{eq: first equality proof main result} \\ 
		&= \sum_{k, \ell = 1}^\nbNodes \overline{x_k}x_\ell(\overline{\node}_k\node_\ell)^{-1/2} \frac{1 - (\overline{\node_k}\node_\ell)^\nbSamples}{(\overline{\node_k}\node_\ell)^{-1/2} - (\overline{\node_k}\node_\ell)^{1/2}} \notag \\
		&= \sum_{k, \ell = 1}^\nbNodes \overline{x_k}x_\ell (\overline{\node_k}\node_\ell)^{-1/2} \frac{1 - (\overline{\node_k}\node_\ell)^\nbSamples}{e^{(r_k+\overline{r_\ell})/2} - e^{-(r_k+\overline{r_\ell})/2}} \label{eq: rewritten with definition of rk} \\ 
		&= \sum_{k, \ell = 1}^\nbNodes \overline{x_k}x_\ell (\overline{\node_k}\node_\ell)^{-1/2} \frac{1 - (\overline{\node_k}\node_\ell)^\nbSamples}{2\sh((r_k + \overline{r_\ell})/2)} \notag \\
		&= \underbrace{\sum_{k, \ell = 1}^\nbNodes \frac{\overline{x_k}x_\ell (\overline{\node_k}\node_\ell)^{-1/2}}{2\sh((r_k + \overline{r_\ell})/2)}}_{\displaystyle\reversetriangleq X_1} -  \underbrace{\sum_{k, \ell = 1}^\nbNodes \frac{\overline{x_k}x_\ell (\overline{\node_k}\node_\ell)^{\nbSamples-1/2}}{2\sh((r_k + \overline{r_\ell})/2)}}_{\displaystyle \reversetriangleq X_2} \label{eq: last equation first block proof main result}, 
	\end{align}
	where in \eqref{eq: rewritten with definition of rk} we set $r_k \triangleq d_k + 2\pi i\xi_k$ with $d_k \triangleq -\ln\!\abs{\node_k}$, for $k \in \{1, 2, \ldots, \nbNodes\}$. To get \eqref{eq: first equality proof main result}, we used $\abs{\node_k} < 1$, for $k \in \{1, 2, \ldots, \nbNodes\}$, which is by assumption and ensures that $\overline{\node_k}\node_\ell \neq 1$, for all $k , \ell \in \{1, 2, \ldots, \nbNodes\}$. We proceed to derive lower and upper bounds on the terms $X_1$ and $X_2$ in \eqref{eq: last equation first block proof main result}. To this end, we first note that, by assumption, $0 < \abs{\node_k} < 1$, and hence $d_k > 0$, for all $k \in \{1, 2, \ldots, \nbNodes\}$. We can therefore apply \eqref{eq: second hilbert's inequality sh general} in Theorem \ref{thm: result to be used for derivation of main result} first with $a_k \triangleq \overline{x_k}(\overline{\node_k})^{-1/2}$, $k \in \{1, 2, \ldots, \nbNodes\}$, to get
	\begin{equation}
		\sum_{k = 1}^\nbNodes \!\left(\frac{1}{-\ln\!\abs{\node_k}} - \frac{84}{\pi\delta_k^{(w)}}\right)\!\frac{\abs{x_k}^2}{2\!\abs{\node_k}} \leq X_1 \leq \sum_{k = 1}^\nbNodes \!\left(\frac{1}{-\ln\!\abs{\node_k}} + \frac{84}{\pi\delta_k^{(w)}}\right)\!\frac{\abs{x_k}^2}{2\!\abs{\node_k}},
		\label{eq: first time used generalization hilbert sh} 
	\end{equation}
	and then with $a_k \triangleq \overline{x_k}(\overline{\node_k})^{\nbSamples-1/2}$, $k \in \{1, 2, \ldots, \nbNodes\}$, to conclude that
	\begin{equation}
		\!\!\sum_{k = 1}^\nbNodes \!\left(\frac{1}{-\ln\!\abs{\node_k}} - \frac{84}{\pi\delta_k^{(w)}}\right)\!\frac{\abs{x_k}^2\abs{\node_k}^{2\nbSamples}}{2\!\abs{\node_k}}\! \leq X_2 \leq \sum_{k = 1}^\nbNodes \!\left(\frac{1}{-\ln\!\abs{\node_k}} + \frac{84}{\pi\delta_k^{(w)}}\right)\!\frac{\abs{x_k}^2\abs{\node_k}^{2\nbSamples}}{2\!\abs{\node_k}}.\!
		\label{eq: second time used generalization hilbert sh} 
	\end{equation}
	With the left-hand side (LHS) of \eqref{eq: first time used generalization hilbert sh} and the RHS of \eqref{eq: second time used generalization hilbert sh}, we get
	\begin{align}
		\norm{\vandermonde{\nbSamples \times \nbNodes}{\node_1, \node_2, \ldots, \node_\nbNodes}\boldsymbol{x}}_2^2 &\geq \sum_{k = 1}^\nbNodes \left[\frac{1}{-\ln\!\abs{\node_k}}\!\left(1 - \abs{\node_k}^{2\nbSamples}\right) - \frac{84}{\pi\delta_k^{(w)}}\!\left(1 + \abs{\node_k}^{2\nbSamples}\right)\right]\!\frac{\abs{x_k}^2}{2\!\abs{\node_k}} \label{eq: first inequality to use vandermonde sigma min} \\ 
			&\geq \min_{1 \leq k \leq \nbNodes} \left\{\frac{1}{2\!\abs{\node_k}}\!\left[\frac{1}{-\ln\!\abs{\node_k}}\!\left(1 - \abs{\node_k}^{2\nbSamples}\right) - \frac{84}{\pi\delta_k^{(w)}}\!\left(1 + \abs{\node_k}^{2\nbSamples}\right)\right]\right\}\!\norm{\boldsymbol{x}}_2^2, \notag
	\end{align}
	which implies \eqref{eq: lower bound sigma min general bis}. 
	Combining the RHS of \eqref{eq: first time used generalization hilbert sh} and the LHS of \eqref{eq: second time used generalization hilbert sh}, we obtain
	\begin{align*}
		\norm{\vandermonde{\nbSamples \times \nbNodes}{\node_1, \node_2, \ldots, \node_\nbNodes}\boldsymbol{x}}_2^2 &\leq \sum_{k = 1}^\nbNodes \left[\frac{1}{-\ln\!\abs{\node_k}}\!\left(1 - \abs{\node_k}^{2\nbSamples}\right) + \frac{84}{\pi\delta_k^{(w)}}\!\left(1 + \abs{\node_k}^{2\nbSamples}\right)\right]\!\frac{\abs{x_k}^2}{2\!\abs{\node_k}} \\
			&\leq \max_{1 \leq k \leq \nbNodes} \left\{\frac{1}{2\!\abs{\node_k}}\!\left[\frac{1}{-\ln\!\abs{\node_k}}\!\left(1 - \abs{\node_k}^{2\nbSamples}\right) + \frac{84}{\pi\delta_k^{(w)}}\!\left(1 + \abs{\node_k}^{2\nbSamples}\right)\right]\right\}\!\norm{\boldsymbol{x}}_2^2,
	\end{align*}
	which proves \eqref{eq: upper bound sigma max general bis}.  
	For the refinements \eqref{eq: lower bound sigma min particular case bis} and \eqref{eq: upper bound sigma max particular case bis}, we derive specialized lower and upper bounds on the terms $X_1$ and $X_2$ in \eqref{eq: last equation first block proof main result}. To this end, we first note that in the case $\abs{\node_1} = \abs{\node_2} = \ldots = \abs{\node_\nbNodes} = A$, we have $d_1 = d_2 = \ldots = d_\nbNodes = -\ln(A)$ and hence $r_k = -\ln(A) + 2\pi i\xi_k$, for $k \in \{1, 2, \ldots, \nbNodes\}$, in $X_1$ and $X_2$. The proof of \eqref{eq: lower bound sigma min particular case bis} and \eqref{eq: upper bound sigma max particular case bis} is effected by employing \eqref{eq: second hilbert's inequality sh particular case 2} in Corollary \ref{cor: result to be used for derivation of main result} first with $a_k \triangleq \overline{x_k}(\overline{\node_k})^{-1/2}$, $k \in \{1, 2, \ldots, \nbNodes\}$, to get
	\begin{equation}
		\frac{1}{A\delta^{(w)}(A^{-2/\delta^{(w)}} - 1)}\sum_{k = 1}^\nbNodes \frac{\abs{x_k}^2}{A} \leq X_1 \leq \frac{A^{-2/\delta^{(w)}}}{\delta^{(w)}(A^{-2/\delta^{(w)}} - 1)}\sum_{k = 1}^\nbNodes \frac{\abs{x_k}^2}{A},
		\label{eq: first time used generalization hilbert sh bis} 
	\end{equation}
	and then with $a_k \triangleq \overline{x_k}(\overline{\node_k})^{\nbSamples-1/2}$, $k \in \{1, 2, \ldots, \nbNodes\}$, to conclude that
	\begin{equation}
		\frac{1}{A\delta^{(w)}(A^{-2/\delta^{(w)}} - 1)}\sum_{k = 1}^\nbNodes \frac{\abs{x_k}^2A^{2\nbSamples}}{A} \leq X_2 \leq \frac{A^{-2/\delta^{(w)}}}{\delta^{(w)}(A^{-2/\delta^{(w)}} - 1)}\sum_{k = 1}^\nbNodes \frac{\abs{x_k}^2A^{2\nbSamples}}{A}.
		\label{eq: second time used generalization hilbert sh bis} 
	\end{equation}
	With the LHS of \eqref{eq: first time used generalization hilbert sh bis} and the RHS of \eqref{eq: second time used generalization hilbert sh bis}, we have
	\begin{equation*}
		\norm{\vandermonde{\nbSamples \times \nbNodes}{\node_1, \node_2, \ldots, \node_\nbNodes}\boldsymbol{x}}_2^2 \geq \frac{1 - A^{2(\nbSamples + 1/2 - 1/\delta^{(w)})}}{\delta^{(w)}(A^{-2/\delta^{(w)}} - 1)A^2}\norm{\boldsymbol{x}}_2^2.
	\end{equation*}
	Finally, combining the RHS of \eqref{eq: first time used generalization hilbert sh bis} and the LHS of \eqref{eq: second time used generalization hilbert sh bis}, we obtain
	\begin{equation*}
		\norm{\vandermonde{\nbSamples \times \nbNodes}{\node_1, \node_2, \ldots, \node_\nbNodes}\boldsymbol{x}}_2^2 \leq \frac{A^{-2/\delta^{(w)}}\!\left(1 - A^{2(\nbSamples - 1/2 + 1/\delta^{(w)})}\right)}{\delta^{(w)}(A^{-2/\delta^{(w)}} - 1)A}\norm{\boldsymbol{x}}_2^2.
	\end{equation*}
\end{proof}

An immediate consequence of Lemma \ref{lem: bound strictly inside the unit circle} is the following bound on the condition number $\kappa(\vandermondeMat{\nbSamples \times \nbNodes})$.

\begin{lem}[Upper bound on $\kappa(\vandermonde{\nbSamples \times \nbNodes}{\node_1, \node_2, \ldots, \node_\nbNodes})$ for nodes strictly inside the unit circle]
	Let $\boldsymbol{\node} \triangleq \{\node_k\}_{k = 1}^\nbNodes \in \C^\nbNodes$ with $\node_k \triangleq \abs{\node_k}\!e^{2\pi i\xi_k}$ be such that $0 < \abs{\node_k} < 1$, $\xi_k \in [0, 1)$, and
	\begin{equation}
		\delta_k^{(w)} \triangleq \min_{\substack{1 \leq \ell \leq \nbNodes \\ \ell \neq k}} \min_{n \in \Z} \abs{\xi_k - \xi_\ell + n} > 0,
		\label{eq: condition delta strictly positive bis} 
	\end{equation}
	for all $k \in \{1, 2, \ldots, \nbNodes\}$. 
	The spectral condition number of $\vandermondeMat{\nbSamples \times \nbNodes}$ satisfies
	\begin{equation}
		\!\kappa(\vandermonde{\nbSamples \times \nbNodes}{\node_1, \node_2, \ldots, \node_\nbNodes}) \leq \!\left(\frac{\displaystyle\max_{1 \leq k \leq \nbNodes} \!\left\{\frac{1}{\abs{\node_k}}\!\left[\frac{1}{-\ln\!\abs{\node_k}}\!\left(1 - \abs{\node_k}^{2\nbSamples}\right) + \frac{84}{\pi\delta_k^{(w)}}\!\left(1 + \abs{\node_k}^{2\nbSamples}\right)\right]\right\}}{\displaystyle\min_{1 \leq k \leq \nbNodes} \!\left\{\frac{1}{\abs{\node_k}}\!\left[\frac{1}{-\ln\!\abs{\node_k}}\!\left(1 - \abs{\node_k}^{2\nbSamples}\right) - \frac{84}{\pi\delta_k^{(w)}}\!\left(1 + \abs{\node_k}^{2\nbSamples}\right)\right]\right\}}\right)^{1/2}
		\label{eq: main result general bis} 
	\end{equation}
	if for all $k \in \{1, 2, \ldots, \nbNodes\}$,
	\begin{equation}
		\delta_k^{(w)} > -\ln\!\abs{\node_k}\frac{84}{\pi}\!\left(\frac{1 + \abs{\node_k}^{2\nbSamples}}{1 - \abs{\node_k}^{2\nbSamples}}\right). 
		\label{eq: condition main result general bis} 
	\end{equation}
	Moreover, if $\abs{\node_1} = \abs{\node_2} = \ldots = \abs{\node_\nbNodes} = A$, we have
	\begin{equation}
		\kappa(\vandermonde{\nbSamples \times \nbNodes}{\node_1, \node_2, \ldots, \node_\nbNodes}) \leq A^{-1/\delta^{(w)}}\sqrt{\frac{A\!\left(1 - A^{2(\nbSamples - 1/2 + 1/\delta^{(w)})}\right)}{1 - A^{2(\nbSamples + 1/2 - 1/\delta^{(w)})}}}
		\label{eq: main result particular case bis} 
	\end{equation}
	under the condition $\nbSamples > 1/\delta^{(w)}-1/2$.
\end{lem}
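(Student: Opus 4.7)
The plan is to derive this lemma as a direct corollary of Lemma \ref{lem: bound strictly inside the unit circle}. Recall that for any matrix $\mathbf{A}$, we have $\kappa(\mathbf{A}) = \sigma_\mathrm{max}(\mathbf{A})/\sigma_\mathrm{min}(\mathbf{A})$, so once we have a lower bound on $\sigma_\mathrm{min}^2(\vandermondeMat{\nbSamples \times \nbNodes})$ that is strictly positive together with an upper bound on $\sigma_\mathrm{max}^2(\vandermondeMat{\nbSamples \times \nbNodes})$, a bound on $\kappa^2(\vandermondeMat{\nbSamples \times \nbNodes})$ follows immediately by taking the ratio.

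Concretely, I would start from bounds \eqref{eq: lower bound sigma min general bis} and \eqref{eq: upper bound sigma max general bis} of Lemma \ref{lem: bound strictly inside the unit circle}. Both bounds share the common factor $1/(2\abs{\node_k})$; the factor $2$ cancels when forming the ratio, leaving the $1/\abs{\node_k}$ inside the max and min expressions as written in \eqref{eq: main result general bis}. For this ratio to be a meaningful upper bound on $\kappa^2$, I must ensure the lower bound on $\sigma_\mathrm{min}^2$ is strictly positive; rearranging the bracketed expression in \eqref{eq: lower bound sigma min general bis} to require
\begin{equation*}
\frac{1 - \abs{\node_k}^{2\nbSamples}}{-\ln\abs{\node_k}} > \frac{84}{\pi\delta_k^{(w)}}\!\left(1 + \abs{\node_k}^{2\nbSamples}\right),
\end{equation*}
and solving for $\delta_k^{(w)}$ gives exactly condition \eqref{eq: condition main result general bis}. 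Taking square roots then yields \eqref{eq: main result general bis}.

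For the refinement under $\abs{\node_1} = \ldots = \abs{\node_\nbNodes} = A$, I would plug the specialized bounds \eqref{eq: lower bound sigma min particular case bis} and \eqref{eq: upper bound sigma max particular case bis} into the ratio. The common factor $1/(\delta^{(w)}(A^{-2/\delta^{(w)}}-1))$ cancels, and the $A^2$ in the denominator of the lower bound combines with the $A$ in the denominator of the upper bound to produce an overall factor of $A$ in the numerator, while $A^{-2/\delta^{(w)}}$ in the upper bound survives and becomes $A^{-1/\delta^{(w)}}$ under the square root. This gives \eqref{eq: main result particular case bis}. For the upper bound to be meaningful (i.e., the denominator $1 - A^{2(\nbSamples + 1/2 - 1/\delta^{(w)})}$ to be positive, recalling $0 < A < 1$), we need the exponent $\nbSamples + 1/2 - 1/\delta^{(w)} > 0$, which is precisely the stated condition $\nbSamples > 1/\delta^{(w)} - 1/2$.

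There is no real obstacle here: the lemma is a packaging step that converts the two extremal singular value bounds into a condition number bound. The only care required is to track which factors cancel and to identify the exact positivity requirements on the lower bound of $\sigma_\mathrm{min}^2$, both in the general case and in the equal-moduli refinement.
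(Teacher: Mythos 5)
Your proposal is correct and follows exactly the paper's own proof: both simply substitute the extremal singular value bounds of Lemma \ref{lem: bound strictly inside the unit circle} into the definition \eqref{eq: definition spectral condition number} of the spectral condition number, with conditions \eqref{eq: condition main result general bis} and $\nbSamples > 1/\delta^{(w)}-1/2$ identified as precisely what makes the respective lower bounds on $\sigma_\mathrm{min}^2$ strictly positive. The algebraic cancellations you describe in the equal-moduli case check out.
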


\begin{proof}
	Using \eqref{eq: lower bound sigma min general bis} and \eqref{eq: upper bound sigma max general bis} in \eqref{eq: definition spectral condition number} yields \eqref{eq: main result general bis}. Condition \eqref{eq: condition main result general bis} ensures that the lower bound in \eqref{eq: lower bound sigma min general bis} is strictly positive, which enables division in \eqref{eq: main result general bis}. The refinement \eqref{eq: main result particular case bis} is obtained by using \eqref{eq: lower bound sigma min particular case bis} and \eqref{eq: upper bound sigma max particular case bis} in \eqref{eq: definition spectral condition number}. The condition $\nbSamples > 1/\delta^{(w)} - 1/2$ ensures that the lower bound in \eqref{eq: lower bound sigma min particular case bis} is strictly positive, which, again, enables division in \eqref{eq: main result particular case bis}.
\end{proof}

We are now ready to prove Theorem \ref{thm: main result general upper bound condition number vandermonde matrix unit disk} proper. This will be accomplished by first showing that  \eqref{eq: lower bound sigma min general bis}, \eqref{eq: upper bound sigma max general bis}, and \eqref{eq: main result general bis} can be extended (through a limiting argument) to the case where one or more of the nodes satisfy $\abs{\node_k} = 1$, and second by refining the resulting upper bound on $\sigma_\mathrm{max}^2(\vandermonde{\nbSamples \times \nbNodes}{\node_1, \node_2, \ldots, \node_\nbNodes})$ and hence the upper bound on $\kappa(\vandermonde{\nbSamples \times \nbNodes}{\node_1, \node_2, \ldots, \node_\nbNodes})$ via Cohen's dilatation trick. 

The basic idea of the proof is to construct a sequence of Vandermonde matrices parametrized by $M \in \N \!\setminus \! \{0\}$, with nodes strictly inside the unit circle and approaching the unit circle as $M \rightarrow \infty$. 
	
	Specifically, let $M \in \N \setminus \! \{0\}$ and $\vandermonde{\nbSamples \times \nbNodes}{\node_1, \node_2, \ldots, \node_\nbNodes}^{(M)}$ be the Vandermonde matrix with nodes $\node_1^{(M)}, \node_2^{(M)}, \ldots, \node_\nbNodes^{(M)}$ such that
	\begin{equation*}
		\node_k^{(M)} = \begin{cases} \abs{\node_k}\!e^{2\pi i\xi_k}, & \abs{\node_k} < 1\\
			\left(1 - \frac{1}{M}\right)\!e^{2\pi i\xi_k}, & \abs{\node_k} = 1.\end{cases}
	\end{equation*}
	The nodes $\node_1^{(M)}, \node_2^{(M)}, \ldots, \node_\nbNodes^{(M)}$ are all strictly inside the unit circle, that is, $\abs{\node_k^{(M)}} < 1$, for all $k \in \{1, 2, \ldots, \nbNodes\}$, and we can therefore apply results from the proof of Lemma~\ref{lem: bound strictly inside the unit circle} to obtain bounds on the extremal singular values of $\vandermonde{\nbSamples \times \nbNodes}{\node_1, \node_2, \ldots, \node_\nbNodes}^{(M)}$. 
	Specifically, let $\boldsymbol{x} \triangleq \{x_k\}_{k = 1}^\nbNodes \in \C^\nbNodes$ and evaluate the lower bound in \eqref{eq: first inequality to use vandermonde sigma min} for the nodes $\node_1^{(M)}, \node_2^{(M)}, \ldots, \node_\nbNodes^{(M)}$ to get
	\begin{align*}
		&\norm{\vandermonde{\nbSamples \times \nbNodes}{\node_1, \node_2, \ldots, \node_\nbNodes}^{(M)}\boldsymbol{x}}_2^2 \geq \sum_{k = 1}^\nbNodes \left[\frac{1}{-\ln\!\abs{\node_k^{(M)}}}\!\left(1 - \abs{\node_k^{(M)}}^{2\nbSamples}\right) - \frac{84}{\pi\delta_k^{(w)}}\!\left(1 + \abs{\node_k^{(M)}}^{2\nbSamples}\right)\right]\!\frac{\abs{x_k}^2}{2\!\abs{\node_k^{(M)}}} \\
		&= \sum_{\substack{k = 1 \\ \abs{\node_k} = 1}}^\nbNodes \!\left[\frac{1}{-\ln\!\left(1 - \frac{1}{M}\right)}\!\left(1 - \left(\!1 - \frac{1}{M}\right)^{2\nbSamples}\right) - \frac{84}{\pi\delta_k^{(w)}}\!\left(\!1 + \left(1 - \frac{1}{M}\right)^{2\nbSamples}\right)\right]\!\frac{\abs{x_k}^2}{2\!\left(1 - \frac{1}{M}\right)} \\
			&\hspace{2cm} + \sum_{\substack{k = 1 \\ \abs{\node_k} < 1}}^\nbNodes \left[\frac{1}{-\ln\!\abs{\node_k}}\!\left(1 - \abs{\node_k}^{2\nbSamples}\right) - \frac{84}{\pi\delta_k^{(w)}}\!\left(1 + \abs{\node_k}^{2\nbSamples}\right)\right]\frac{\abs{x_k}^2}{2\!\abs{\node_k}}. \\[-0.7cm]
	\end{align*}
	As $\lim_{M \rightarrow \infty} \node_k^{(M)} = \node_k$, for $k \in \{1, 2, \ldots, \nbNodes\}$, it follows that
	\begin{align*}
		\norm{\vandermonde{\nbSamples \times \nbNodes}{\node_1, \node_2, \ldots, \node_\nbNodes}\boldsymbol{x}}_2^2 &= \lim_{M \rightarrow \infty} \norm{\vandermonde{\nbSamples \times \nbNodes}{\node_1, \node_2, \ldots, \node_\nbNodes}^{(M)}\boldsymbol{x}}_2^2 \\
			&\geq \sum_{\substack{k = 1 \\ \abs{\node_k} < 1}}^\nbNodes \!\!\left[\frac{1}{-\ln\!\abs{\node_k}}\!\left(1 - \abs{\node_k}^{2\nbSamples}\right) - \frac{84}{\pi\delta_k^{(w)}}\!\left(1 + \abs{\node_k}^{2\nbSamples}\right)\right]\!\frac{\abs{x_k}^2}{2\!\abs{\node_k}} \\[-0.4cm]
			&\hspace{3cm} + \sum_{\substack{k = 1 \\ \abs{\node_k} = 1}}^\nbNodes \!\!\left(\nbSamples - \frac{42}{\pi\delta_k}\right)\!\abs{x_k}^2\\
			&= \sum_{k = 1}^\nbNodes \!\left[\varphi_\nbSamples(\abs{\node_k}) - \frac{42}{\pi\delta_k^{(w)}}\!\left(1 + \abs{\node_k}^{2\nbSamples}\right)\right]\!\frac{\abs{x_k}^2}{\abs{\node_k}} \\
			&\geq \mathscr{L}(\nbSamples, \abs{\boldsymbol{z}}\!, \boldsymbol{\delta}^{(w)})\!\norm{\boldsymbol{x}}_2^2,
	\end{align*}
	for all $\boldsymbol{x} = \{x_k\}_{k = 1}^\nbNodes \in \C^\nbNodes$. This implies
	\begin{equation}
		\sigma_\mathrm{min}^2(\vandermonde{\nbSamples \times \nbNodes}{\node_1, \node_2, \ldots, \node_\nbNodes}) \geq \mathscr{L}(\nbSamples, \abs{\boldsymbol{z}}\!, \boldsymbol{\delta}^{(w)}),
		\label{eq: lower bound final result proof} 
	\end{equation}
	and thereby establishes \eqref{eq: lower bound sigma min general}. We can show similarly that
	\begin{equation}
		\sigma_\mathrm{max}^2(\vandermonde{\nbSamples \times \nbNodes}{\node_1, \node_2, \ldots, \node_\nbNodes}) \leq \mathscr{U}(\nbSamples, \abs{\boldsymbol{z}}\!, \boldsymbol{\delta}^{(w)}).
		\label{eq: upper bound sigma max before cohen's dilatation trick} 
	\end{equation}
	To get \eqref{eq: upper bound sigma max general}, we refine \eqref{eq: upper bound sigma max before cohen's dilatation trick} using Cohen's dilatation trick as follows.
	Let $\boldsymbol{y} \triangleq \{y_n\}_{n = 0}^{\nbSamples-1}$, and set
	\begin{equation}
		\forall (a, \omega) \in [0, \infty) \times \R, \qquad U_{\boldsymbol{y}, \nbSamples}(a, \xi) \triangleq \sum_{n = 0}^{\nbSamples-1} y_n a^ne^{-2\pi i\xi n}.
		\label{eq: function U cohen's dilatation trick main result cor} 
	\end{equation}
	It follows from \eqref {eq: upper bound sigma max before cohen's dilatation trick} that
	\begin{align}
		 \sum_{k = 1}^\nbNodes \abs{U_{\boldsymbol{y}, \nbSamples}(\abs{z_k}\!, \xi_k)}^2 = \norm{(\vandermondeMat{\nbSamples \times \nbNodes})^H\boldsymbol{y}}_2^2 \leq \mathscr{U}(\nbSamples, \abs{\boldsymbol{z}}, \boldsymbol{\delta}^{(w)})\!\norm{\boldsymbol{y}}_2^2.
		\label{eq: equation proof condition number of the Vandermonde matrix with nodes in the unit disk 8} 
	\end{align}
	Now, we take $R \in \N \!\setminus\! \{0\}$ and apply Cohen's dilatation trick with respect to the variable $\xi$ in \eqref{eq: function U cohen's dilatation trick main result cor}. We start by setting
	\begin{equation*}
		\forall (a, \omega) \in [0, \infty) \times \R, \qquad V_{\boldsymbol{y}, \nbSamples, R}(a, \xi) \triangleq U_{\boldsymbol{y}, \nbSamples}(a, R\xi) = \sum_{n = 0}^{\nbSamples-1} y_n a^ne^{-2\pi iR\xi n}.
	\end{equation*}
	With $\boldsymbol{\gamma} \triangleq \{\gamma_n\}_{n = 0}^{(\nbSamples-1)R}$ defined as
	\begin{equation*}
		\gamma_n \triangleq \begin{cases} y_{n/R}, & \text{if } n \equiv 0\ (\text{mod } R) \\ 0, & \text{otherwise,}\end{cases}
	\end{equation*}
	$V_{\boldsymbol{y}, \nbSamples, R}$ can be written as
	\begin{align}
		\hspace{-0.1cm}\forall (a, \omega) \in [0, \infty) \times \R, \quad\!\! V_{\boldsymbol{y}, \nbSamples, R}(a, \xi) &= \!\!\sum_{n = 0}^{(\nbSamples-1)R} \gamma_n a^{n/R}e^{-2\pi i\xi n} \notag \\
			&= U_{\boldsymbol{\gamma}, (\nbSamples-1)R+1}(a^{1/R}, \xi). 
		\label{eq: equation proof condition number of the Vandermonde matrix with nodes in the unit disk 6} 
	\end{align}
	We then have
	\begin{align}
		&R\sum_{k = 1}^\nbNodes \abs{U_{\boldsymbol{y}, \nbSamples}(\abs{\node_k}\!, \xi_k)}^2 = \sum_{r = 1}^R\sum_{k = 1}^\nbNodes \abs{U_{\boldsymbol{y}, \nbSamples}(\abs{\node_k}\!, \xi_k + r)}^2  \label{eq: equation proof condition number of the Vandermonde matrix with nodes in the unit disk 6.1} \\ 
			&\hspace{1.5cm}= \sum_{r = 1}^R\sum_{k = 1}^\nbNodes \abs{V_{\boldsymbol{y}, \nbSamples, R}\!\left(\abs{\node_k}\!, \frac{\xi_k+ r}{R}\right)}^2 \label{eq: equation proof condition number of the Vandermonde matrix with nodes in the unit disk 6.2} \\ 
			&\hspace{1.5cm}= \sum_{r = 1}^R\sum_{k = 1}^\nbNodes \abs{U_{\boldsymbol{\gamma}, (\nbSamples-1)R+1}\!\left(\abs{\node_k}^{1/R}\!,\frac{\xi_k+ r}{R}\right)}^2, \label{eq: equation proof condition number of the Vandermonde matrix with nodes in the unit disk 7} 
	\end{align}
	where \eqref{eq: equation proof condition number of the Vandermonde matrix with nodes in the unit disk 6.1} holds as $\xi \mapsto U_{\boldsymbol{y}, \nbSamples}(a, \xi)$ is $1$-periodic, \eqref{eq: equation proof condition number of the Vandermonde matrix with nodes in the unit disk 6.2} is by definition of $V_{\boldsymbol{y}, \nbSamples, R}$, and \eqref{eq: equation proof condition number of the Vandermonde matrix with nodes in the unit disk 7} follows from \eqref{eq: equation proof condition number of the Vandermonde matrix with nodes in the unit disk 6}. We have
	\begin{equation*}
		\min_{\substack{1 \leq \ell \leq \nbNodes \\ 1\leq s \leq R \\ (\ell, s) \neq (k, r)}} \min_{n \in \Z} \abs{\frac{\xi_k+r}{R} - \frac{\xi_\ell+s}{R}+ \frac{n}{R}} = \frac{\delta_k^{(w)}}{R} > 0
	\end{equation*}
	and can therefore apply \eqref{eq: equation proof condition number of the Vandermonde matrix with nodes in the unit disk 8} to $U_{\boldsymbol{\gamma}, (\nbSamples-1)R+1}$ with the substitutions
	\begin{equation*}
		\begin{array}{llcl}
			& \nbSamples & \longleftarrow & (\nbSamples-1)R+1 \\[0.1cm]
			& \boldsymbol{y} = \{y_n\}_{n = 0}^{\nbSamples-1} & \longleftarrow & \boldsymbol{\gamma} = \{\gamma_n\}_{n = 0}^{(\nbSamples-1)R} \\[0.1cm]
			& \{(\abs{z_k}, \xi_k)\}_{k = 1}^\nbNodes & \longleftarrow & \displaystyle\left\{\left(\abs{\node_k}^{1/R}, \frac{\xi_k+r}{R}\right)\right\}_{\substack{1 \leq k \leq \nbNodes \\ 1 \leq r \leq R}} \\[0.3cm]
			& \abs{\node_k} & \longleftarrow & \abs{\node_k}^{1/R} \\[0.2cm]
			& \delta_k^{(w)} & \longleftarrow & \delta_k^{(w)}/R,
		\end{array}
	\end{equation*}
	i.e., we apply \eqref{eq: equation proof condition number of the Vandermonde matrix with nodes in the unit disk 8} to the $2$-D sequence $\displaystyle\left\{\left(\abs{\node_k}^{1/R}, \frac{\xi_k+r}{R}\right)\right\}_{\substack{1 \leq k \leq \nbNodes \\ 1 \leq r \leq R}}$ with the corresponding replacements for $\nbSamples$, $\boldsymbol{y}$, and $\delta^{(w)}$.
	This yields
	\begin{equation}
		\sum_{r = 1}^R\sum_{k = 1}^\nbNodes \abs{U_{\boldsymbol{\gamma}, (\nbSamples-1)R+1}\!\left(\abs{\node_k}^{1/R},\frac{\xi_k+r}{R}\right)}^2 \leq \mathscr{U}\!\left(\!(\nbSamples-1)R+1, \abs{\boldsymbol{z}}^{1/R}\!, \frac{\boldsymbol{\delta}^{(w)}}{R}\right)\!\norm{\boldsymbol{\gamma}}_2^2, 
		\label{eq: reference to give for explaining equation} 
	\end{equation}
	where $\abs{\boldsymbol{z}}^{1/R} \triangleq \left\{\abs{z_k}^{1/R}\right\}_{k = 1}^\nbNodes \in \C^\nbNodes$. 
	Since $\norm{\boldsymbol{y}}_2 = \norm{\boldsymbol{\gamma}}_2$, it follows from \eqref{eq: equation proof condition number of the Vandermonde matrix with nodes in the unit disk 7} that
	\begin{align}
		R\sum_{k = 1}^\nbNodes \abs{U_{\boldsymbol{y}, \nbSamples}(\abs{\node_k}, \xi_k)}^2 \leq \mathscr{U}\!\left(\!(\nbSamples-1)R+1, \abs{\boldsymbol{z}}^{1/R}\!, \frac{\boldsymbol{\delta}^{(w)}}{R}\right)\!\norm{\boldsymbol{y}}_2^2.
		\label{eq: inequality to divide final steps} 
	\end{align}
	Thanks to \eqref{eq: definition upper bound U}, we have
	\begin{align*}
		&\frac{1}{R}\mathscr{U}\!\!\left(\!(\nbSamples-1)R+1, \abs{\boldsymbol{z}}^{1/R}\!, \frac{\boldsymbol{\delta}^{(w)}}{R}\right) \\[0.3cm]
			&\hspace{1cm}= \max_{1 \leq k \leq \nbNodes} \left\{\frac{1}{\abs{\node_k}^{1/R}}\!\left[\frac{\varphi_{(\nbSamples-1)R+1}\!\left({\abs{\node_k}}^{1/R}\right)}{R} + \frac{42}{\pi\delta_k^{(w)}}\!\left(1 + \abs{\node_k}^{2(\nbSamples-1) + 2/R}\right)\right]\right\}.
	\end{align*}
	Since
	\begin{equation*}
		\frac{\varphi_{(\nbSamples-1)R+1}\!\left({\abs{\node_k}}^{1/R}\right)}{R} = \begin{cases} \displaystyle \frac{\abs{\node_k}^{2(\nbSamples-1) + 2/R} - 1}{2\ln\!\abs{\node_k}}, & \abs{\node_k} < 1 \\ \displaystyle (\nbSamples-1) + 1/R, & \abs{\node_k} = 1, \end{cases}
	\end{equation*}
	we get
	\begin{equation*}
		\lim_{R \rightarrow \infty} \frac{\varphi_{(\nbSamples-1)R+1}\!\left({\abs{\node_k}}^{1/R}\right)}{R} = \varphi_{\nbSamples-1}\!\left(\abs{\node_k}\right)\!,
	\end{equation*}
	and hence
	\begin{equation*}
		\lim_{R \rightarrow \infty} \frac{1}{R}\mathscr{U}\!\!\left(\!(\nbSamples-1)R+1, \abs{\boldsymbol{z}}^{1/R}\!, \frac{\boldsymbol{\delta}^{(w)}}{R}\right) = \mathscr{U}(\nbSamples-1, \abs{\boldsymbol{z}}, \boldsymbol{\delta}^{(w)}).
	\end{equation*}
	Dividing \eqref{eq: inequality to divide final steps} by $R > 0$ and letting $R \rightarrow \infty$ yields
	\begin{equation}
		\norm{(\vandermondeMat{\nbSamples \times \nbNodes})^H\boldsymbol{y}}_2^2 = {\sum_{k = 1}^\nbNodes \abs{U_{\boldsymbol{y}\!, \nbSamples}(\abs{\node_k}, \xi_k)}^2} \leq \mathscr{U}(\nbSamples-1, \abs{\boldsymbol{z}}, \boldsymbol{\delta}^{(w)})\!\norm{\boldsymbol{y}}_2^2.
		\label{eq: equation to refer to} 
	\end{equation}
	Since \eqref{eq: equation to refer to} holds for all $\boldsymbol{y} \in \C^\nbSamples$, this implies
	\begin{equation}
		\sigma_\mathrm{max}^2(\vandermondeMat{\nbSamples \times \nbNodes}) = \sigma_\mathrm{max}^2((\vandermondeMat{\nbSamples \times \nbNodes})^H) \leq \mathscr{U}(\nbSamples-1, \abs{\boldsymbol{z}}, \boldsymbol{\delta}^{(w)}).
		\label{eq: equation proof main result last 2} 
	\end{equation}
	Neither of the upper bounds in \eqref{eq: upper bound sigma max before cohen's dilatation trick} and \eqref{eq: equation proof main result last 2} is consistently smaller than the other one so that in summary
	\begin{equation}
		\sigma_\mathrm{max}^2(\vandermonde{\nbSamples \times \nbNodes}{\node_1, \node_2, \ldots, \node_\nbNodes}) \leq \min\!\Big\{\mathscr{U}(\nbSamples, \abs{\boldsymbol{z}}, \boldsymbol{\delta}^{(w)}), \mathscr{U}(\nbSamples-1, \abs{\boldsymbol{z}}, \boldsymbol{\delta}^{(w)})\Big\}.
		\label{eq: upper bound final result proof} 
	\end{equation}
	This concludes the proof of \eqref{eq: upper bound sigma max general}.
	
	It remains to establish \eqref{eq: upper bound sigma max particular case}. To this end, we first note that by \eqref{eq: reference to give for explaining equation}
	\begin{align*}
		\sum_{r = 1}^R\sum_{k = 1}^\nbNodes& \abs{U_{\boldsymbol{\gamma}, (\nbSamples-1)R+1}\!\left(\abs{\node_k}^{1/R}\!,\frac{\xi_k+r}{R}\right)}^2 \\
			&\hspace{1.5cm}\leq \frac{RA^{-2/\delta^{(w)}}(1 - A^{2(\nbSamples-1+3/(2R) + 1/\delta^{(w)})})}{\delta^{(w)}(A^{-2/\delta^{(w)}} - 1)A^{1/R}}\!\norm{\boldsymbol{\gamma}}_2^2.
	\end{align*}
	Since $\norm{\boldsymbol{y}}_2 = \norm{\boldsymbol{\gamma}}_2$, it follows from the equality between \eqref{eq: equation proof condition number of the Vandermonde matrix with nodes in the unit disk 6.1} and \eqref{eq: equation proof condition number of the Vandermonde matrix with nodes in the unit disk 7} that
	\begin{align}
		\sum_{k = 1}^\nbNodes \abs{U_{\boldsymbol{y}, \nbSamples}(\abs{\node_k}\!, \xi_k)}^2 \leq \frac{A^{-2/\delta^{(w)}}(1 - A^{2(\nbSamples-1+3/(2R) + 1/\delta^{(w)})})}{\delta^{(w)}(A^{-2/\delta^{(w)}} - 1)A^{1/R}}\!\norm{\boldsymbol{y}}_2^2.
		\label{eq: inequality to divide final steps 2} 
	\end{align}
	Letting $R \rightarrow \infty$ in \eqref{eq: inequality to divide final steps 2} yields
	\begin{equation*}
		\norm{(\vandermondeMat{\nbSamples \times \nbNodes})^H\boldsymbol{y}}_2^2 = {\sum_{k = 1}^\nbNodes \abs{U_{\boldsymbol{y}, \nbSamples}(\abs{\node_k}\!, \xi_k)}^2} \leq \frac{A^{-2/\delta^{(w)}}(1 - A^{2(\nbSamples-1+ 1/\delta^{(w)})})}{\delta^{(w)}(A^{-2/\delta^{(w)}} - 1)}\!\norm{\boldsymbol{y}}_2^2.
	\end{equation*}
	Since this holds for all $\boldsymbol{y} \in \C^\nbSamples$, we get
	\begin{equation}
		\sigma_\mathrm{max}^2(\vandermondeMat{\nbSamples \times \nbNodes}) = \sigma_\mathrm{max}^2((\vandermondeMat{\nbSamples \times \nbNodes})^H) \leq \frac{A^{-2/\delta^{(w)}}(1 - A^{2(\nbSamples-1+ 1/\delta^{(w)})})}{\delta^{(w)}(A^{-2/\delta^{(w)}} - 1)}, 
		\label{eq: very last equation  proof cohen unit disk} 
	\end{equation}
	which is \eqref{eq: upper bound sigma max particular case}. This completes the proof.


\section{Proof of Corollary~\ref{cor: main result general upper bound condition number vandermonde matrix unit disk}}
\label{app: proof of corollary main result appendix 3} 

Using \eqref{eq: lower bound final result proof} and \eqref{eq: upper bound final result proof} in \eqref{eq: definition spectral condition number} yields \eqref{eq: main result general}. Condition \eqref{eq: condition main result general} ensures that $\mathscr{L}(\nbSamples, \abs{\boldsymbol{z}}\!, \boldsymbol{\delta}^{(w)}) > 0$, which enables division in \eqref{eq: main result general}.
	The refinement \eqref{eq: main result particular case} is obtained by employing \eqref{eq: lower bound sigma min particular case} and \eqref{eq: very last equation  proof cohen unit disk} in \eqref{eq: definition spectral condition number}, and the condition $\nbSamples > 1/\delta^{(w)}-1/2$ ensures that the lower bound in \eqref{eq: lower bound sigma min particular case} is positive, which, again, enables division in \eqref{eq: main result particular case}.

\section*{Acknowledgments}
The authors would like to thank E.~Riegler for interesting discussions. 




\bibliographystyle{elsarticle-num-names} 
\biboptions{square,sort,comma,numbers}
\bibliography{refs}

\begin{thebibliography}{56}
\providecommand{\natexlab}[1]{#1}
\providecommand{\url}[1]{\texttt{#1}}
\providecommand{\urlprefix}{URL }
\expandafter\ifx\csname urlstyle\endcsname\relax
  \providecommand{\doi}[1]{doi:\discretionary{}{}{}#1}\else
  \providecommand{\doi}[1]{doi:\discretionary{}{}{}\begingroup
  \urlstyle{rm}\url{#1}\endgroup}\fi
\providecommand{\bibinfo}[2]{#2}

\bibitem[{Selberg(1991)}]{Selberg1991}
\bibinfo{author}{A.~Selberg}, \bibinfo{title}{Collected Papers---{V}olume
  {II}}, \bibinfo{publisher}{Springer Verlag}, \bibinfo{address}{Heidelberg},
  \bibinfo{year}{1991}.

\bibitem[{Moitra(2015)}]{Moitra2015}
\bibinfo{author}{A.~Moitra}, \bibinfo{title}{Super-resolution, extremal
  functions and the condition number of {V}andermonde matrices}, in:
  \bibinfo{booktitle}{Proceedings of the 47th Annual ACM Symposium on Theory of
  Computing (STOC)}, \bibinfo{address}{Portland, OR, USA},
  \bibinfo{year}{2015}.

\bibitem[{Baz\'an(2000)}]{Bazan2000}
\bibinfo{author}{F.~S.~V. Baz\'an}, \bibinfo{title}{Conditioning of rectangular
  {V}andermonde matrices with nodes in the unit disk}, \bibinfo{journal}{SIAM
  Journal on Matrix Analysis and Applications}
  \bibinfo{volume}{21}~(\bibinfo{number}{2}) (\bibinfo{year}{2000})
  \bibinfo{pages}{679--693}.

\bibitem[{Bj\"orck and Elfving(1973)}]{Bjorck1973}
\bibinfo{author}{A.~Bj\"orck}, \bibinfo{author}{T.~Elfving},
  \bibinfo{title}{Algorithms for confluent {V}andermonde systems},
  \bibinfo{journal}{Numerische Mathematik}
  \bibinfo{volume}{21}~(\bibinfo{number}{2}) (\bibinfo{year}{1973})
  \bibinfo{pages}{130--137}.

\bibitem[{Heinig and Rost(1995)}]{Heinig1995}
\bibinfo{author}{G.~Heinig}, \bibinfo{author}{K.~Rost},
  \bibinfo{title}{Recursive solution of {C}auchy-{V}andermonde systems of
  equations}, \bibinfo{journal}{Linear Algebra and its Applications}
  \bibinfo{volume}{218} (\bibinfo{year}{1995}) \bibinfo{pages}{59--72}.

\bibitem[{Luther and Rost(2004)}]{Luther2004}
\bibinfo{author}{U.~Luther}, \bibinfo{author}{K.~Rost}, \bibinfo{title}{Matrix
  exponentials and inversion of confluent {V}andermonde matrices},
  \bibinfo{journal}{Electronic Transactions on Numerical Analysis}
  \bibinfo{volume}{18} (\bibinfo{year}{2004}) \bibinfo{pages}{91--100}.

\bibitem[{Pantelous and Karageorgos(2013)}]{Pantelous2013}
\bibinfo{author}{A.~A. Pantelous}, \bibinfo{author}{A.~D. Karageorgos},
  \bibinfo{title}{Generalized inverses of the {V}andermonde matrix:
  {A}pplications in Control Theory}, \bibinfo{journal}{International Journal of
  Control, Automation, and Systems} \bibinfo{volume}{11}~(\bibinfo{number}{5})
  (\bibinfo{year}{2013}) \bibinfo{pages}{1063--1070}.

\bibitem[{Gr\"ochenig(1999)}]{Groechenig1999}
\bibinfo{author}{K.~Gr\"ochenig}, \bibinfo{title}{Irregular sampling,
  {T}oeplitz matrices, and the approximation of entire functions of exponential
  type}, \bibinfo{journal}{Mathematics of Computation}
  \bibinfo{volume}{68}~(\bibinfo{number}{226}) (\bibinfo{year}{1999})
  \bibinfo{pages}{749--765}.

\bibitem[{Vetterli et~al.(2002)Vetterli, Marziliano, and Blu}]{Vetterli2002}
\bibinfo{author}{M.~Vetterli}, \bibinfo{author}{P.~Marziliano},
  \bibinfo{author}{T.~Blu}, \bibinfo{title}{Sampling signals with finite rate
  of innovation}, \bibinfo{journal}{IEEE Transactions on Signal Processing}
  \bibinfo{volume}{50}~(\bibinfo{number}{6}) (\bibinfo{year}{2002})
  \bibinfo{pages}{1417--1428}.

\bibitem[{Feng and Bresler(1996)}]{Feng1996}
\bibinfo{author}{P.~Feng}, \bibinfo{author}{Y.~Bresler},
  \bibinfo{title}{Spectrum-blind minimum-rate sampling and reconstruction of
  multiband signals}, in: \bibinfo{booktitle}{Proceedings of IEEE International
  Conference on Acoustics, Speech, and Signal Processing (ICASSP)},
  vol.~\bibinfo{volume}{3}, \bibinfo{address}{Atlanta, GA, USA},
  \bibinfo{pages}{1688--1691}, \bibinfo{year}{1996}.

\bibitem[{Mishali and Eldar(2009)}]{Mishali2009}
\bibinfo{author}{M.~Mishali}, \bibinfo{author}{Y.~C. Eldar},
  \bibinfo{title}{Blind multiband signal reconstruction: {C}ompressed sensing
  for analog signals}, \bibinfo{journal}{IEEE Transactions on Signal
  Processing} \bibinfo{volume}{57}~(\bibinfo{number}{3}) (\bibinfo{year}{2009})
  \bibinfo{pages}{993--1009}.

\bibitem[{Schmidt(1986)}]{Schmidt1986}
\bibinfo{author}{R.~O. Schmidt}, \bibinfo{title}{Multiple emitter location and
  signal parameter estimation}, \bibinfo{journal}{IEEE Transactions on Antennas
  and Propagation} \bibinfo{volume}{34}~(\bibinfo{number}{3})
  (\bibinfo{year}{1986}) \bibinfo{pages}{276--280}.

\bibitem[{Roy et~al.(1986)Roy, Paulraj, and Kailath}]{Roy1986}
\bibinfo{author}{R.~Roy}, \bibinfo{author}{A.~Paulraj},
  \bibinfo{author}{T.~Kailath}, \bibinfo{title}{{ESPRIT} -- {A} subspace
  rotation approach to estimation of parameters of cisoids in noise},
  \bibinfo{journal}{IEEE Transactions on Acoustics, Speech, and Signal
  Processing} \bibinfo{volume}{34}~(\bibinfo{number}{5}) (\bibinfo{year}{1986})
  \bibinfo{pages}{1340--1342}.

\bibitem[{Hua and Sarkar(1990)}]{Hua1990}
\bibinfo{author}{Y.~Hua}, \bibinfo{author}{T.~K. Sarkar},
  \bibinfo{title}{Matrix pencil method for estimating parameters of
  exponentially damped/undamped sinusoids in noise}, \bibinfo{journal}{IEEE
  Transactions on Acoustics, Speech, and Signal Processing}
  \bibinfo{volume}{38}~(\bibinfo{number}{5}) (\bibinfo{year}{1990})
  \bibinfo{pages}{814--824}.

\bibitem[{Liao and Fannjiang(2016)}]{Liao2014}
\bibinfo{author}{W.~Liao}, \bibinfo{author}{A.~Fannjiang},
  \bibinfo{title}{{MUSIC} for single-snapshot spectral estimation: {S}tability
  and super-resolution}, \bibinfo{journal}{Applied and Computational Harmonic
  Analysis} \bibinfo{volume}{40}~(\bibinfo{number}{1}) (\bibinfo{year}{2016})
  \bibinfo{pages}{33--67}.

\bibitem[{Potts and Tasche(2017)}]{Potts2017}
\bibinfo{author}{D.~Potts}, \bibinfo{author}{M.~Tasche}, \bibinfo{title}{Error
  estimates for the ESPRIT algorithm}, \bibinfo{journal}{Operator Theory:
  Advances and Applications} \bibinfo{volume}{259} (\bibinfo{year}{2017})
  \bibinfo{pages}{621--648}.

\bibitem[{Tang et~al.(2013)Tang, Bhaskar, and Recht}]{Tang2013}
\bibinfo{author}{G.~Tang}, \bibinfo{author}{B.~N. Bhaskar},
  \bibinfo{author}{B.~Recht}, \bibinfo{title}{Near minimax line spectral
  estimation}, \bibinfo{journal}{IEEE Transactions on Information Theory}
  \bibinfo{volume}{61}~(\bibinfo{number}{23}) (\bibinfo{year}{2013})
  \bibinfo{pages}{5987--5999}.

\bibitem[{Potts and Steidl(2003)}]{Potts2006}
\bibinfo{author}{D.~Potts}, \bibinfo{author}{G.~Steidl}, \bibinfo{title}{Fast
  summation at equispaced knots by {NFFT}}, \bibinfo{journal}{SIAM Journal on
  Scientific Computing} \bibinfo{volume}{24}~(\bibinfo{number}{6})
  (\bibinfo{year}{2003}) \bibinfo{pages}{2013--2037}.

\bibitem[{Kunis(2006)}]{Kunis2006}
\bibinfo{author}{S.~Kunis}, \bibinfo{title}{Nonequispaced {FFT}: Generalisation
  and inversion}, Ph.D. thesis, \bibinfo{school}{University of L\"ubeck},
  \bibinfo{year}{2006}.

\bibitem[{Gautschi and Inglese(1988)}]{Gautschi1988}
\bibinfo{author}{W.~Gautschi}, \bibinfo{author}{G.~Inglese},
  \bibinfo{title}{Lower bounds for the condition number of {V}andermonde
  matrices}, \bibinfo{journal}{Numerische Mathematik}
  \bibinfo{volume}{52}~(\bibinfo{number}{3}) (\bibinfo{year}{1988})
  \bibinfo{pages}{241--250}.

\bibitem[{Beckermann(2000)}]{Beckermann2000}
\bibinfo{author}{B.~Beckermann}, \bibinfo{title}{The condition number of real
  {V}andermonde, {K}rylov and positive definite {H}ankel matrices},
  \bibinfo{journal}{Numerische Mathematik}
  \bibinfo{volume}{85}~(\bibinfo{number}{4}) (\bibinfo{year}{2000})
  \bibinfo{pages}{553--577}.

\bibitem[{C\'ordova et~al.(1990)C\'ordova, Gautschi, and
  Ruscheweyh}]{Gautschi1990}
\bibinfo{author}{A.~C\'ordova}, \bibinfo{author}{W.~Gautschi},
  \bibinfo{author}{S.~Ruscheweyh}, \bibinfo{title}{Vandermonde matrices on the
  circle: Spectral properties and conditioning}, \bibinfo{journal}{Numerische
  Mathematik} \bibinfo{volume}{57}~(\bibinfo{number}{1}) (\bibinfo{year}{1990})
  \bibinfo{pages}{577--591}.

\bibitem[{Berman and Feuer(2007)}]{Berman2007}
\bibinfo{author}{L.~Berman}, \bibinfo{author}{A.~Feuer}, \bibinfo{title}{On
  perfect conditioning of {V}andermonde matrices on the unit circle},
  \bibinfo{journal}{Electronic Journal of Linear Algebra} \bibinfo{volume}{16}
  (\bibinfo{year}{2007}) \bibinfo{pages}{157--161}.

\bibitem[{Ferreira(1999)}]{Ferreira99}
\bibinfo{author}{P.~J. S.~G. Ferreira}, \bibinfo{title}{Superresolution, the
  recovery of missing samples, and {V}andermonde matrices on the unit circle},
  in: \bibinfo{booktitle}{Proceedings of the 1999 Workshop on Sampling Theory
  and Applications}, \bibinfo{address}{Loen, Norway},
  \bibinfo{pages}{216--220}, \bibinfo{year}{1999}.

\bibitem[{Moitra(2014)}]{Moitra2014}
\bibinfo{author}{A.~Moitra}, \bibinfo{title}{The threshold for super-resolution
  via extremal functions}, \bibinfo{journal}{submitted,}
  \urlprefix\url{http://arxiv.org/abs/1408.1681}.

\bibitem[{Linnik(1941)}]{Linnik1941}
\bibinfo{author}{Y.~V. Linnik}, \bibinfo{title}{The large sieve},
  \bibinfo{journal}{Doklady Akademii Nauk SSSR (Proceedings of the USSR Academy
  of Sciences)} \bibinfo{volume}{30}~(\bibinfo{number}{4})
  (\bibinfo{year}{1941}) \bibinfo{pages}{291--294}, \bibinfo{note}{(in
  Russian)}.

\bibitem[{R\'enyi(1951)}]{Renyi1951}
\bibinfo{author}{A.~R\'enyi}, \bibinfo{title}{On the large sieve of {Ju V.
  Linnik}}, \bibinfo{journal}{Compositio Mathematica} \bibinfo{volume}{8}
  (\bibinfo{year}{1951}) \bibinfo{pages}{68--75}.

\bibitem[{Roth(1964)}]{Roth1964}
\bibinfo{author}{K.~F. Roth}, \bibinfo{title}{Remark concerning integer
  sequences}, \bibinfo{journal}{Acta Mathematica}
  \bibinfo{volume}{9}~(\bibinfo{number}{3}) (\bibinfo{year}{1964})
  \bibinfo{pages}{257--260}.

\bibitem[{Roth(1965)}]{Roth1965}
\bibinfo{author}{K.~F. Roth}, \bibinfo{title}{On the large sieve of {L}innik
  and {R}\'enyi}, \bibinfo{journal}{Mathematika} \bibinfo{volume}{12}
  (\bibinfo{year}{1965}) \bibinfo{pages}{1--9}.

\bibitem[{Bombieri(1965)}]{Bombieri1965}
\bibinfo{author}{E.~Bombieri}, \bibinfo{title}{On the large sieve},
  \bibinfo{journal}{Mathematika} \bibinfo{volume}{12}~(\bibinfo{number}{2})
  (\bibinfo{year}{1965}) \bibinfo{pages}{201--225}.

\bibitem[{Montgomery(1968)}]{Montgomery1968}
\bibinfo{author}{H.~L. Montgomery}, \bibinfo{title}{A note on the large sieve},
  \bibinfo{journal}{Journal of the London Mathematical Society}
  \bibinfo{volume}{43} (\bibinfo{year}{1968}) \bibinfo{pages}{93--98}.

\bibitem[{Montgomery and Vaughan(1973)}]{Montgomery1973}
\bibinfo{author}{H.~L. Montgomery}, \bibinfo{author}{R.~C. Vaughan},
  \bibinfo{title}{The large sieve}, \bibinfo{journal}{Mathematika}
  \bibinfo{volume}{20}~(\bibinfo{number}{2}) (\bibinfo{year}{1973})
  \bibinfo{pages}{119--134}.

\bibitem[{Montgomery(1978)}]{Montgomery1978}
\bibinfo{author}{H.~L. Montgomery}, \bibinfo{title}{The analytic principle of
  the large sieve}, \bibinfo{journal}{Bulletin of the American Mathematical
  Society} \bibinfo{volume}{84}~(\bibinfo{number}{4}) (\bibinfo{year}{1978})
  \bibinfo{pages}{547--567}.

\bibitem[{Golub and {Van Loan}(1996)}]{Golub1983}
\bibinfo{author}{G.~H. Golub}, \bibinfo{author}{C.~F. {Van Loan}},
  \bibinfo{title}{Matrix computations}, \bibinfo{publisher}{The Johns Hopkins
  University Press}, \bibinfo{address}{Baltimore, MD, USA},
  \bibinfo{year}{1996}.

\bibitem[{Gautschi(1962)}]{Gautschi1962}
\bibinfo{author}{W.~Gautschi}, \bibinfo{title}{On inverses of {V}andermonde and
  confluent {V}andermonde matrices}, \bibinfo{journal}{Numerische Mathematik}
  \bibinfo{volume}{4}~(\bibinfo{number}{1}) (\bibinfo{year}{1962})
  \bibinfo{pages}{117--123}.

\bibitem[{Faure et~al.(2015)Faure, Kritzer, and Pillichshammer}]{Faure2015}
\bibinfo{author}{H.~Faure}, \bibinfo{author}{P.~Kritzer},
  \bibinfo{author}{F.~Pillichshammer}, \bibinfo{title}{From van der {C}orput to
  modern constructions of sequences for quasi-{M}onte {C}arlo rules},
  \bibinfo{journal}{Indagationes Mathematicae}
  \bibinfo{volume}{26}~(\bibinfo{number}{5}) (\bibinfo{year}{2015})
  \bibinfo{pages}{760--822}.

\bibitem[{Nagell(1951)}]{Nagell1951}
\bibinfo{author}{T.~Nagell}, \bibinfo{title}{Introduction to number theory},
  \bibinfo{publisher}{John Wiley \& Sons}, \bibinfo{address}{New York, NY,
  USA}, \bibinfo{year}{1951}.

\bibitem[{Horn and Johnson(1985)}]{Horn1985}
\bibinfo{author}{R.~A. Horn}, \bibinfo{author}{C.~R. Johnson},
  \bibinfo{title}{Matrix analysis}, \bibinfo{publisher}{Cambridge University
  Press}, \bibinfo{address}{New York, NY, USA}, \bibinfo{year}{1985}.

\bibitem[{Negreanu and Zuazua(2006)}]{Negreanu2006}
\bibinfo{author}{M.~Negreanu}, \bibinfo{author}{E.~Zuazua},
  \bibinfo{title}{Discrete {I}ngham inequalities and applications},
  \bibinfo{journal}{SIAM Journal on Numerical Analysis}
  \bibinfo{volume}{44}~(\bibinfo{number}{1}) (\bibinfo{year}{2006})
  \bibinfo{pages}{412--448}.

\bibitem[{Ingham(1936)}]{Ingham1936}
\bibinfo{author}{A.~E. Ingham}, \bibinfo{title}{Some trigonometrical
  inequalities with applications to the theory of series},
  \bibinfo{journal}{Mathematische Zeitschrift}
  \bibinfo{volume}{41}~(\bibinfo{number}{1}) (\bibinfo{year}{1936})
  \bibinfo{pages}{367--379}.

\bibitem[{Bombieri(1974)}]{Bombiri1974}
\bibinfo{author}{E.~Bombieri}, \bibinfo{title}{Le grand crible dans la
  th\'eorie analytique des nombres}, \bibinfo{publisher}{Soci\'et\'e
  Math\'ematique de {France}}, \bibinfo{year}{1974}.

\bibitem[{Montgomery(2001)}]{Montgomery2001}
\bibinfo{author}{H.~L. Montgomery}, \bibinfo{title}{Twentieth century harmonic
  analysis -- {A} celebration}, in: \bibinfo{booktitle}{Harmonic Analysis as
  found in Analytic Number Theory}, vol.~\bibinfo{volume}{33} of
  \emph{\bibinfo{series}{NATO Science Series}}, \bibinfo{publisher}{Kluwer
  Academic Publishers}, \bibinfo{pages}{271--293}, \bibinfo{year}{2001}.

\bibitem[{Davenport and Halberstam(1966)}]{Davenport1966}
\bibinfo{author}{H.~Davenport}, \bibinfo{author}{H.~Halberstam},
  \bibinfo{title}{The values of a trigonometric polynomial at well spaced
  points}, \bibinfo{journal}{Mathematika}
  \bibinfo{volume}{13}~(\bibinfo{number}{1}) (\bibinfo{year}{1966})
  \bibinfo{pages}{91--96}.

\bibitem[{Gallagher(1967)}]{Gallagher1967}
\bibinfo{author}{P.~X. Gallagher}, \bibinfo{title}{The large sieve},
  \bibinfo{journal}{Mathematika} \bibinfo{volume}{14}~(\bibinfo{number}{1})
  (\bibinfo{year}{1967}) \bibinfo{pages}{14--20}.

\bibitem[{Liu(1969)}]{Liu1969}
\bibinfo{author}{M.-C. Liu}, \bibinfo{title}{On a result of {D}avenport and
  {H}alberstam}, \bibinfo{journal}{Journal of Number Theory}
  \bibinfo{volume}{1}~(\bibinfo{number}{4}) (\bibinfo{year}{1969})
  \bibinfo{pages}{385--389}.

\bibitem[{Bombieri and Davenport(1968)}]{Bombieri1968}
\bibinfo{author}{E.~Bombieri}, \bibinfo{author}{H.~Davenport},
  \bibinfo{title}{On the large sieve method}, \bibinfo{journal}{Abhandlungen
  aus Zahlentheorie und Analysis zur Erinnerung an Edmund Landau}
  \bibinfo{volume}{14} (\bibinfo{year}{1968}) \bibinfo{pages}{14--20}.

\bibitem[{Bombieri and Davenport(1969)}]{Bombieri1969}
\bibinfo{author}{E.~Bombieri}, \bibinfo{author}{H.~Davenport},
  \bibinfo{title}{Some inequalities involving trigonometric polynomials},
  \bibinfo{journal}{Annali della Scuola Normale Superiore di Pisa}
  \bibinfo{volume}{23}~(\bibinfo{number}{3}) (\bibinfo{year}{1969})
  \bibinfo{pages}{223--241}.

\bibitem[{Boas(1942)}]{Boas1942}
\bibinfo{author}{R.~P. Boas}, \bibinfo{title}{Entire functions of exponential
  type}, \bibinfo{journal}{Bulletin of the American Mathematical Society}
  \bibinfo{volume}{48}~(\bibinfo{number}{12}) (\bibinfo{year}{1942})
  \bibinfo{pages}{839--849}.

\bibitem[{Beurling(1989)}]{Beurling1989}
\bibinfo{author}{A.~Beurling}, \bibinfo{title}{On functions with a spectral
  gap}, in: \bibinfo{editor}{L.~Carleson}, \bibinfo{editor}{P.~Malliavin},
  \bibinfo{editor}{J.~Neuberger}, \bibinfo{editor}{J.~Werner} (Eds.),
  \bibinfo{booktitle}{The Collected Works of Arne Beurling: Volume 2, Harmonic
  Analysis}, \bibinfo{publisher}{Birkh\"auser}, \bibinfo{address}{Boston, MA,
  USA}, \bibinfo{pages}{370--372}, \bibinfo{year}{1989}.

\bibitem[{Young(1990)}]{Young1990}
\bibinfo{author}{R.~M. Young}, \bibinfo{title}{An introduction to nonharmonic
  {F}ourier series}, in: \bibinfo{booktitle}{Chap.~2: {E}ntire functions of
  exponential type}, \bibinfo{publisher}{Academic Press}, \bibinfo{address}{New
  York, NY, USA}, \bibinfo{year}{1990}.

\bibitem[{Montgomery and Vaughan(1974)}]{Montgomery1974}
\bibinfo{author}{H.~L. Montgomery}, \bibinfo{author}{R.~C. Vaughan},
  \bibinfo{title}{Hilbert's inequality}, \bibinfo{journal}{Journal of the
  London Mathematical Society} \bibinfo{volume}{8}~(\bibinfo{number}{2})
  (\bibinfo{year}{1974}) \bibinfo{pages}{73--82}.

\bibitem[{Schur(1911)}]{Schur1911}
\bibinfo{author}{I.~Schur}, \bibinfo{title}{Bemerkungen zur {T}heorie der
  beschr\"ankten {B}ilinearformen mit unendlich vielen {V}er\"anderlichen},
  \bibinfo{journal}{Journal f\"ur die reine und angewandte Mathematik}
  \bibinfo{volume}{140} (\bibinfo{year}{1911}) \bibinfo{pages}{1--28}.

\bibitem[{Graham and Vaaler(1981)}]{Graham1981}
\bibinfo{author}{S.~W. Graham}, \bibinfo{author}{J.~D. Vaaler},
  \bibinfo{title}{A class of extremal functions for the {F}ourier transform},
  \bibinfo{journal}{Transactions of the American Mathematical Society}
  \bibinfo{volume}{265}~(\bibinfo{number}{1}) (\bibinfo{year}{1981})
  \bibinfo{pages}{283--302}.

\bibitem[{Montgomery and Vaaler(1999)}]{Montgomery1998}
\bibinfo{author}{H.~L. Montgomery}, \bibinfo{author}{J.~D. Vaaler},
  \bibinfo{title}{A further generalization of {H}ilbert's inequality},
  \bibinfo{journal}{Mathematica} \bibinfo{volume}{46}~(\bibinfo{number}{1})
  (\bibinfo{year}{1999}) \bibinfo{pages}{35--39}.

\bibitem[{Jameson(2006)}]{Jameson2006}
\bibinfo{author}{G.~Jameson}, \bibinfo{title}{Notes on the large sieve},
  \bibinfo{type}{Tech. Rep.}, \bibinfo{institution}{Department of Mathematics
  and Statistics, Lancaster University},
  \urlprefix\url{http://www.maths.lancs.ac.uk/~jameson/lsv.pdf},
  \bibinfo{year}{2006}.

\bibitem[{Zygmund(2002)}]{Zygmund2002}
\bibinfo{author}{A.~Zygmund}, \bibinfo{title}{Trigonometric series -- Third
  edition}, \bibinfo{publisher}{Cambridge University Press},
  \bibinfo{address}{London, United Kingdom}, \bibinfo{year}{2002}.

\end{thebibliography}

\end{document}